\newcommand{\ghostprime}{\makebox[0pt][l]{$\smash{'}$}}
\newcommand{\DMCES}{\operatorname{DMCES}}
\newcommand{\bbDMCES}{\mathbb{DMCES}}
\newcommand{\aDMCES}{\operatorname{aDMCES}}
\newcommand{\MCIS}{\operatorname{MCIS}}
\newcommand{\bbMCIS}{\mathbb{MCIS}}
\tikzset{
    position/.style args={#1:#2 from #3}{
        at=(#3.#1), anchor=#1+180, shift=(#1:#2)
    }
}
\title{A poset metric from the directed maximum common edge subgraph\thanks{\funding{The research of RN, PCK, BC and TG was partially supported by  NSF TRIPODS+X grant DMS-1839299,  DARPA FA8750-17-C-0054 and NIH 5R01GM126555-01.}}}
\author{Robert R. Nerem\thanks{Institute for Quantum Science and Technology, University of Calgary, Alberta T2N 1N4, Canada   (\email{riley.nerem@gmail.com}) }  \and Peter Crawford-Kahrl
\thanks{Department of Mathematical Sciences,
Montana State University,
Bozeman, Montana, USA \newline (\email{peter.crawford.kahrl@gmail.com}, \email{breschine.cummins@montana.edu}, \email{gedeon@math.montana.edu}) }\and  Bree Cummins\footnotemark[3]  \and Tom{\'a}{\v s} Gedeon\footnotemark[3]}
\begin{document}
	\maketitle

\begin{abstract}
We study the directed maximum common edge subgraph problem (\textbf{DMCES}) for the class of directed graphs that are finite, weakly connected, oriented, and simple. We  use \textbf{DMCES} to define  a  metric on partially ordered sets that can be represented as weakly connected directed acyclic graphs.  While most existing metrics assume that the underlying sets of the partial order are identical, and only the relationships between elements can differ, the metric defined here allows the partially ordered sets to be different. The proof that there is a metric based on \textbf{DMCES} involves the extension of the concept of line digraphs. Although this extension can be used to compute the metric by a reduction to the maximum clique problem, it is computationally feasible only for sparse graphs. We provide an alternative techniques for computing the metric for  directed graphs that have the additional property of being transitively closed.
\end{abstract}

\begin{keywords}
Directed acyclic graphs, graph distance, partially ordered sets, maximum common subgraph.
\end{keywords}


\section{Introduction}
Many problems in science today use the language of directed graphs~\cite{BB05} to capture  relationships between agents, or features,  of a system. In parallel, partial orders may be  used to encode transitive relationships between a finite number of objects~\cite{annoni,BrugPatil2011}.  Since a partial order can be represented as a directed acyclic graph, it is natural to explore if the well-known measures of similarity between directed graphs can give rise to measures of similarity between partial orders that are meaningful with respect to the properties of a partial order. 

There have been several approaches to measuring similarity between directed graphs. Some are based on an edit distance which is given by the minimum number of elementary  operations that are needed to transform one graph to another~\cite{messmer,Bunke}, others  are based on graph isomorphism identification~\cite{jung,cornell70}, or derived from the maximum common subgraph~\cite{MCISmetric,fernandez,RASCAL,Bahiense}.  The majority of these methods assume that the graphs have the same number of nodes; for an exception that considers weighted and directed graphs see~\cite{xu}.

In this paper we study a graph metric based on the directed maximum common edge subgraph  problem (\textbf{DMCES}) for directed graphs (digraphs).  This metric is naturally extended to a  metric on partially ordered sets (posets), denoted $(P,\leq)$. A (non-strict) partial order is a binary relation $\leq$ over a set P that is reflexive, antisymmetric, and transitive. 
A partially ordered set is often represented as a directed acyclic graph, where the $\leq$ relation translates into directed edges between nodes corresponding to the elements of $P$. 
For two partially ordered sets $(P,\leq)$ and $(P',\leq')$, most metrics assume that the underlying sets of objects are identical, $P=P'$, and only the relationships between elements can differ~\cite{Cook,Fattore,Zelinka}. Our metric, defined using  \textbf{DMCES}, measures the distance between posets where the sizes of the underlying sets can be different, $|P| \neq |P'|$.

 In addition to comparing posets with different numbers of elements, we will compare partially ordered sets that  are labeled, meaning there is a function $\ell: P \to \mathscr L$ which maps elements of the poset to elements of a set of labels $\mathscr L$. The notation $(P,\leq,\ell)$ will refer to a labeled poset with labeling function $\ell$. 
 The addition of node labels is useful since labels can capture additional information relevant to the set $P$.

The transformation of a partial order to a directed graph is straightforward.
\begin{definition}
The \emph{digraph of a partial order} $(P,\leq)$ is a directed graph $\mathscr D(P,\leq)$ with vertices $P$ and a directed edge from node $v_1 \in P$ to $v_2 \in P$ if and only if $v_1 \leq v_2$ and $v_1 \neq v_2$.
 The  digraph of a labeled poset $\mathscr D(P,\leq,\ell)$ is a node-labeled graph which inherits the labeling function $\ell$ from the poset.
\end{definition}
Two consequences of this definition are that (1) $\mathscr D(P,\leq)$ is acyclic, which arises from the antisymmetry of the partial order along with the requirement that $v_1 \neq v_2$, and (2) $\mathscr D(P,\leq)$ is transitively closed. We remark that the well-known Hasse diagram of a partial order is the transitive reduction of $\mathscr D(P,\leq)$.

In this paper we develop methods to compute for two directed, labeled graphs $G$ and $G'$,  the size of  the directed maximum common edge subgraph via a function that we denote 
\[ \DMCES(G,G'). \] 
While we postpone the precise definition of $\DMCES(G,G')$ to  Definition~\ref{def:DMCES}, we use DMCES to define a metric on the digraphs of partial orders, which we use interchangeably as a distance between posets.

\begin{definition}\label{def:metric}
	Let $(P,\leq,\ell)$ and $(P',\leq',\ell')$ be two partial orders. Denote
	\[\mbox{$G = \mathscr D (P,\leq,\ell)$ and $G' = \mathscr D (P',\leq',\ell')$}\] with edge sets $\mathcal D$ and $\mathcal D'$ respectively. The distance between $(P,\leq,\ell)$ and $(P',\leq',\ell')$ is 
	\begin{equation}
		d\big((P,\leq,\ell) , (P', \leq',\ell') \big) := d_{e}(G,G') ,
	\end{equation}
	where 
	\begin{equation}\label{eq:DMCES_metric}
	d_{e}(G,G') :=	1  - \frac{ \DMCES (G,G')}{\max(|\mathcal D|,|\mathcal D'|)},
	\end{equation}
\end{definition}
where the vertical bar notation denotes the size of a set. Our motivation for using this metric is that it is the proportion of unmatched relationships $p \leq q$ between two posets $(P,\leq,\ell)$ and $(P',\leq',\ell')$. This is a natural idea of distance in the sense of partial orders.
Although the definition of $d_{e}$ is motivated by an application to posets, it is an interesting graph metric in its own right. 

Poset metrics different than Definition~\ref{def:metric}, based on the maximum common node subgraph problem (\textbf{MCIS})~\cite{Zelinka} and the maximum common edge subgraph problem (\textbf{MCES})~\cite{Haviar}, have been studied previously for posets without labels. 
These papers do not use the language of graph theory and instead focus only on posets.  Furthermore, the emphasis is on studying the property of the metric on the set of all partially ordered sets on $n$ elements and not on developing techniques to evaluate the distance between two partial orders. The \textbf{MCES} problem  has been studied for undirected graphs~\cite{RASCAL,Bahiense} and the maximum common node subgraph problem has been studied for directed graphs~\cite{Bunke}. A heuristic algorithm for \textbf{DMCES} is given in~\cite{Larsen} which is to our knowledge the only previously investigation of \textbf{DMCES} as a computational problem.

The main objectives of this paper are to prove that~\eqref{eq:DMCES_metric} satisfies the properties  of a metric (reflexivity, symmetry, and the triangle inequality) and to provide techniques for computing it. 
 To prove that $d_e$ is a  metric, an object called the \textit{extended line digraph} is introduced, which is related to the well-known line (di)graph of a graph. The extended line digraph is used to demonstrate both that~\eqref{eq:DMCES_metric} is a metric, and that \textbf{DMCES} can be reduced to the maximum clique problem as has been done for undirected graphs in \cite{RASCAL}. An important step in this process is the formulation and proof of the Isomorphism Theorem~\ref{thm:iso_iff} for a subset of labeled digraphs that is analogous to Whitney's isomorphism theorem for undirected graphs~\cite{whitney}.
 
 Algorithms based on the extended line digraph are inefficient except for sparse graphs, and therefore we introduce special methods for dense graphs. In particular, we consider transitive closures of graphs, which occur when  $\textbf{DMCES}$ is used to compute the poset metric of Definition~\ref{def:metric}. For these graphs we determine a number of properties the directed maximum common edge subgraph must satisfy, which greatly reduces the space of subgraphs over which to search. An algorithm leveraging these results is described in Appendix~\ref{sec:alg}.



\section{Preliminaries}

In this section, we establish graph theory definitions that will be used throughout the paper. We define labeled graphs and the idea of isomorphism on labeled graphs, as well as discussing important assumptions on graph properties that are used periodically in proofs.

	\begin{definition}\label{def:typesofgraphs}

		A \emph{labeled directed graph} or \emph{labeled digraph}, $G = (V,\mathcal D, \ell_v, \ell_e)$, is a graph with nodes $V$, edges $\mathcal D$, and label functions $\ell_v$ and $\ell_e$. The \emph{directed edges} $\mathcal D \subseteq V \times V$ are a set of ordered pairs of nodes. The notation $(v_1,v_2)$ will be used for  a directed edge. The \emph{node-labeling function} $\ell_v$ maps nodes $V$ onto a label set. The \emph{edge-labeling function} $\ell_e$ maps edges $\mathcal D$ onto a (possibly different) label set. 
		A \emph{labeled undirected graph} $G= (V,\mathcal E, \ell_v, \ell_e)$ is similar, except that  the \emph{undirected edges} $\mathcal E \subseteq V \times V$ form a set of unordered pairs of nodes. The notation $\{v_1,v_2\}$ will be used for  an undirected edge. 
		When a graph can be either directed or undirected, the notation $G = (V,\mathcal F, \ell_v, \ell_e)$ with edge notation $\langle u,v \rangle \in \mathcal F$ will be employed.
		
	   \end{definition}
	   
	   When labeling functions are absent, they will be replaced by the empty set notation. For example, $G=(V,\mathcal E, \ell_v, \emptyset)$ is a node-labeled undirected graph. In this manuscript, we will be concerned only with labeled, directed graphs and unlabeled, undirected graphs. For consistent notation, unlabeled, undirected graphs will always employ empty set notation, $G=(V,\mathcal E, \emptyset, \emptyset)$. 
		
To compute the metric in Definition~\ref{def:metric}, we must find the size of the largest common subgraph of two digraphs. This requires both the notions of subgraph and graph isomorphism for labeled graphs.

	\begin{definition} \label{def:subgraph}
		Let $G = (V, \mathcal F, \ell_v,\ell_e)$ be a (possibly labeled) graph.
		\begin{enumerate}
			\item Let $U \subseteq V$ and let $W \subseteq \mathcal F$ be subsets such that $\langle u,v \rangle \in W$ implies $u,v \in U$. Then $H = (U,W,\ell_v|_U,\ell_e|_W)$ is a \emph{subgraph of G}.
		\item Let $W \subseteq \mathcal F$.  The \emph{$W$ edge-induced subgraph of $G$} is a graph $H = (U,W,\ell_v|_U,\ell_e|_W)$ with $U \subseteq V$
		such that 
		$$ U = \{v_1 \in V\mid \langle v_1,v_2 \rangle \in W \text{ or } \langle v_2,v_1 \rangle \in W\}.$$
		\item Let $U \subseteq V$. The \emph{$U$ node-induced subgraph} is a graph $H =(U,W,\ell_v|_U,\ell_e|_W)$
			with $W\subseteq \mathcal F$ such that
			\begin{align*}
			    W &= \{\langle v_1,v_2\rangle \in \mathcal F \mid v_1, v_2 \in U \}.
			\end{align*}
		\end{enumerate}
	\end{definition}

	\begin{definition}\label{def:mixed_iso}
	    Let $G= (V,\mathcal F, \ell_v,\ell_e)$ and $G'= (V',\mathcal F', \ell_v',\ell_e')$ be labeled graphs that are either both directed or both undirected. Let $U \subseteq V$.
		We say a map $\phi: U \to V'$ \emph{respects labels} if 
		\begin{enumerate}
		    \item $\ell_v(v) = \ell_v'(\phi(v))$ for all  $v \in U$, and
		    \item $\ell_e( \langle u,v \rangle) = \ell_e'( \langle\phi(u),\phi(v)\rangle)$ for all $u,v \in U$ whenever $\langle u,v \rangle \in \mathcal F$ and $ \langle\phi(u),\phi(v) \rangle \in \mathcal F'$.
		\end{enumerate}   
        	The map $\phi: V \to V'$ is an \emph{isomorphism between $G$ and $G'$} if and only if $\phi$ is a label-respecting bijection such that $\langle v_1,v_2\rangle \in \mathcal F$ if and only if 
		$\langle \phi(v_1), \phi(v_2)\rangle \in \mathcal F'$. If an isomorphism exists between $G$ and $G'$, we say they are isomorphic and write $G \cong G'$.
\end{definition}

	       Many of our proofs require assumptions on graph properties, most notably the Isomorphism Theorem~\ref{thm:iso_iff}. The assumptions we will use at various times are the following.
	       \begin{itemize}[align=left,labelsep=1ex]
	           \item[\textbf{(F)}] \textbf{Finite.}
	           \item[\textbf{(W)}] \textbf{Weakly connected.} There is an undirected path between any two nodes in a directed graph (compare to strongly connected, where the path must be directed). In an undirected graph, weak connectivity is equivalent to connectivity.
	           \item[\textbf{(S)}] \textbf{Simple.} There are no self-loops and no parallel edges from the same source to the same target. For undirected graphs, this means that there are no multi-edges. For directed graphs, this means that two edges can only appear between the same two nodes if they point in opposite directions.
	           \item[\textbf{(O)}] \textbf{Oriented.} There are no 2-cycles in a directed graph; i.e. no pairs of directed edges going in opposite directions between the same nodes.
	       \end{itemize}
           
           All graphs in this manuscript fulfill \textbf{(F)}. Many graphs will additionally be assumed to fulfill \textbf{(W)}, \textbf{(S)}, and \textbf{(O)}. This includes all digraphs of labeled partial orders. Notice that these digraphs naturally fulfill \textbf{(S)} and \textbf{(O)}, but many will not fulfill \textbf{(W)}. Therefore we restrict ourselves to the class of labeled partial orders that produce weakly connected digraphs.


\section{\textbf{DMCES}}

The directed maximum common edge subgraph (\textbf{DMCES}) optimization problem is equivalent to the problem of locating a (not necessarily unique) largest common edge-induced subgraph between two node-labeled digraphs. In this section, we introduce two definitions for this problem, one more traditional than the other, and show that they are equivalent. Both definitions are used at various points in the manuscript. The first definition given below is modified from the definition given in \cite{RASCAL} for the  maximum common edge subgraph (\textbf{MCES}) problem for undirected graphs.

\begin{definition} 
	\label{def:DMCES}
	Let $G = (V, \mathcal D, \ell_v, \emptyset)$ and $G' = (V',\mathcal D',\ell_v', \emptyset)$ be node-labeled digraphs. 
	Define   $$\epsilon:V\times V  \to \{0,1\} \text{ and }\epsilon':V'\times V' \to \{0,1\}$$  by
	\[ \epsilon(v_1,v_2) := \begin{cases} 1 & \text{if }(v_1,v_2) \in \mathcal D \\
	0 & \text{otherwise} \end{cases} \quad \text{and} \quad \epsilon'(v'_1,v'_2) := \begin{cases} 1 & \text{if }(v'_1,v'_2) \in \mathcal D'\\
	0 & \text{otherwise} \end{cases} \quad .
	\] 
	Let $U \subseteq V$ and $\phi: U \to V'$ be an injection that respects labels. We refer to the ordered pair $(U, \phi)$ as a \emph{feasible solution}, and the set of all feasible solutions (to \textbf{DMCES}) as
		$$ \bbDMCES(G,G') := \{(U, \phi) \mid (U, \phi) \text{ is a feasible solution}\}$$
	
	For any $(U, \phi) \in  \bbDMCES(G,G')$, we define the \emph{score} of the feasible solution $(U,\phi)$ to be the function
		\begin{equation}\label{eq:score}
	\mathcal P (U,\phi) := \sum_{(v_1,v_2) \in U\times U} \epsilon(v_1,v_2)\epsilon'(\phi(v_1),\phi(v_2)).
	\end{equation}
Let ${\mathscr G}$ be the set of all node-labeled digraphs. We define the function
\begin{gather*}
\DMCES: {\mathscr G} \times {\mathscr G} \to \mathbb N \\
\DMCES(G,G') := \max \{\mathcal P (U,\phi) \mid (U,\phi) \in \mathbb{DMCES}(G,G')\}.
\end{gather*} 

The \emph{directed maximum common edge-induced subgraph  problem (\textbf{DMCES})} is to calculate, for inputs $G$ and $G'$, the value of $\DMCES(G,G')$.
 We call $(U,\phi)$ with  $\mathcal P(U,\phi) = \DMCES(G,G')$ a \emph{solution to \textbf{DMCES}}.
\end{definition}
The score $\mathcal P$ is the number of edges matched under $\phi$, which means \textbf{DMCES} maximizes the number of edges that can be matched under any label-respecting injection.

There is an alternative way of formulating \textbf{DMCES} that involves isomorphic subgraphs and can be more amenable to computation. We now define the alternative directed maximum common edge subgraph problem (\textbf{aDMCES}).

\begin{definition} \label{def:aMCES}
Let $G = (V, \mathcal D, \ell_v, \emptyset)$ and $G' = (V', \mathcal D',\ell_v', \emptyset)$ be node-labeled digraphs.
A \emph{feasible solution} (to \textbf{aDMCES}) is an ordered pair $(W,W')$ where $W \subseteq \mathcal D$ and $W' \subseteq \mathcal D'$, and the edge-induced subgraphs of $W$ and $W'$ are isomorphic. We denote the set of all such feasible solutions as 
\[a\bbDMCES(G,G') := \{(W,W') \mid (W,W') \text{ is a feasible solution}\} \]
and define the function 
\begin{gather*}
\aDMCES: {\mathscr G} \times {\mathscr  G} \to \mathbb N \\
\aDMCES(G,G') := \max \{|W| \mid (W,W') \in a\bbDMCES(G,G')\}.
\end{gather*} 
The \emph{alternative DMCES problem (\textbf{aDMCES})} is to calculate, for inputs $G$ and $G'$, $\aDMCES(G,G')$. We call $(W,W')$ with $|W| =  \aDMCES(G,G')$ a \emph{solution to \textbf{aDMCES}}.

\end{definition}

\begin{theorem} \label{thm:DMCEStoaDMCES}
\textbf{DMCES} is equivalent to \textbf{aDMCES} for simple node-labeled digraphs, i.e.\  
\begin{equation}
    \aDMCES(G,G') = \DMCES(G,G')
\end{equation}
whenever $G$ and $G'$ satisfy the assumption \textbf{(S)}.
\end{theorem}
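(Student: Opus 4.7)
The plan is to establish the equality by proving the two inequalities $\DMCES(G,G') \leq \aDMCES(G,G')$ and $\aDMCES(G,G') \leq \DMCES(G,G')$ by direct construction, turning a feasible solution of one form into a feasible solution of the other form of at least the same value. Throughout, assumption \textbf{(S)} will be used to guarantee that an injection on vertices induces a bijection on edges between specific node pairs.

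For $\DMCES(G,G') \leq \aDMCES(G,G')$, take a solution $(U,\phi)$ to \textbf{DMCES} with score $k = \mathcal{P}(U,\phi)$. Define
\[W := \{(v_1,v_2) \in \mathcal{D} \mid v_1, v_2 \in U \text{ and } (\phi(v_1),\phi(v_2)) \in \mathcal{D}'\}, \qquad W' := \{(\phi(v_1),\phi(v_2)) \mid (v_1,v_2) \in W\}.\]
By the definition of the score, $|W| = k$, and by \textbf{(S)} together with injectivity of $\phi$, $|W'| = |W|$. The restriction of $\phi$ to the node set $U_W$ of the $W$-edge-induced subgraph is a label-respecting bijection onto the node set of the $W'$-edge-induced subgraph, and by the very construction of $W'$ it preserves edges in both directions. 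Hence $(W,W')$ is a feasible solution to \textbf{aDMCES} with $|W| = k$, giving $\aDMCES(G,G') \geq k$.

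For the reverse inequality, let $(W,W')$ be a solution to \textbf{aDMCES} with $|W| = k$, and let $\psi$ be an isomorphism between the $W$- and $W'$-edge-induced subgraphs. Set $U$ equal to the node set of the $W$-edge-induced subgraph and $\phi = \psi$. Then $\phi$ is a label-respecting injection into $V'$, so $(U,\phi)$ is a feasible solution to \textbf{DMCES}. For every $(v_1,v_2) \in W$ we have $(v_1,v_2) \in \mathcal{D}$ and $(\phi(v_1),\phi(v_2)) \in W' \subseteq \mathcal{D}'$, so the corresponding term in~\eqref{eq:score} contributes $1$. Hence $\mathcal{P}(U,\phi) \geq |W| = k$, showing $\DMCES(G,G') \geq k$.

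The main obstacle, modest as it is, is the careful bookkeeping in the forward direction: the score $\mathcal{P}(U,\phi)$ counts ordered node pairs whose edges are matched, whereas $\aDMCES$ counts edges in $\mathcal{D}$. Without \textbf{(S)}, parallel edges between the same endpoints could cause the score to undercount or the constructed $W$ to overcount relative to a given $\phi$, breaking the bijection $W \leftrightarrow W'$ that makes the two counts coincide. Assumption \textbf{(S)} removes this ambiguity, so the two inequalities together yield the equality.
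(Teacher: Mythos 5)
Your proposal is correct and follows essentially the same route as the paper: both directions construct the edge sets $W$ and $W'$ from a feasible $(U,\phi)$ (your $W$ is exactly the set of ordered pairs where $\epsilon\cdot\epsilon'=1$) and, conversely, read off a feasible $(U,\psi)$ from an isomorphism of edge-induced subgraphs, with \textbf{(S)} invoked at the same point to equate the score with the edge count. The only difference is presentational: you package the argument as two inequalities between the optima, whereas the paper states a score-preserving correspondence between feasible solutions; the content is identical.
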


\begin{proof}
Let $G = (V, \mathcal D, \ell_v, \emptyset)$ and $G' = (V', \mathcal D',\ell_v', \emptyset)$ be node-labeled digraphs satisfying assumption \textbf{(S)}. Suppose $(U,\phi) \in \bbDMCES(G,G')$. Let 
	\begin{equation} \label{eq:W}
	W :=  \{(v_1,v_2) \in \mathcal D\mid \epsilon(v_1,v_2)\epsilon'(\phi(v_1),\phi(v_2)) =1\}.
	\end{equation}	
	and let 
\[ W' :=   \{(\phi(v_1),\phi(v_2)) \in \mathcal D' \mid \epsilon(v_1,v_2)\epsilon'(\phi(v_1),\phi(v_2)) =1\} .\]
	 Let $H$ and $H'$ be the $W$ and $W'$  edge-induced subgraphs respectively. Then $\phi$ is  an isomorphism between $H$ and $H'$. To see this, we first observe $\phi$ respects labels. Second, we consider the edges.
		\begin{align*}
		(v_1,v_2) \in W \Leftrightarrow&\ \epsilon(v_1,v_2)\epsilon'(\phi(v_1),\phi(v_2)) =1 \\
		\Rightarrow&\ \epsilon'(\phi(v_1),\phi(v_2)) =1 \\
		\Leftrightarrow&\ (\phi(v_1),\phi(v_2)) \in \mathcal D'.
	\end{align*}
	Since $\epsilon(v_1,v_2)\epsilon'(\phi(v_1),\phi(v_2)) =1$, then $(\phi(v_1),\phi(v_2)) \in W'$ as well. Setting $w_1:=\phi(v_1), w_2 := \phi(v_2)$ we have
	\begin{align*}
		(w_1,w_2) \in W' \Leftrightarrow&\ \epsilon(\phi^{-1}(w_1),\phi^{-1}(w_2))\epsilon'(w_1,w_2) =1 \\
		\Rightarrow&\ \epsilon(\phi^{-1}(w_1),\phi^{-1}(w_2)) =1 \\
		\Leftrightarrow&\ (\phi^{-1}(w_1),\phi^{-1}(w_2)) \in \mathcal D\\
		\Leftrightarrow&\ (v_1,v_2) \in \mathcal D.
	\end{align*}
	 The first and last lines above imply $(v_1,v_2) \in W$.  Putting the two arguments  together, $(v_1,v_2) \in W \Leftrightarrow (\phi(v_1),\phi(v_2)) \in W'$.
	It now follows that  $(W, W') \in a\bbDMCES(G,G')$. Furthermore, by construction of the sets $W, W'$ we have 
	\begin{equation}\label{est1}
	 \mathcal P (U,\phi) = |W|=|W'|.
	 \end{equation}
	
	Now let $W \subseteq \mathcal D, W' \subseteq \mathcal D'$ such that $(W,W') \in a\bbDMCES(G,G')$, i.e.\ $W$ and $W'$ are two sets of edges that form a feasible solution to \textbf{aDMCES}. Let $H = (U, W,\ell_v|_U,\emptyset)$ and $H' = (U',W',\ell_v'|_{U'},\emptyset)$ be the edge-induced subgraphs associated with $W$ and $W'$ respectively and let  $\psi: U \to U'$ be the isomorphism between $H$ and $H'$. Then the pair $(U,\psi)\in \bbDMCES(G,G')$.
	Since $G$ and $G'$ are simple, there is at most one edge from $v_i$ to $v_j$.
	Therefore 
	\begin{equation}\label{est2}
	\mathcal P(U,\psi) = |W|=|W'|.
	\end{equation}	
	The equations  \eqref{est1}-\eqref{est2} show that there is a solution of \textbf{DMCES} with score $s$ if and only if there is a solution to \textbf{aDMCES} with score $s$.  
\end{proof}


\section{The extended line digraph}\label{sec:extlinedigraph}

In order to prove that Definition~\ref{def:metric} defines a metric, we make use of \textit{ the extended line digraph}. As its name suggests, this is an elaboration of the notion of a line graph $L(G)$ of an unlabeled undirected graph $G=(V,\mathcal E, \emptyset,\emptyset)$.
$L(G)$ forms a dual to $G$ in the sense that edges in $G$ are converted to nodes in $L(G)$. Edges in $L(G)$ occur whenever two edges in $G$ share a node. In this section, we modify the standard idea of the line graph to capture information about the arrangement of directed edges in a digraph and to account for node labels. We use this construct to prove an isomorphism theorem similar to Whitney's isomorphism theorem, which relates isomorphisms between graphs with isomorphisms between their line graphs.

We remark briefly that there is a standard idea of a line digraph of an unlabeled directed graph $G = (V,\mathcal D,\emptyset,\emptyset)$. In the line digraph, the head-to-tail relationships between edges of $G$ become directed edges in the line digraph. In our case, we wish to capture all edge relationships, whether head-to-tail, tail-to-tail, or head-to-head, in order to prove the Isomorphism Theorem~\ref{thm:iso_iff}. Therefore we define a custom dual graph for a node-labeled digraph.

We begin with the definition of the line graph and Whitney's isomorphism thereom. We then go on to define the extended line digraph and its isomorphism theorem.

\begin{definition}
Given an unlabeled undirected graph $G = (V,\mathcal E,\emptyset,\emptyset)$, the \emph{line graph of $G$} is an unlabeled undirected graph $L(G) = (\mathcal E, \mathcal E_L,\emptyset,\emptyset)$, with nodes that correspond to edges of $G$. The edges $\mathcal E_L$ connect nodes in $L(G)$ whenever there is a shared node between two edges $e_1,e_2 \in \mathcal E$:

\[\mathcal E_L := \big\{\{e_1,e_2\} \in \mathcal E\times \mathcal E \mid e_1 \neq e_2\text{ and } e_1 = \{v_1,v_2\} , e_2 = \{v_2,v_3\} \mbox{ for some } v_1,v_2,v_3 \in V\big \}.\]
	

\end{definition}

Whitney's isomorphism theorem holds for almost all undirected graphs. However, there is one exception, the isomorphism of the line graphs between the $Y$ and $\Delta$ graphs.

\begin{definition} The $Y$ and $\Delta$ graphs (Figure \ref{fig:dY}) are defined as
	$$Y := \big (\big \{a,b,c,d\big \}, \big\{\{a,b\},\{a,c\},\{a,d\} \big\}\big) $$
	$$\Delta := \big(\big \{a,b,c\big \},  \big\{\{a,b\},\{b,c\},\{c,a\}  \big\}\big). $$
	\end{definition}
	
	Note that  these two graphs have isomorphic line graphs, which are isomorphic to the $\Delta$ graph itself.  The Whitney isomorphism theorem states that these are the only non-isomorphic graphs that have isomorphic line graphs. 


\begin{figure}[h!]
	\centering
	\begin{minipage}{.48\linewidth}
		\centering
		\begin{tikzpicture}[main node/.style={circle, draw, inner sep=1pt,minimum size=8pt, fill=white,minimum width={width("d")+6pt}}, scale=1,node distance=0.8cm]
		\tikzstyle{every loop}=[looseness=14]
		
		\node[main node] (A) at (0,0) {$a$} ;
		\node[main node] (B) at (0,-1) {$d$} ;
		\node[main node] (C) at ({cos(30)},{sin(30)}) {$c$} ;
		\node[main node] (D) at (-{cos(30)},{sin(30)}) {$b$} ;
		
		\draw[very thick,-,shorten >= 3pt,shorten <= 3pt]
		(A) edge[] (B)
		(A) edge[] (C)
		(A) edge[] (D);
		\end{tikzpicture}
	\end{minipage}
	\hfill
	\begin{minipage}{.48\linewidth}
		\centering
		\begin{tikzpicture}[main node/.style={circle, draw, inner sep=1pt,minimum size=8pt, minimum width={width("d")+6pt}},fill=white, scale=.8,node distance=0.8cm]
		\tikzstyle{every loop}=[looseness=14]
		
		\node[main node] (A) at (1,0) {$a$} ;
		\node[main node] (B) at (0,{-2*sin(60)}) {$b$} ;
		\node[main node] (C) at (2,{-2*sin(60)}) {$c$} ;
		
		\draw[very thick,-,shorten >= 3pt,shorten <= 3pt]
		(A) edge[] (B)
		(B) edge[] (C)
		(C) edge[] (A);
		\end{tikzpicture}
	\end{minipage}
	\caption{The $Y$ and $\Delta$ graphs.}\label{fig:dY}
\end{figure}
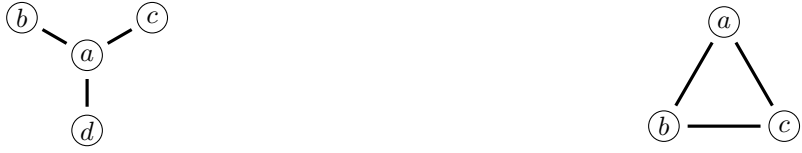

\begin{theorem}[Whitney~\cite{whitney}] 
	Let $G,G'$ be two finite, connected, unlabeled undirected graphs. Then $$L(G) \cong L(G') \text{ if  and only if } G \cong G',$$ with the  exception when $G \cong Y$ (respectively $G' \cong Y$) and $G' \cong \Delta$ (respectively $G \cong \Delta$).  
		\end{theorem}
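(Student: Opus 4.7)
The forward implication is immediate: any isomorphism $\phi: V \to V'$ between $G$ and $G'$ lifts to a bijection $\tilde\phi: \mathcal E \to \mathcal E'$ sending $\{u,v\}$ to $\{\phi(u),\phi(v)\}$, and two edges of $G$ share a vertex precisely when their images under $\tilde\phi$ do, so $\tilde\phi$ is an isomorphism between $L(G)$ and $L(G')$.

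The converse is the substantive content. My plan is to recover $G$ from $L(G)$ by identifying each vertex $v \in V$ with its \emph{star} $S(v) := \{e \in \mathcal E : v \in e\}$, which forms a clique in $L(G)$. Assuming $G$ is simple, every edge of $L(G)$ lies in exactly one star and every node of $L(G)$ lies in exactly two stars. Thus the collection $\{S(v)\}_{v \in V}$ gives a \emph{Krausz decomposition} of $L(G)$: a partition of its edge set into cliques such that each node belongs to at most two cliques. If such a decomposition is essentially unique, then any isomorphism $\psi: L(G) \to L(G')$ must send stars to stars, inducing the desired vertex bijection $\phi: V \to V'$ defined by $\psi(S(v)) = S(\phi(v))$; a short check then confirms that $\{u,v\} \in \mathcal E$ if and only if $\{\phi(u),\phi(v)\} \in \mathcal E'$.

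The main obstacle is establishing uniqueness of the Krausz decomposition, which I would handle by case analysis on clique sizes in $L(G)$. Any clique of size at least $4$ must be contained in a single star, because four pairwise-adjacent edges of $G$ cannot all lie inside a common triangle and therefore must share a common endpoint. The delicate case is cliques of size exactly $3$, which can arise either as a star at a degree-$3$ vertex of $G$ or as the image of a triangle subgraph of $G$. The Krausz condition that each node lies in at most two chosen cliques typically forces the star interpretation; the one situation in which it cannot distinguish between the two is when $L(G) \cong \Delta$, which admits both interpretations.

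A direct inspection then shows $L(Y) \cong L(\Delta) \cong \Delta$, accounting for the stated exception. To show this is the only exception, I would enumerate which finite connected simple graphs have $\Delta$ as their line graph (one checks that only $Y$ and $\Delta$ qualify, essentially by counting edges and vertex degrees), and then argue that for every other $G$ the Krausz decomposition is forced to be $\{S(v)\}_{v \in V}$, so that $L(G) \cong L(G')$ transports this decomposition and yields $G \cong G'$.
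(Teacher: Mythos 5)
First, a remark on scope: the paper offers no proof of this statement --- it is quoted as a classical result from Whitney's 1932 paper and used as a black box (the paper only proves the analogous statement for extended line digraphs, Theorem~\ref{thm:iso_iff}, by a different argument adapted from Jung's proof in Harary). So your attempt has to stand on its own. The forward direction is fine, and the Krausz-decomposition strategy for the converse is a legitimate classical route to Whitney's theorem.

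The genuine gap is the uniqueness claim on which your whole converse rests. You assert that the partition of the edges of $L(G)$ into the star cliques $S(v)$ is forced whenever $L(G) \not\cong \Delta$. This is false. Take $G$ to be a triangle $abc$ with a pendant edge $ad$. Then $L(G)$ is $K_4$ minus one edge, on nodes $ab, bc, ca, ad$ with $bc$ not adjacent to $ad$. The star decomposition consists of the triangle $\{ab,ca,ad\}$ (the star at $a$) together with the single edges $\{ab,bc\}$ and $\{bc,ca\}$; but the triangle $\{ab,bc,ca\}$ (the image of the triangle of $G$) together with the single edges $\{ab,ad\}$ and $\{ca,ad\}$ is a second, distinct Krausz decomposition --- every node still lies in at most two cliques. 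It happens that this second decomposition also reconstructs to a graph isomorphic to $G$, so the theorem is not threatened, but your argument is: an isomorphism $\psi: L(G) \to L(G')$ is only guaranteed to carry the star decomposition of $L(G)$ to \emph{some} Krausz decomposition of $L(G')$, and concluding $G \cong G'$ from that requires knowing that \emph{all} Krausz decompositions of $L(G')$ reconstruct to isomorphic graphs --- which is essentially the statement you are trying to prove. The standard repairs are either a finer local analysis of which triangles of $L(G)$ can be star cliques (the ``odd''/``even'' triangle dichotomy), which pins down the decomposition outside a finite list of small graphs that must then be checked by hand, or the direct Whitney--Jung argument that builds the vertex bijection from the edge pencils $P(v)$ without invoking Krausz decompositions at all, as the paper's Appendix~\ref{app:thm_proof} does in the directed, labeled setting. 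Your identification $L(Y) \cong L(\Delta) \cong \Delta$ and the enumeration of preimages of $\Delta$ are correct, but they dispose of only one of the exceptional configurations.
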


Now we show a similar result for node-labeled digraphs using a construction that we call the extended line digraph.

\begin{definition} 
	
	Given a  node-labeled digraph $G = (V,\mathcal D,\ell_v,\emptyset)$ satisfying \textbf{(S)}, its \emph{extended line digraph} is a labeled directed graph $\mathcal L(G) = (\mathcal D, \mathcal D_L,\bar \ell_v,\bar \ell_e)$ with node set $\mathcal D$. We label each node  in $\mathcal L(G)$ by the pair of node labels associated to the corresponding edge of $G$ 
	   $$\bar \ell_v: (u,v) \mapsto (\ell_v(u),\ell_v(v)).$$ 
	   The edges $\mathcal D_L$ and their labels $\bar \ell_e$ are determined by the head-to-tail ($ht$), tail-to-tail ($tt$), and head-to-head ($hh$) relationships between edges in $\mathcal D$. Let $e = (u,v)$ and $\hat e=(\hat u, \hat v)$ be edges in $\mathcal D$.
	\begin{itemize}
	\item 
	 $(e,\hat e) \in \mathcal D_L$ with $\bar \ell_e((e,\hat e)) = ht$ if and only if $v = \hat u$, meaning head meets tail in $G$. 
	 \item 
	 $ (e,\hat e), (\hat e, e) \in \mathcal D_L$ with $\bar \ell_e((e,\hat e)) = \bar \ell_e((\hat e, e)) = tt$ if and only if  $u = \hat u,$ meaning tail meets tail in $G$.
	 \item $(e,\hat e), (\hat e, e) \in \mathcal D_L$ with $\bar \ell_e((e,\hat e)) = \bar \ell_e((\hat e,e)) = hh$ if and only if  $ v = \hat v, $  meaning head meets head in~$G$.
	 \end{itemize}
	
   \end{definition}

   An example is seen in Figure~\ref{fig:eldexample}.

   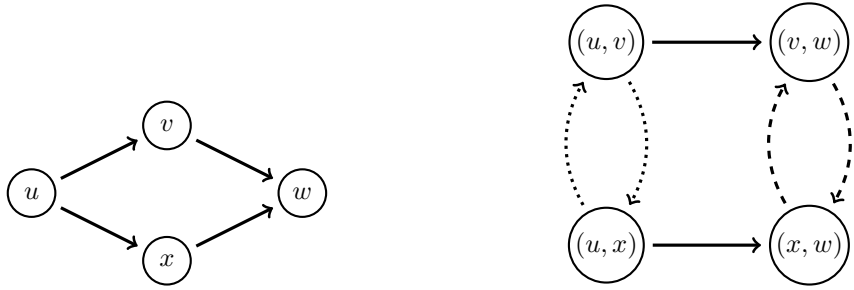
\begin{figure}[h!]\label{fig:eldexample}
	\begin{center}
		\begin{tabular}{cc}
		\begin{tikzpicture}[main node/.style={circle, draw,thick, inner sep=1pt,minimum size=18pt}, scale=.9,node distance=1cm]
		\tikzstyle{every loop}=[looseness=14]
		
		\node[main node] (u) at (0,0) {$u$ };
		\node[main node] (v) at (2,1)  {$v$ };
		\node[main node] (w) at (4, 0 )  {$w$ } ;
		\node[main node] (x) at (2,-1) {$x$};
		
		\draw[very thick,shorten >= 3pt,shorten <= 3pt]
		(u) edge[->] (v)
		(u) edge[->] (x)
		(v) edge[->] (w)
		(x) edge[->] (w)
		;
		\end{tikzpicture}
		 
		&
		\hspace{1in}
		\begin{tikzpicture}[main node/.style={circle, draw,thick, inner sep=1pt,minimum size=18pt}, scale=.9,node distance=1cm]
			\tikzstyle{every loop}=[looseness=14]
			
			\node[main node] (uv) at (0,0) {$(u,v)$ };
			\node[main node] (vw) at (3,0)  {$(v,w)$ };
			\node[main node] (ux) at (0,-3 )  {$(u,x)$ } ;
			\node[main node] (xw) at (3,-3) {$(x,w)$};
			
			\draw[very thick,shorten >= 3pt,shorten <= 3pt]
			(uv) edge[->] (vw)
			(ux) edge[->] (xw)
			(ux) edge[->,bend left,dotted] (uv)
			(uv) edge[->,bend left,dotted] (ux)
			(vw) edge[->, bend left,dashed] (xw)
			(xw) edge[->,bend left,dashed] (vw)
			;
			\end{tikzpicture}

	\end{tabular}
	\end{center}
	\caption{An example of a directed graph (left, node labels not shown) and its extended line digraph (right, node labels not shown). Solid lines indicate a head-to-tail relationship (label $ht$); dashed lines indicate head-to-head ($hh$); and dotted lines indicate tail-to-tail ($tt$).}
\end{figure}

Tracking  the head-to-head, tail-to-tail, and head-to-tail adjacencies in the extended line digraph allows us to prove an isomorphism theorem similar to the Whitney isomorphism theorem~\cite{whitney}.

\begin{theorem} \label{thm:iso_iff}
Let $G,G'$ be two node-labeled digraphs satisfying \textbf{(W)}, \textbf{(S)}, and \textbf{(O)}. Then  
\[
\mathcal L(G) \cong  \mathcal L(G') \mbox{ if and only if } G \cong G' .\]
\end{theorem}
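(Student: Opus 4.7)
The plan is to prove the two directions separately, with the nontrivial work concentrated in reconstructing the vertex set of $G$ from $\mathcal L(G)$ in the reverse direction.

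For the easy direction ($G\cong G'\Rightarrow \mathcal L(G)\cong \mathcal L(G')$), given an isomorphism $\phi\colon V\to V'$ I would define $\Phi\colon \mathcal D \to \mathcal D'$ by $\Phi((u,v)):=(\phi(u),\phi(v))$. Because $\phi$ is a label-respecting bijection that preserves edges, $\Phi$ is a bijection on the node sets of the extended line digraphs; it respects the pair-labels $\bar\ell_v((u,v))=(\ell_v(u),\ell_v(v))$; and, because $\phi$ sends heads to heads and tails to tails, $\Phi$ preserves each of the relations head-to-tail, tail-to-tail, and head-to-head, together with the direction of the asymmetric $ht$-edges. Hence $\Phi$ is an isomorphism of labeled digraphs.

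For the hard direction ($\mathcal L(G)\cong \mathcal L(G')\Rightarrow G\cong G'$), I would recover the vertex set of $G$ intrinsically from $\mathcal L(G)$. The idea is that each vertex $v\in V$ corresponds to the set of ``endpoint slots'' bearing its name: one tail-slot for each out-edge at $v$ and one head-slot for each in-edge at $v$. Formally, on the set $\mathcal D\times\{t,h\}$ define the equivalence relation $\sim$ generated by: $(e,t)\sim(\hat e,t)$ whenever $(e,\hat e)$ is $tt$-labeled in $\mathcal L(G)$; $(e,h)\sim(\hat e,h)$ whenever it is $hh$-labeled; and $(e,h)\sim(\hat e,t)$ whenever $(e,\hat e)$ is $ht$-labeled. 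The key lemma is that the assignments $(e,t)\mapsto \text{tail}(e)$ and $(e,h)\mapsto \text{head}(e)$ induce a bijection between $\sim$-classes and $V$; surjectivity uses \textbf{(W)} so that every vertex is incident to at least one edge, and injectivity is immediate from the generators, which only identify slots already sharing a node. The identical construction on $\mathcal L(G')$ classifies $V'$ by a relation $\sim'$.

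Given the isomorphism $\Phi\colon \mathcal L(G)\to \mathcal L(G')$, the map $(e,s)\mapsto (\Phi(e),s)$ sends generators of $\sim$ to generators of $\sim'$, because $\Phi$ preserves edge-labels and the direction of the $ht$-edges; hence it descends to a bijection between the class sets, which, composed with the two node-identifications above, produces the candidate $\phi\colon V\to V'$. To conclude, I would check that $\phi$ respects node-labels (reading the first or second coordinate of $\bar\ell_v$, which $\Phi$ matches to $\bar\ell_v'$) and preserves edges in both directions: for $e=(u,v)\in\mathcal D$, the slots $(e,t)$ and $(e,h)$ represent $u$ and $v$, and $\Phi$ carries them to the slots of $\Phi(e)\in\mathcal D'$, forcing $\Phi(e)=(\phi(u),\phi(v))$; the converse comes from applying the same argument to $\Phi^{-1}$.

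The main obstacle is establishing the vertex-reconstruction lemma, namely that the three generating relations produce exactly one class per vertex with no spurious mergers or unwanted splittings. This is where \textbf{(S)} and \textbf{(O)} enter: \textbf{(S)} rules out self-loops in $G$ (which would create degenerate $tt$, $hh$, and $ht$ loops at a single node of $\mathcal L(G)$), while \textbf{(O)} rules out anti-parallel pairs (which would produce $ht$-edges in both directions between the same pair of line-digraph nodes and obscure which endpoint of each edge is the tail). Once the reconstruction lemma is in hand, the rest of the argument is essentially bookkeeping.
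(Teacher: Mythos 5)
Your proposal is correct, but it takes a genuinely different route from the paper. The paper proves the hard direction by passing to the unlabeled \emph{structure} graph: it shows $S(\mathcal L(G)) \cong L(S(G))$, invokes Whitney's isomorphism theorem to get $S(G)\cong S(G')$, handles the $Y$/$\Delta$ exception by exhaustively enumerating all oriented digraphs structurally isomorphic to $Y$ or $\Delta$ together with their extended line digraphs, and then rebuilds the vertex bijection $\gamma$ from the incidence sets $P(v)$ via a degree-based case analysis following Jung's proof. Your slot construction on $\mathcal D\times\{t,h\}$ replaces all of that: because the edge labels $tt$, $hh$, $ht$ record \emph{which} endpoint two edges share (information the ordinary line graph discards), the $\sim$-classes are exactly the fibers of the slot-to-vertex map, so the vertex set is recovered intrinsically and the $Y$/$\Delta$ ambiguity never arises (a $\Delta$-shaped digraph yields three classes of size two, a $Y$-shaped one yields classes of sizes $3,1,1,1$, so their extended line digraphs cannot be isomorphic). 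This buys a shorter, more self-contained argument with no appeal to Whitney's theorem and no figure-based case enumeration. Two small points to tighten: first, what you call ``injectivity is immediate from the generators'' is really the well-definedness of the class-to-vertex map; injectivity is the converse statement that any two slots naming the same vertex are joined by a generator, which holds because under \textbf{(S)} and \textbf{(O)} two distinct edges share at most one vertex and the manner of sharing is recorded by exactly one of $tt$, $hh$, $ht$ (in one of its two orientations), while \textbf{(S)} also rules out a self-loop identifying $(e,t)$ with $(e,h)$. Second, both your argument and the paper's implicitly require every vertex to be incident to an edge, i.e.\ $|V|,|V'|\ge 2$; the edgeless single-vertex case is degenerate (both extended line digraphs are empty regardless of the node label) and should be excluded or treated separately, as the paper does by restricting to $|V|,|V'|>2$ after a brief remark.
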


Due to length and complexity, the proof of Isomorphism Theorem~\ref{thm:iso_iff} and relevant technical lemmas can be found in Appendix~\ref{app:thm_proof}. The importance  of Isomorphism Theorem~\ref{thm:iso_iff} is that weakly connected, simple, and oriented node-labeled digraphs, the collection of which we  call ${\mathscr G_{\mathbf {WSO}}} \subset \mathscr G$, are uniquely associated to extended line digraphs.	
As will be shown in the following section, this result allows us to use a standard metric on the extended line digraph to prove that~\eqref{eq:DMCES_metric} is a metric on ${\mathscr G}_{\mathbf {WSO}}$.

%

\section{Establishing an edge-based metric}\label{sec:metric}

A related problem to \textbf{DMCES} and \textbf{MCES} is the maximum common node-induced subgraph problem (\textbf{MCIS}). In this problem, the emphasis is on similarity between nodes in two graphs instead of similarity between edges in two graphs. Despite the change in emphasis, we can leverage a graph metric that uses \textbf{MCIS} to show that Definition~\ref{def:metric} defines a  metric. The general idea is that solving the \textbf{DMCES} problem over directed graphs is equivalent to solving the \textbf{MCIS} problem over their extended line digraphs.

\begin{definition} \label{def:MCIS}
	    
	   Let $G = (V,\mathcal D, \ell_v, \ell_e)$ and $G' = (V',\mathcal D', \ell'_v, \ell'_e)$ be labeled digraphs.
		A \emph{feasible solution} (to \textbf{MCIS}) is an ordered pair $(U,U')$ where $U \subseteq V$ and $U' \subseteq V'$ and the node-induced subgraphs $H=(U,W,\ell_v|_U,\ell_e|_W)$ and $H' = (U',W',\ell'_v|_{U'},\ell'_e|_{W'})$ are isomorphic. The set of \emph{feasible solutions} to \textbf{MCIS} is 
	\[ \bbMCIS(G,G') := \{ (U,U') \mid H \cong H'\}.\]
	Let $\bar{\mathscr G}$ be the set of all labeled digraphs. We define the function
\begin{gather*}
\MCIS: \bar{\mathscr G} \times \bar{\mathscr G} \to \mathbb N \\
\MCIS(G,G') := \max \{|U| \mid (U,U') \in \bbMCIS(G,G')\}.
	\end{gather*} 
	\raggedright
	We define the  \emph{maximum common node-induced subgraph problem (\textbf{MCIS})} to be the task of finding  $\MCIS(G,G')$ for inputs $G$ and $G'$.  We call a feasible solution $(U,U')$ such that $|U| = \MCIS(G,G')$  a \emph{solution to \textbf{MCIS}}.
\end{definition}

The following Lemma~\ref{lem:subgraphs} makes the important but technical point that the extended line digraph of an edge-induced subgraph of $G$ may  be constructed by taking the associated node-induced subgraph of $\mathcal L(G)$. Using this fact, we can then show that a solution to \textbf{MCIS} over extended line digraphs is  a solution to \textbf{aDMCES} over node-labeled digraphs.

\begin{lemma}\label{lem:subgraphs} 
	Let $G=(V,\mathcal D,\ell_v,\emptyset)$ and $G'=(V',\mathcal D',\ell_v',\emptyset) \in \mathscr G_\mathbf{WSO}$ be node-labeled digraphs satisfying \textbf{(W)}, \textbf{(S)}, and \textbf{(O)}, and let $\mathcal L(G)$ and $\mathcal L(G')$ be their extended line digraphs. Let $W\subseteq \mathcal D, W'\subseteq \mathcal D'$ be subsets of edges. Let $H$ and $H'$ be the $W$ and $W'$ edge-induced subgraphs of $G$ and $G'$, respectively. Let $J$ and $J'$ be the $W$ and $W'$ node-induced subgraphs of $\mathcal L(G)$ and $\mathcal L(G')$, respectively.
 Then 
 \begin{enumerate}
 \item[(1)]  $\mathcal L(H) = J$ and $\mathcal L(H') = J'$, and  
 \item[(2)]  $H \cong H'$ if and only if $J \cong J'$.
 \end{enumerate}
\end{lemma}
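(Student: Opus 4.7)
The plan is to handle the two parts in sequence, using direct verification for (1) and reducing to Theorem~\ref{thm:iso_iff} via weakly connected components for (2). For part (1), I would verify the equality of labeled digraphs $\mathcal L(H) = J$ by checking node sets, edge sets, and both label functions. The nodes of $\mathcal L(H)$ are the edges $W$ of $H$, which coincide with the node set of $J$ as a subset of the node set $\mathcal D$ of $\mathcal L(G)$. For each $e = (u,v) \in W$, both graphs assign the node label $(\ell_v(u), \ell_v(v))$: in $J$ this is inherited from $\mathcal L(G)$, and in $\mathcal L(H)$ this holds because $\ell_v|_U(u) = \ell_v(u)$, where $U$ is the vertex set of $H$. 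Since $H$ is edge-induced, every endpoint of every $e \in W$ belongs to $H$, so two edges in $W$ share a head/tail with each other in $H$ precisely when they do in $G$. Consequently the edges of $\mathcal L(H)$ and their $ht/tt/hh$ labels agree with those of $J$. The identity $\mathcal L(H') = J'$ is proved identically.

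For part (2), the forward implication is direct: any isomorphism $\psi: H \to H'$ induces a label-respecting bijection $W \to W'$ by $(u,v) \mapsto (\psi(u), \psi(v))$ that preserves head-to-head, head-to-tail, and tail-to-tail relationships, hence yields an isomorphism $\mathcal L(H) \cong \mathcal L(H')$. Combined with part (1), this gives $J \cong J'$.

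The reverse implication is the main obstacle, because Theorem~\ref{thm:iso_iff} requires weak connectedness, which $H$ and $H'$ may lack even when $G, G' \in \mathscr G_{\mathbf{WSO}}$. My plan is to reduce to the weakly connected case via a component decomposition, using two auxiliary observations. First, if $K$ is a weakly connected graph in $\mathscr G_{\mathbf{WSO}}$ with at least one edge, then $\mathcal L(K)$ is weakly connected: given edges $e, e' \in K$, an undirected path in $K$ from an endpoint of $e$ to an endpoint of $e'$ lifts to a sequence of consecutive edges $e = f_0, f_1, \ldots, f_n = e'$ in which $f_i$ and $f_{i+1}$ share a vertex, and this common vertex produces an edge between them in $\mathcal L(K)$ regardless of its $ht/tt/hh$ type. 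Second, $\mathcal L$ commutes with weak component decomposition: if $H = H_1 \sqcup \cdots \sqcup H_k$ is the decomposition of $H$ into weakly connected components, then $\mathcal L(H) = \mathcal L(H_1) \sqcup \cdots \sqcup \mathcal L(H_k)$, because shared endpoints between edges can only occur within a single weak component of $H$.

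Combining these observations, the weakly connected components of $J = \mathcal L(H)$ are exactly $\mathcal L(H_1), \ldots, \mathcal L(H_k)$, and similarly for $J' = \mathcal L(H')$ with components $H'_1, \ldots, H'_m$. Any isomorphism $J \cong J'$ must map weak components to weak components, giving $k = m$ and a bijection $\sigma$ with $\mathcal L(H_i) \cong \mathcal L(H'_{\sigma(i)})$. Since $H$ is edge-induced, every $H_i$ contains at least one edge, and each $H_i$ satisfies \textbf{(W)} by construction and \textbf{(S)}, \textbf{(O)} as a subgraph of $G \in \mathscr G_{\mathbf{WSO}}$, and similarly for $H'_{\sigma(i)}$. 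Thus Theorem~\ref{thm:iso_iff} applies to each pair to yield $H_i \cong H'_{\sigma(i)}$, and the disjoint union of these isomorphisms is the desired isomorphism $H \cong H'$.
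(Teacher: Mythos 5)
Your proof is correct, and for part (2) it is actually more careful than the argument in the paper. Part (1) proceeds by the same direct verification of node sets, edge sets, and the two label functions that the paper uses, so there is nothing to compare there. For part (2), the paper simply invokes Theorem~\ref{thm:iso_iff} on $H$ and $H'$ directly, but that theorem is stated only for digraphs satisfying \textbf{(W)}, \textbf{(S)}, and \textbf{(O)}, and an edge-induced subgraph of a weakly connected digraph need not be weakly connected; the paper glosses over this hypothesis. You identified exactly this obstacle and repaired it: you decompose $H$ and $H'$ into weak components, observe that every component contains an edge (since $H$ is edge-induced there are no isolated vertices), prove that $\mathcal L$ carries a weakly connected digraph with an edge to a weakly connected digraph and commutes with disjoint union, conclude that the weak components of $J$ are precisely the $\mathcal L(H_i)$, and then apply Theorem~\ref{thm:iso_iff} componentwise along the bijection of components induced by any isomorphism $J \cong J'$. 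The price is the two auxiliary observations about $\mathcal L$ and connectivity, both of which you justify adequately; what you buy is a proof that actually satisfies the hypotheses of the theorem it cites, which the paper's one-line appeal does not. Your explicit forward direction (an isomorphism of $H$ and $H'$ induces one of their extended line digraphs) is also fine and matches the easy direction of Theorem~\ref{thm:iso_iff} itself.
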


\begin{proof}

Let  $H=(U,W,\ell_v|_U,\emptyset) $ and $H' = (U',W',\ell_v'|_{U'},\emptyset)$ be the $W$ and $W'$ edge-induced subgraphs of $G$ and $G'$ respectively.
	Denote 
	\begin{align*}
		\mathcal L(G)&=(\mathcal D,\mathcal D_L,\bar \ell_v,\bar \ell_e) &
		\mathcal L(G')&=(\mathcal D',\mathcal D_L',\bar \ell'_v,\bar \ell'_e) \\
		\mathcal L(H) &= (W,\breve{\mathcal D}_L,\breve \ell_v,\breve \ell_e) &
		\mathcal L(H') &= (W',\breve{\mathcal D}'_L,\breve \ell'_v,\breve \ell'_e) \\
		J&=(W,{\bar{\mathcal D}}_L,{\bar \ell}_v|_W,{\bar \ell}_e|_{\bar{\mathcal D}_L}) &
		J'&=(W',{\bar{\mathcal D}}'_L, {\bar \ell}'_v|_{W'}, {\bar \ell}'_e |_{\bar{\mathcal D}'_L})
	\end{align*}
where $J$ and $J'$ are the $W$ and $W'$ node-induced subgraphs of the extended line digraphs $\mathcal L(G)$ and $\mathcal L(G')$, respectively.

(1) The node labeling of $\mathcal L(H)$ is the function $\breve \ell _v$ defined by 
$$(v_1,v_2) \mapsto (\ell_v(v_1),\ell_v(v_2)) \text{ for all } (v_1,v_2) \in W.$$
Since $\bar \ell_v$ is the same map for the expanded domain $(v_1,v_2) \in \mathcal D$, we have that 
$$\bar \ell_v|_W = \breve \ell_v\,. $$
Therefore the node-labeling on $W$ is the same in both $J$ and $\mathcal L(H)$. It remains to show identical edges and edge labels.

 Let $(e_1, e_2) \in \breve{\mathcal D}_L$ be an edge of $\mathcal L(H)$ with $\breve \ell_e((e_1, e_2)) = tt$. An equivalent statement is that $e_1,e_2 \in W$ share a tail-to-tail relationship in $G$. This is true if and only if $(e_1, e_2) \in {\mathcal D}_L$ with $\bar \ell_e((e_1, e_2)) = tt$ in $\mathcal L(G)$ by construction of the extended line digraph of $G$. Since $J$ is a subgraph of $\mathcal L(G)$ induced by nodes $W$, it must also be true that $(e_1, e_2) \in \bar{{\mathcal D}}_L$ and $\bar \ell_e |_{\bar{ \mathcal{D}}_L} = tt$. Similar arguments hold for edge labels $hh$ and $ht$, so that $\bar{{\mathcal D}}_L = \breve{\mathcal D}_L$ and ${\bar \ell}_e|_{\bar{ \mathcal{D}}_L} = \breve \ell_e$. Therefore $J = \mathcal L(H)$. Repeating this argument with primed objects gives $J' = \mathcal L(H')$. This shows the first statement of the Lemma.

(2) Theorem~\ref{thm:iso_iff} applied to  digraphs $H$ and $H'$ gives that $\mathcal L(H) \cong \mathcal L(H')$ if and only if $H \cong H'$, which is equivalently $H \cong H'$ if and only if $J \cong J'$. This concludes the proof.
\end{proof}

\begin{lemma}\label{lem:DMCEStoMCISsolutions}
	Consider node-labeled digraphs $G=(V,\mathcal D,\ell_v,\emptyset)$ and $G'=(V',\mathcal D',\ell_v',\emptyset)$ satisfying \textbf{(W)}, \textbf{(S)}, and \textbf{(O)} and their extended line digraphs $\mathcal L(G)$ and $\mathcal L(G')$. 
	Then  $(W,W') \in a\bbDMCES(G,G')$ if and only if  $(W,W') \in \bbMCIS((\mathcal L(G), \mathcal L(G'))$.
	Furthermore,
	\[ \aDMCES(G,G') = \MCIS(\mathcal L(G),\mathcal L(G')).  \]
\end{lemma}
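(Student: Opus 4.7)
The plan is to deduce this lemma as an almost immediate corollary of Lemma~\ref{lem:subgraphs}(2), supplemented only by unwinding definitions and tracking the identification between edges of $G$ and nodes of $\mathcal{L}(G)$. The key observation is that, by construction of the extended line digraph, the node set of $\mathcal{L}(G)$ is literally $\mathcal{D}$, and likewise for $G'$. Consequently a subset $W \subseteq \mathcal{D}$ plays two roles simultaneously: as an edge subset of $G$ (inducing the edge-induced subgraph $H$ described in the statement of Lemma~\ref{lem:subgraphs}) and as a node subset of $\mathcal{L}(G)$ (inducing the node-induced subgraph $J$ described there); analogously for $W' \subseteq \mathcal{D}'$.

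For the biconditional, I would simply compare the two feasibility conditions. By Definition~\ref{def:aMCES}, $(W,W') \in a\bbDMCES(G,G')$ exactly when $H \cong H'$, while by Definition~\ref{def:MCIS}, $(W,W') \in \bbMCIS(\mathcal{L}(G), \mathcal{L}(G'))$ exactly when $J \cong J'$ (here we use that the node sets of $\mathcal{L}(G), \mathcal{L}(G')$ are $\mathcal{D}, \mathcal{D}'$, so $W$ and $W'$ are legitimate inputs to the MCIS feasibility check). Lemma~\ref{lem:subgraphs}(2), whose hypotheses \textbf{(W)}, \textbf{(S)}, \textbf{(O)} are inherited directly from the hypotheses of the current lemma, yields $H \cong H'$ if and only if $J \cong J'$. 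Chaining these three equivalences gives the desired set equality $a\bbDMCES(G,G') = \bbMCIS(\mathcal{L}(G), \mathcal{L}(G'))$.

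For the numerical equality, I would observe that both objective functions maximize the same cardinality over the same set. Explicitly, $\aDMCES(G,G')$ is defined as $\max\{|W| : (W,W') \in a\bbDMCES(G,G')\}$, whereas $\MCIS(\mathcal{L}(G), \mathcal{L}(G'))$ is defined as $\max\{|U| : (U,U') \in \bbMCIS(\mathcal{L}(G), \mathcal{L}(G'))\}$. Under the identification above, the role of $U$ is played by $W \subseteq \mathcal{D}$, so $|U| = |W|$, and the set of feasible pairs coincides by the first part. Hence the two maxima are taken over identical sets with identical values, yielding $\aDMCES(G,G') = \MCIS(\mathcal{L}(G), \mathcal{L}(G'))$.

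I do not anticipate a substantive obstacle: the only thing that requires care is the bookkeeping between ``$W$ as edges of $G$'' and ``$W$ as nodes of $\mathcal{L}(G)$,'' together with the choice of inducing procedure (edge-induced versus node-induced). Lemma~\ref{lem:subgraphs} is precisely the tool that reconciles these two viewpoints, so once it is invoked the argument reduces to matching definitions.
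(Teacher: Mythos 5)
Your proposal is correct and follows essentially the same route as the paper's own proof: both arguments reduce the biconditional to Lemma~\ref{lem:subgraphs}(2) via the identification of $W\subseteq\mathcal D$ as simultaneously an edge subset of $G$ and a node subset of $\mathcal L(G)$, and both then obtain the equality of $\aDMCES$ and $\MCIS$ by observing that the two maxima range over the same feasible set with the same cardinality objective.
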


\begin{proof}
Let $W \subseteq \mathcal D$ and $W' \subseteq \mathcal D'$ with  $(W,W') \in \bbMCIS(\mathcal L(G), \mathcal L(G'))$. We adopt the same notation as in Lemma \ref{lem:subgraphs}. By the definition of \textbf{MCIS}, $J$ and $J'$ are isomorphic $W$ and $W'$ node-induced subgraphs of $\mathcal L(G)$ and $\mathcal L(G')$, respectively. 
	By Lemma~\ref{lem:subgraphs}, the isomorphism between $J$ and $J'$ exists if and only if there is an isomorphism between the $W$ and $W'$ edge-induced subgraphs $H$ and $H'$ of $G$ and $G'$,  respectively. Therefore $(W,W') \in a\bbDMCES(G,G')$ if and only if $(W,W') \in \bbMCIS(\mathcal L(G), \mathcal L(G'))$.
	For both the {\bf MCIS} and  {\bf aDMCES} problems a feasible solution $(W,W')$ is a solution if there are no other feasible solutions $(T,T')$ for which $|T| > |W|$. Thus a feasible solution to {\bf MCIS} is a solution if and only if it is a solution to {\bf aDMCES}, which implies $\aDMCES(G,G') = \MCIS(\mathcal{L}(G),\mathcal{L}(G'))$.
\end{proof}

 A well-known graph distance based on \textbf{MCIS} that satisfies the properties of a metric is given in Theorem~\ref{MCISmetric}.

\begin{theorem}[Bunke and Shearer~\cite{MCISmetric}]\label{MCISmetric}
Let $\bar {\mathscr G}$ be the set of all labeled digraphs and let $G = (V,\mathcal D, \ell_v, \ell_e)$ and $G' = (V',\mathcal D', \ell_v', \ell_e')$ be any two elements of $\bar {\mathscr G}$. Then
$$d_n: \bar {\mathscr G} \times \bar {\mathscr  G} \to [0,1]$$
defined by
\begin{equation} \label{equ:mcis_metric}
d_n(G,G') = 1  - \frac{\text{MCIS}(G,G')}{\text{max}(|V|,|V'|)}
\end{equation}
is a metric on ${\mathscr G}$. 
\end{theorem}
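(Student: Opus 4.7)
The plan is to verify the four metric axioms for $d_n$, working on $\bar{\mathscr G}$ modulo isomorphism so that distinct isomorphism classes count as distinct points. Non-negativity and the bound $d_n \le 1$ follow from the trivial inequality $\MCIS(G,G') \le \min(|V|,|V'|) \le \max(|V|,|V'|)$. For identity of indiscernibles, $d_n(G,G') = 0$ forces $\MCIS(G,G') = \max(|V|,|V'|)$; combined with $\MCIS(G,G') \le \min(|V|,|V'|)$ this yields $|V| = |V'| = \MCIS(G,G')$, so the node-induced subgraphs witnessing the maximum are all of $G$ and all of $G'$, whence $G \cong G'$. Symmetry is immediate from the symmetry of both $\MCIS$ and $\max$.

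The main obstacle is the triangle inequality, and I would reduce it to a key combinatorial bound: for any three labeled digraphs $G_1,G_2,G_3$ with $n_i = |V_i|$ and $m_{ij} := \MCIS(G_i,G_j)$,
\[ m_{13} \;\ge\; m_{12} + m_{23} - n_2. \]
To establish it, let $\psi_{12}\colon U_{12} \to U_{12}'$ and $\psi_{23}\colon U_{23} \to U_{23}'$ be isomorphisms of node-induced subgraphs realizing $m_{12}$ and $m_{23}$, with $U_{12}' \subseteq V_2$ and $U_{23} \subseteq V_2$. Inclusion--exclusion inside $V_2$ gives $|U| \ge m_{12} + m_{23} - n_2$ for $U := U_{12}' \cap U_{23}$. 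The restrictions of $\psi_{12}$ and $\psi_{23}$ to the preimages of $U$ remain label-respecting isomorphisms of node-induced subgraphs (restricting a node-induced isomorphism to a subset of the domain preserves the property), so the composition $\psi_{23} \circ \psi_{12}^{-1}\bigl|_U$ is an isomorphism between the node-induced subgraphs of $G_1$ and $G_3$ on $\psi_{12}^{-1}(U)$ and $\psi_{23}(U)$, giving $m_{13} \ge |U|$.

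With the bound in hand, the triangle inequality reduces to showing
\[ \frac{m_{12}}{\max(n_1,n_2)} + \frac{m_{23}}{\max(n_2,n_3)} \;\le\; 1 + \frac{m_{13}}{\max(n_1,n_3)}. \]
I would split on whether $m_{12} + m_{23} \le n_2$. In the easy regime, the left-hand side is at most $(m_{12} + m_{23})/n_2 \le 1$ since $\max(n_i,n_2) \ge n_2$, and the claim follows from $m_{13} \ge 0$. In the hard regime the intersection bound gives $m_{13} \ge m_{12} + m_{23} - n_2 > 0$, and a case analysis on which of $n_1,n_2,n_3$ is largest (three cases after exploiting the $n_1 \leftrightarrow n_3$ symmetry) reduces the goal, after substituting the intersection bound and clearing denominators, to a factored inequality of the form $(n_i - n_j)(m_{kl} - n_p) \le 0$; the sign of the first factor is fixed by the case assumption and the sign of the second by $m_{kl} \le \min(n_k,n_l)$.

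The hard part is precisely this final bookkeeping: because the three denominators $\max(n_i,n_j)$ need not coincide, one cannot combine the intersection inequality with the target inequality by a single symmetric algebraic manipulation, and the case analysis appears unavoidable without a strictly sharper combinatorial lemma.
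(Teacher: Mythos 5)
This theorem is not proved in the paper at all: it is imported from Bunke and Shearer with a citation, so there is no in-paper argument to compare against. Your proof is essentially the original Bunke--Shearer argument, and it is correct. Non-negativity, symmetry, and identity of indiscernibles are routine and you handle them properly, including the necessary convention of working modulo isomorphism (without which $d_n(G,G')=0$ for non-identical isomorphic graphs would violate the axiom). The heart of the matter is the bound $m_{13}\ge m_{12}+m_{23}-n_2$, and your justification is sound precisely because \emph{node-induced} subgraphs are closed under further node-induced restriction, so the two optimal isomorphisms can be restricted to the intersection $U_{12}'\cap U_{23}$ inside $V_2$ and composed; this is exactly the property that fails for edge-induced subgraphs and is the reason the paper routes $\DMCES$ through the extended line digraph and $\MCIS$ rather than attacking a triangle inequality for edges directly. (A minor notational quibble: the composite you want is $\psi_{23}|_U\circ\psi_{12}|_{\psi_{12}^{-1}(U)}$; as written, $\psi_{23}\circ\psi_{12}^{-1}|_U$ does not typecheck, though the domain and codomain you name are the right ones.) The final case analysis that you only sketch does close: writing $s:=m_{12}+m_{23}>n_2$ and assuming $n_1\ge n_3$ by symmetry, the ordering $n_2\ge n_1\ge n_3$ reduces after clearing denominators to $(n_1-n_2)(s-n_2)\le 0$, the ordering $n_1\ge n_2\ge n_3$ to $(n_1-n_2)(m_{23}-n_2)\le 0$, and the ordering $n_1\ge n_3\ge n_2$ to $m_{23}(n_1-n_3)\le n_3(n_1-n_2)$, which follows from $m_{23}\le n_2\le n_3$ in two elementary steps; so your description of the remaining bookkeeping is accurate and the argument is complete in outline.
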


This metric is sufficient to show that Definition~\ref{def:metric} is a metric based on DMCES$(G,G')$.

\begin{theorem}\label{thm:metric}
Let $G = (V,\mathcal D, \ell_v, \emptyset)$ and $G' = (V',\mathcal D', \ell_v', \emptyset)$ be any two elements of $\mathscr G_{\mathbf{WSO}} \subset {\mathscr G} \subset \bar{\mathscr G}$, the set of all node-labeled digraphs satisfying \textbf{(W)}, \textbf{(S)}, and \textbf{(O)}. Let 
$$d_e: \mathscr G_{\mathbf{WSO}}  \times \mathscr G_{\mathbf{WSO}}  \to [0,1] $$
$$d_e(G,G') = 1  - \frac{\text{DMCES}(G,G') }{\text{max}\big(|\mathcal D|,|\mathcal D'|\big)}.$$ 
Then $d_e$ is a metric on $\mathscr G_{\mathbf{WSO}}$.
\end{theorem}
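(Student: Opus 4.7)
The plan is to pull the entire statement back to Bunke and Shearer's metric on $\bar{\mathscr G}$ via the extended line digraph. The key observation is that for any $G=(V,\mathcal D,\ell_v,\emptyset)\in \mathscr G_{\mathbf{WSO}}$, the extended line digraph $\mathcal L(G)=(\mathcal D,\mathcal D_L,\bar\ell_v,\bar\ell_e)$ has exactly $|\mathcal D|$ nodes. Combining Theorem~\ref{thm:DMCEStoaDMCES} (which requires \textbf{(S)}) with Lemma~\ref{lem:DMCEStoMCISsolutions} yields
\[
\DMCES(G,G') \;=\; \aDMCES(G,G') \;=\; \MCIS(\mathcal L(G),\mathcal L(G')),
\]
and therefore
\[
d_e(G,G') \;=\; 1-\frac{\MCIS(\mathcal L(G),\mathcal L(G'))}{\max(|\mathcal D|,|\mathcal D'|)} \;=\; d_n(\mathcal L(G),\mathcal L(G')).
\]
So the idea is to show that $d_e$ inherits all metric axioms from $d_n$ by way of the map $\mathcal L$.

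First I would verify that $d_e$ is well defined: since $0 \le \DMCES(G,G') \le \max(|\mathcal D|,|\mathcal D'|)$, the value lies in $[0,1]$. Non-negativity and symmetry then follow immediately from the corresponding properties of $d_n$ established in Theorem~\ref{MCISmetric}. For the triangle inequality, given any three graphs $G_1,G_2,G_3 \in \mathscr G_{\mathbf{WSO}}$, I would apply the triangle inequality for $d_n$ to the three extended line digraphs $\mathcal L(G_1),\mathcal L(G_2),\mathcal L(G_3)$, yielding
\[
d_e(G_1,G_3) = d_n(\mathcal L(G_1),\mathcal L(G_3)) \le d_n(\mathcal L(G_1),\mathcal L(G_2)) + d_n(\mathcal L(G_2),\mathcal L(G_3)) = d_e(G_1,G_2) + d_e(G_2,G_3).
\]

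The subtle step, and the one that requires the Isomorphism Theorem, is identity of indiscernibles: $d_e(G,G')=0$ if and only if $G\cong G'$. By Theorem~\ref{MCISmetric}, $d_n(\mathcal L(G),\mathcal L(G'))=0$ holds precisely when $\mathcal L(G)\cong \mathcal L(G')$, which by Isomorphism Theorem~\ref{thm:iso_iff} is equivalent to $G\cong G'$ on the class $\mathscr G_{\mathbf{WSO}}$. Conversely, for the reflexivity $d_e(G,G)=0$, I would note that the identity feasible solution $(V,\mathrm{id})$ realizes $\DMCES(G,G)=|\mathcal D|$. The main conceptual obstacle is ensuring the identity-of-indiscernibles step is sound, which is exactly where the hypotheses \textbf{(W)}, \textbf{(S)}, \textbf{(O)} are needed, since Isomorphism Theorem~\ref{thm:iso_iff} fails in general outside $\mathscr G_{\mathbf{WSO}}$; everything else is a direct transport of the metric structure along the well-defined map $\mathcal L$.
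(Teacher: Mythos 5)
Your proposal is correct and follows essentially the same route as the paper: both identify $d_e(G,G') = d_n(\mathcal L(G),\mathcal L(G'))$ via Theorem~\ref{thm:DMCEStoaDMCES} and Lemma~\ref{lem:DMCEStoMCISsolutions}, and then invoke the Isomorphism Theorem~\ref{thm:iso_iff} to guarantee that $d_e(G,G')=0$ forces $G\cong G'$, transporting the remaining axioms from Theorem~\ref{MCISmetric}. Your write-up simply makes the triangle-inequality and identity-of-indiscernibles steps more explicit than the paper does.
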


\begin{proof}
From Theorem~\ref{thm:DMCEStoaDMCES} and Lemma~\ref{lem:DMCEStoMCISsolutions} we have that 
\[\text{DMCES}(G,G') = \text{aDMCES}(G,G')  = \text{MCIS}(\mathcal L(G),\mathcal L(G')). \] 
Moreover, the edges of $G$ and $G'$ are the nodes of $\mathcal L(G)$ and $\mathcal L(G')$.
It follows that $$d_e( G, G') = d_n(\mathcal L(G) ,\mathcal L(G')).$$ Since $G \cong G'$ if and only if $\mathcal L(G) \cong \mathcal L(G')$ by Theorem~\ref{thm:iso_iff}, $d_e$ inherits the properties of reflexivity, symmetry, and triangle inequality from $d_n$.
\end{proof}

\section{Reduction to the maximum clique finding problem}

The metric $d_e$ can be computed using an algorithm for the maximum clique finding problem. A \textit{clique} is a set of nodes that induces a complete subgraph of $G$, a \textit{maximal clique} is a clique that cannot be made larger by the addition of any node, and a \textit{maximum clique} is a maximal clique with the largest size in a graph. It is a well-known fact that \textbf{MCIS} can be reduced to the maximum clique finding problem~\cite{cliques}. We will step through the procedure as it applies to $\mathcal L(G)$ and $\mathcal L(G')$ to show that \textbf{DMCES} can also be reduced to the maximum clique finding problem.

\begin{definition}

	Consider the extended line digraphs $\mathcal L(G) = (\mathcal D, \mathcal D_L, \bar \ell_v,\bar \ell_e)$, $\mathcal L(G') = (\mathcal D', \mathcal D'_L, \bar \ell'_v,\bar \ell'_e)$. 
	Define the \emph{compatibility graph of $\mathcal L (G)$  and  $\mathcal L (G')$}  as an unlabeled undirected graph \[C(\mathcal L(G),\mathcal L(G')) = (U, \mathcal E, \emptyset,\emptyset),\] where the nodes $U$ are a collection of pairs of nodes in $\mathcal L(G)$ and $\mathcal L(G')$ with matching labels, i.e.\
	$$U = \{(n,n') \in \mathcal D \times \mathcal D'\mid  \bar \ell_v(n) = \bar \ell'_v(n')\}.$$
	 The edges $\mathcal E$ are a collection of pairs of nodes in $C(\mathcal L(G),\mathcal L(G'))$ that satisfy a label- and edge-matching condition; specifically,
 $\big \{ (n,n'), (m,m') \big\} \in \mathcal E$  if and only if either  
 \begin{itemize}
 \item $(n,m) \in \mathcal D_L \text{ and } (n',m') \in \mathcal D'_L \text{ with } \bar\ell_e((n,m)) = \bar\ell'_e((n',m')) $ or
 \item $(n,m) \not \in \mathcal D_L \text{ and } (n',m') \not \in \mathcal D'_L$.
 \end{itemize}

\end{definition}

Maximum cliques in $C(\mathcal L(G),\mathcal L(G'))$ correspond to maximum node-induced subgraphs of $\mathcal L(G)$ and $\mathcal L(G')$~\cite{cliques}, which by Definition~\ref{def:MCIS} are solutions to \textbf{MCIS} applied to $\mathcal L(G)$ and $\mathcal L(G')$.  A demonstration of how the cliques in the compatibility graph determine a subgraph isomorphism is given in Figure \ref{fig:compatibility}.

\begin{theorem}[Barrow and Burstall~\cite{cliques}]\label{thm:barrow} Let $G$ and $G'$ be labeled graphs. If $C(G,G')$ has a maximum clique of size $N$ then
$$\MCIS(G,G') = N.$$

\end{theorem}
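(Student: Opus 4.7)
The plan is to exhibit a size-preserving bijection between the cliques of $C(G,G')$ and the feasible solutions in $\bbMCIS(G,G')$, from which the theorem follows by taking the maximum on each side. The hinge observation is that adjacency in $C$ is, by construction, precisely the local compatibility condition needed to extend a partial node-pairing to an isomorphism of node-induced subgraphs.

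In the forward direction, let $K = \{(n_1, n_1'), \ldots, (n_N, n_N')\}$ be a clique of size $N$ in $C(G,G')$. I would set $U = \{n_i\}_{i=1}^N$ and $U' = \{n_i'\}_{i=1}^N$, and define $\phi: U \to U'$ by $\phi(n_i) = n_i'$. The first task is to check that $\phi$ is a well-defined bijection: no two clique members may share a first (or second) coordinate, for otherwise the edge-matching clauses in the definition of $C$, together with the simplicity of $G$ and $G'$ (no self-loops), would be violated. Next, $\phi$ respects node labels by the very defining condition on the node set of $C$. Finally, for any distinct $i, j$, the adjacency of $(n_i, n_i')$ and $(n_j, n_j')$ in $C$ forces the edge/non-edge pattern and the edge labels on $(n_i, n_j)$ and $(n_i', n_j')$ to coincide, so $\phi$ is an isomorphism between the $U$-induced subgraph of $G$ and the $U'$-induced subgraph of $G'$. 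Hence $(U, U') \in \bbMCIS(G,G')$ with $|U| = N$, and therefore $\MCIS(G,G') \geq N$.

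Conversely, given $(U, U') \in \bbMCIS(G,G')$ of size $M$ with isomorphism $\psi: U \to U'$, I would form $K := \{(n, \psi(n)) : n \in U\}$. Each pair lies in the node set of $C$ because $\psi$ respects labels, and for distinct $n, m \in U$ the isomorphism property of $\psi$ yields exactly the label- and edge-matching condition defining an edge in $C$; hence $K$ is a clique of size $M$. Applied to a maximum feasible MCIS solution, this gives $\MCIS(G,G') \leq N$, which combined with the forward direction closes the equality $\MCIS(G,G') = N$.

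The step I expect to be the main obstacle is verifying that $\phi$ is well defined in the forward direction, since the definition of $C$ in the paper does not explicitly forbid edges between pairs that share a coordinate; making this watertight requires leveraging both the simplicity assumption and the ``both-an-edge-or-both-a-non-edge'' clause in the edge condition of $C$ to rule out clique members with repeated first or second entries.
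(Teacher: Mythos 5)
The paper does not actually prove this theorem; it is quoted from Barrow and Burstall, so there is no internal proof to compare against. Your clique-to-isomorphism bijection is the standard argument for this result, and your backward direction (from a feasible \textbf{MCIS} solution to a clique) is fine. The problem is exactly the step you flag as the main obstacle, and your proposed resolution of it is backwards. Suppose a clique contains $(n,n')$ and $(n,m')$ with $n'\neq m'$. For these to be adjacent in $C(G,G')$, the paper's definition requires either ($(n,n)\in\mathcal D_L$ and $(n',m')\in\mathcal D'_L$ with matching labels) or ($(n,n)\notin\mathcal D_L$ and $(n',m')\notin\mathcal D'_L$). Simplicity kills the first clause, but it makes the first half of the second clause hold automatically, so the two nodes \emph{are} adjacent whenever $(n',m')$ is a non-edge of $G'$. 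Concretely, if $G$ is a single edge $a\to b$ and $G'$ consists of three same-labeled isolated vertices $a',b',c'$, then $\{(a,a'),(a,b'),(a,c')\}$ is a clique of size $3$ in the literal compatibility graph while $\MCIS(G,G')=1$. So simplicity together with the both-non-edges clause cannot rule out repeated coordinates; under the paper's literal definition the statement is false, and no argument can close this gap.

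The repair lies not in your argument but in the definition: the original Barrow--Burstall (Levi) construction additionally requires $n\neq m$ and $n'\neq m'$ for $\{(n,n'),(m,m')\}$ to be an edge of the compatibility graph, i.e., adjacency means the two pairs are jointly extendable as a partial matching. With that clause restored, well-definedness and injectivity of $\phi$ are immediate, and the rest of your forward direction --- label preservation from the node-set condition of $C$, and the edge/non-edge and edge-label matching from adjacency in $C$ --- correctly yields $(U,U')\in\bbMCIS(G,G')$ with $|U|=N$; combined with your converse this gives the stated equality. You should state explicitly that you are using the corrected definition, since otherwise the ``main obstacle'' you identify is not merely hard to make watertight: it is unfixable.
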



\begin{theorem}
\textbf{DMCES} over node-labeled digraphs satisfying \textbf{(W)}, \textbf{(S)}, and \textbf{(O)} can be reduced to the maximum clique finding problem with time polynomial in the number of nodes of $G $ and $G'$.
\end{theorem}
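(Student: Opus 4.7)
The plan is to chain together the three reductions that have already been set up in the paper and then argue that each step of the chain can be carried out in time polynomial in $|V|$ and $|V'|$. Specifically, given input digraphs $G=(V,\mathcal D,\ell_v,\emptyset)$ and $G'=(V',\mathcal D',\ell_v',\emptyset)$ in $\mathscr G_{\mathbf{WSO}}$, I would reduce in sequence
\[ \DMCES(G,G') \;\longrightarrow\; \aDMCES(G,G') \;\longrightarrow\; \MCIS(\mathcal L(G),\mathcal L(G')) \;\longrightarrow\; \text{Max-Clique}\bigl(C(\mathcal L(G),\mathcal L(G'))\bigr), \]
invoking Theorem~\ref{thm:DMCEStoaDMCES} for the first arrow, Lemma~\ref{lem:DMCEStoMCISsolutions} for the second, and Theorem~\ref{thm:barrow} (applied to $\mathcal L(G)$ and $\mathcal L(G')$) for the third. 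Each link in this chain is an equality of optimal values, so composing them exhibits $\DMCES(G,G')$ as the size of a maximum clique in the compatibility graph $C(\mathcal L(G),\mathcal L(G'))$.

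The remaining content of the proof is the polynomial-time bound on building the intermediate graphs. First I would note that the extended line digraph $\mathcal L(G)=(\mathcal D, \mathcal D_L, \bar\ell_v,\bar\ell_e)$ has $|\mathcal D|\leq |V|^2$ nodes and at most $|\mathcal D|^2 \leq |V|^4$ edges; constructing it requires, for each ordered pair of edges of $G$, a constant-time check of which endpoint they share (producing at most one of the labels $ht$, $tt$, $hh$), together with the trivial copy of $\ell_v$ onto $\bar\ell_v$. An identical bound applies to $\mathcal L(G')$. Next, the compatibility graph $C(\mathcal L(G),\mathcal L(G'))$ has at most $|\mathcal D|\cdot|\mathcal D'|\leq |V|^2|V'|^2$ nodes; for each unordered pair of nodes $\{(n,n'),(m,m')\}$ we check in constant time whether $(n,m)\in\mathcal D_L$, whether $(n',m')\in\mathcal D'_L$, and (if both hold) whether their edge labels agree, so the compatibility graph is produced in time $O(|V|^4|V'|^4)$. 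Hence the entire reduction runs in polynomial time.

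The only subtle point is that the chain is valid only under the hypothesis $G,G'\in\mathscr G_{\mathbf{WSO}}$: Theorem~\ref{thm:DMCEStoaDMCES} needs \textbf{(S)}, and Lemma~\ref{lem:DMCEStoMCISsolutions} (which in turn leans on the Isomorphism Theorem~\ref{thm:iso_iff}) needs \textbf{(W)}, \textbf{(S)}, and \textbf{(O)}. I would state this clearly at the start of the proof so the reader can see that the assumed class is exactly the one in which all three prior results apply simultaneously. No real obstacle arises beyond bookkeeping; the substantive mathematical work has already been done in Sections~\ref{sec:extlinedigraph} and~\ref{sec:metric}, and this theorem is essentially an algorithmic corollary of them together with Theorem~\ref{thm:barrow}.
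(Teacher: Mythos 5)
Your proposal is correct and follows essentially the same route as the paper's own proof: chain $\DMCES \to \aDMCES \to \MCIS(\mathcal L(G),\mathcal L(G')) \to$ maximum clique via Theorem~\ref{thm:DMCEStoaDMCES}, Lemma~\ref{lem:DMCEStoMCISsolutions}, and Theorem~\ref{thm:barrow}, then observe that $\mathcal L(G)$, $\mathcal L(G')$, and the compatibility graph are all constructible by iterating over polynomially many pairs. Your explicit size bounds on $\mathcal L(G)$ and $C(\mathcal L(G),\mathcal L(G'))$ are a slightly more detailed version of the paper's polynomial-time argument, but the substance is the same.
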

\begin{proof}
    From Theorem~\ref{thm:barrow}, a solution to $\MCIS(\mathcal L(G),\mathcal L(G'))$ can be computed by solving the maximum clique finding problem over $C(\mathcal L(G),\mathcal L(G'))$. From Theorem~\ref{thm:DMCEStoaDMCES} and Lemma~\ref{lem:DMCEStoMCISsolutions}, $\DMCES(G,G') = \MCIS(\mathcal L(G),\mathcal L(G'))$. So \textbf{DMCES} can be solved using the maximum clique finding problem.

    Given a node-labeled digraph $G$ satisfying \textbf{(W)}, \textbf{(S)}, and \textbf{(O)}, the construction of $\mathcal L(G)$ can be done by iterating over all pairs of edges in $G$ and determining for each  pair its  adjacency type.
Since the number of pairs of edges is polynomial in number of vertices, the construction of $\mathcal L(G)$  can be done in polynomial time. Similarly, nodes in $C(\mathcal L(G),\mathcal L(G'))$ can be computed by iterating over all edges of  $\mathcal L(G)$ and $\mathcal L(G')$, and edges of $C(\mathcal L(G),\mathcal L(G'))$ can be computed by iterating over all pairs of nodes in $C(\mathcal L(G),\mathcal L(G'))$. Therefore
 $C(\mathcal L(G),\mathcal L(G'))$ can be calculated in polynomial time as well. 
\end{proof}

 We have shown that {\bf DMCES} with input $(G,G')$ can be reduced to {\bf MCIS}  with input $(\mathcal L(G), \mathcal L(G'))$  and that can be, in turn, reduced to  the maximum clique finding problem on $C(\mathcal L(G), \mathcal L(G'))$ in polynomial time.  
Many methods for solving the {\bf MCES} problem \cite{RASCAL}, which we briefly mentioned in the introduction of {\bf DMCES}, first formulate {\bf MCES} as a maximum clique finding problem and then compute the solution using well known algorithms. 
Our results show that efficient algorithms for computing the maximum clique finding problem could be leveraged to compute the  {\bf DMCES} problem. 
However, since the size of the compatibility graph and the size of the corresponding maximal clique finding problem is proportional to the product $|\mathcal D| \cdot |\mathcal D'|$ these methods are less efficient for dense graphs.  
We address this issue in the next section.

\begin{figure}[h] \label{fig:compatibility}
	\centering
		\centering
		\begin{tikzpicture}[main node/.style={circle, draw, inner sep=1pt,minimum size=22pt},fill=white, scale=.8,node distance=0.8cm]
		\tikzstyle{every loop}=[looseness=14]
		
		\node[main node, black] (A) at (1,0) {$w$} ;
		\node[main node, dashed] (B) at (0,{2*sin(60)}) {$u$} ;
		\node[main node, dashed] (C) at (2,{2*sin(60)}) {$v$} ;
		\node[] (G) at (1,2.7) {$G$} ;
		
		\draw[very thick,-,shorten >= 3pt,shorten <= 3pt]
		(A) edge[->] (B)
		(B) edge[->] (C)
		(C) edge[<-] (A);
		\end{tikzpicture}
		\qquad
		\begin{tikzpicture}[main node/.style={circle, draw, inner sep=1pt,minimum size=22pt},fill=white, scale=.8,node distance=0.8cm]
		\tikzstyle{every loop}=[looseness=14]
		
		\node[main node] (A) at (1,0) {$w'$} ;
		\node[main node, dashed] (B) at (0,{2*sin(60)}) {$u'$} ;
		\node[main node, dashed] (C) at (2,{2*sin(60)}) {$v'$} ;
		\node[] (G) at (1,2.7) {$G'$} ;
		
		\draw[very thick,-,shorten >= 3pt,shorten <= 3pt]
		(A) edge[->] (B)
		(B) edge[->] (C)
		(C) edge[<-] (A);
		\end{tikzpicture}
		
				\begin{tikzpicture}[main node/.style={circle, draw, inner sep=1pt,minimum size=19pt},fill=white, scale=1.5,node distance=0.8cm]
		\tikzstyle{every loop}=[looseness=14]
		
		\node[main node] (A) at (1,0) {$(w,w')$} ;
		\node[main node, dashed] (B) at (0,{2*sin(60)}) {$(u,u')$} ;
		\node[main node, dashed] (a) at (4,{2*sin(60)}) {$(u,v')$} ;
		\node[main node, dashed] (b) at (-2,{2*sin(60)}) {$(v,u')$} ;
		\node[main node, dashed] (C) at (2,{2*sin(60)}) {$(v,v')$} ;
		\node[] (G) at (1,2.5) {$C(G,G')$} ;
		
		\draw[very thick,-,shorten >= 3pt,shorten <= 3pt]
		(A) edge[] (B)
		(B) edge[] (C)
		(C) edge[] (A)
		(A) edge[] (a)
		(A) edge[] (b);
		\end{tikzpicture}
		\caption{Two graphs $G$ and $G'$ along with their compatibility graph. The size three clique in the compatibility graph gives an isomorphism $\phi$ between $G$ and $G'$ where $\phi(u) = u'$,$\phi(v) = v'$ and $\phi(w) = w'$. Nodes of the same border pattern share the same label. }\label{}
\end{figure}
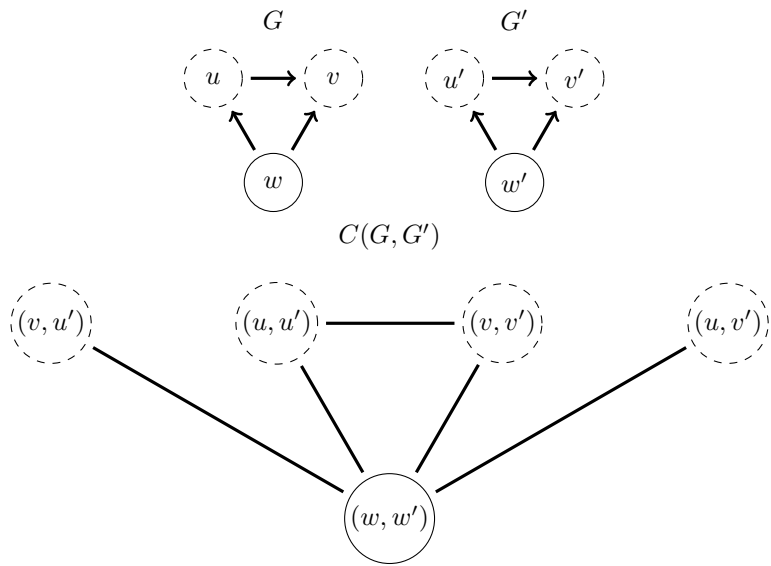



\section{Maximal cardinality solutions}

Throughout this and the following section we will use the definition of \textbf{DMCES} given by Definition \ref{def:DMCES}. Recall that a feasible solution to \textbf{DMCES} for two digraphs $G = (V, \mathcal D, \ell_v,\emptyset)$ and $G' = (V',\mathcal D',\ell'_v,\emptyset)$ is an ordered pair $(U,\phi)$, where $U \subseteq V$ and $\phi: U \to V'$ is injective and respects labels, and the set of such feasible solutions is $\bbDMCES(G,G')$. The score of each feasible solution is given by $\mathcal P (U,\phi)$, as in Equation~\eqref{eq:score}. A solution to \textbf{DMCES} is some $(U,\phi) \in \bbDMCES(G,G')$ such that $\mathcal P (U,\phi)$ is maximized, i.e. $\mathcal P (U,\phi) = \DMCES(G,G')$.

\begin{definition}
	We say a feasible solution $(U,\phi) \in \bbDMCES(G,G')$ is a \emph{maximal cardinality solution (to \textbf{DMCES})} if $\mathcal P (U,\phi) = \DMCES(G,G')$, and for all $(T,\psi) \in \bbDMCES(G,G')$, $|T| \leq |U|$.
	\end{definition}

\begin{theorem}\label{thr:size}

Let $G = (V, \mathcal D, \ell_v,\emptyset)$ and $G' = (V', \mathcal D', \ell'_v,\emptyset)$ be node-labeled digraphs satisfying \textbf{(W)}, \textbf{(S)}, and \textbf{(O)}. Then there exists a maximal cardinality solution to \textbf{DMCES}.
\end{theorem}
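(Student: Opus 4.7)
The plan is to prove existence constructively by starting from any optimal solution and iteratively enlarging it until no further enlargement is possible, then verifying that the terminal object actually has maximal cardinality over the entire set $\bbDMCES(G,G')$. First, I would note that $\bbDMCES(G,G')$ is non-empty (the empty pair $(\emptyset,\phi)$ is trivially feasible) and finite (since $V$ and $V'$ are finite by \textbf{(F)}), so $\DMCES(G,G')$ is attained by some $(U_0,\phi_0)$.

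The main step is the following extension lemma: if $(U,\phi)$ is optimal and there exist $v \in V \setminus U$ and $v' \in V' \setminus \phi(U)$ with $\ell_v(v) = \ell_v'(v')$, then defining $\tilde U := U \cup \{v\}$ and $\tilde\phi := \phi \cup \{(v,v')\}$ yields another optimal feasible solution of strictly larger cardinality. Feasibility is immediate: $\tilde\phi$ is still a label-respecting injection. Using assumption \textbf{(S)} (no self-loops, so $\epsilon(v,v) = 0$), the score change is
\[
\mathcal P(\tilde U,\tilde\phi) - \mathcal P(U,\phi)
= \sum_{u \in U}\bigl[\epsilon(v,u)\epsilon'(v',\phi(u)) + \epsilon(u,v)\epsilon'(\phi(u),v')\bigr] \geq 0.
\]
Because $\mathcal P(U,\phi) = \DMCES(G,G')$ already realises the maximum, this difference must be exactly zero, so $(\tilde U, \tilde\phi)$ is optimal as well.

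Starting from $(U_0,\phi_0)$ and applying the extension lemma repeatedly produces a chain of optimal feasible solutions of strictly increasing cardinality; since cardinality is bounded by $\min(|V|,|V'|)$, the chain terminates in an optimal solution $(U^*,\phi^*)$ admitting no further extension. Termination means that for every label $\ell$, either every $\ell$-labeled node of $V$ lies in $U^*$, or every $\ell$-labeled node of $V'$ lies in $\phi^*(U^*)$; writing $V_\ell$ and $V'_\ell$ for the label-preimages, the number of $\ell$-labeled elements of $U^*$ therefore equals $\min(|V_\ell|,|V'_\ell|)$.

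Finally, I would close the argument by showing that $|U^*|$ dominates every feasible cardinality. For any $(T,\psi) \in \bbDMCES(G,G')$, restricting $\psi$ to the $\ell$-labeled nodes in $T$ gives a label-respecting injection into $V'_\ell$, so the number of $\ell$-labeled elements of $T$ is at most $\min(|V_\ell|,|V'_\ell|)$; summing over $\ell$ yields $|T| \leq |U^*|$. Hence $(U^*,\phi^*)$ is a maximal cardinality solution. The only step that requires care is the last one, since the extension procedure only certifies local maximality along its own chain; the label-counting bound is what promotes this to a global maximum over all of $\bbDMCES(G,G')$. Properties \textbf{(W)} and \textbf{(O)} are not used directly in this argument, but are inherited from the ambient class $\mathscr G_{\mathbf{WSO}}$.
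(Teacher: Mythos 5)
Your proof is correct and follows essentially the same strategy as the paper's: both rest on the per-label bound $\min(|V_\ell|,|V'_\ell|)$ for the number of $\ell$-labeled nodes in any feasible solution, together with the observation that enlarging an optimal $(U,\phi)$ by unmatched, like-labeled nodes cannot decrease the score (a sum of nonnegative terms), hence preserves optimality. The only difference is cosmetic — you grow the solution one node at a time and read the bound off the terminal state, whereas the paper defines the bound $N(G,G')=\sum_a \min(|\ell_v^{-1}(a)|,|\ell_v'^{-1}(a)|)$ up front and performs the extension in a single step.
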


\begin{proof}
	First we determine the maximal  value of $|U|$ for any $(U,\phi) \in \bbDMCES(G,G')$. Let $a$ be a node label, and let $U \subseteq V$. Define $U_a := \{v \in U \mid \ell_v(v) = a\}$, i.e.\ all nodes in $U$ which have label $a$. Now define
\[ N_a(G,G') :=  \min\left\{ |\ell_v^{-1}(a)|, |\ell_v'^{-1}(a)| \right\} \]
We claim for all $(U,\phi) \in \bbDMCES(G,G')$, $|U_a| \leq  N_a(G,G')$. To see this, note $U_a = \ell_v^{-1}(a) \cap U$, so clearly $|U_a| \leq |\ell_v^{-1}(a)|$. Also, $\phi$ is an injection which respects labels, so 
\[|U_a| = |\phi(U_a)| = |\phi(U) \cap \ell_v'^{-1}(a)| \leq |\ell_v'^{-1}(a)| \]
implying 
\[ |U_a| \leq \min\left\{ |\ell_v^{-1}(a)|, |\ell_v'^{-1}(a)| \right\} =  N_a(G,G'). \]
  
  To continue the main argument we  observe that, as $U$ is a disjoint union of $U_a$,
\[ U = \bigcup_{a \in \ell_v(U)}U_a \quad \Rightarrow \quad |U| = \sum_{a \in \ell_v(U)} |U_a|. \]
We use this to obtain a bound on $|U|$, denoted $N(G,G')$, given by 
\begin{align*}
 |U| &= \sum_{a \in \ell_v(U)} |U_a| \\
 &\leq  \sum_{a \in \ell_v(U)}  N_a(G,G') \\
 &\leq  \sum_{a \in \ell_v(V)}  N_a(G,G') \\
 & =: N(G,G')
 \end{align*}

We observe that this bound holds for any feasible solution. Then, for all $ (U,\phi) \in \bbDMCES(G,G')$,  $|U| \leq N(G,G') $. Next we prove the following claim:
\[ \exists (U,\phi) \in \bbDMCES(G,G') \text{ such that } \mathcal P (U,\phi) = \DMCES(G,G') \text{ and } |U| = N(G,G')  .\]

Let $(\bar U, \bar \phi) \in \bbDMCES(G,G')$ be any feasible solution such that $\mathcal P (\bar U,\bar \phi) = \DMCES(G,G')$. 
Suppose $|\bar U|< N(G,G')$. We construct a feasible solution with the desired properties as follows. Define a $U \supset \bar U$ such that for each label $a \in \ell_v(V)$, $U$ contains $N_a(G,G')$ vertices with label $a$. Such a $U$ exists from the definition of $N_a(G,G')$. 
We first observe that $| U|=N(G,G')$. We  extend $\bar \phi$ to $\phi$ in such a way that  the restriction $\phi|_{\bar U} = \bar \phi$ and $ \phi$ is an injection which respects labels. 
This extension is possible, because the definition of $N(G,G')$ and our construction ensures that for each $a \in \ell_v(V)$, $|U_a| \leq |\ell_v'^{-1}(a)|$, so there is an injection $U_a \to V'$ that respects labels. We can then assemble these injections piecewise.

Finally, note that 
\[ \mathcal P(U,  \phi) \geq \mathcal P(\bar U,\bar \phi) \]
because $\bar U \subset  U$, $\phi|_{\bar U} =  \bar \phi$, and $\mathcal P$ is a sum of nonnegative terms, see Equation~\eqref{eq:score}. Since $\mathcal P (\bar U, \bar \phi) = \DMCES(G,G')$ it follows that 
$ \mathcal P( U,  \phi) = \mathcal P( \bar U, \bar \phi)$. Therefore $( U,  \phi)$ is the solution advertised in the Theorem.
\end{proof}
A simple example of how Theorem \ref{thr:size} applies is given in Figure \ref{fig:Cardinality example}. 
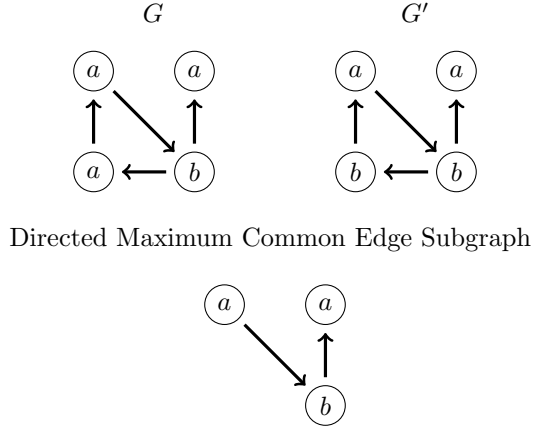
\begin{figure}[H] \label{fig:Cardinality example}
\begin{center}
\begin{tikzpicture}[main node/.style={circle, draw, inner sep=1pt,minimum size=15pt}, scale=1,node distance=.8cm]

\node [main node] (n1) {$a$};
\node [main node, right=of n1] (n2) {$a$};
\node [main node, below=of n1] (n3) {$a$};
\node [main node, right=of n3] (n4) {$b$};

\node [main node, position=0:{1.6cm} from n2] (n5) {$a$};
\node [main node, right=of n5] (n6) {$a$};
\node [main node, below=of n5] (n7) {$b$};
\node [main node, below=of n6] (n8) {$b$};

\node [position=0:{.4cm} from n1] (g) {};
\node [position=90:{.4cm} from g] (G) {$G$};

\node [position=0:{.4cm} from n5] (g') {};
\node [position=90:{.4cm} from g'] (G) {$G'$};

\draw[very thick, ->,shorten >= 3pt,shorten <= 3pt] (n1) -- (n4);
\draw[very thick, ->,shorten >= 3pt,shorten <= 3pt] (n3) -- (n1);
\draw[very thick, ->,shorten >= 3pt,shorten <= 3pt] (n4) -- (n3);
\draw[very thick, ->,shorten >= 3pt,shorten <= 3pt] (n4) -- (n2);

\draw[very thick, ->,shorten >= 3pt,shorten <= 3pt] (n5) -- (n8);
\draw[very thick, ->,shorten >= 3pt,shorten <= 3pt] (n7) -- (n5);
\draw[very thick, ->,shorten >= 3pt,shorten <= 3pt] (n8) -- (n7);
\draw[very thick, ->,shorten >= 3pt,shorten <= 3pt] (n8) -- (n6);
\end{tikzpicture} 

\vspace{.3cm}
Directed Maximum Common Edge Subgraph
\vspace{.4cm}

\begin{tikzpicture}[main node/.style={circle, draw, inner sep=1pt,minimum size=15pt}, scale=1,node distance=.8cm]

\node [main node] (n1) {$a$};
\node [main node, right=of n1] (n2) {$a$};
\node [main node, below=of n2] (n4) {$b$};

\draw[very thick, ->,shorten >= 3pt,shorten <= 3pt] (n1) -- (n4);
\draw[very thick, ->,shorten >= 3pt,shorten <= 3pt] (n4) -- (n2);
\end{tikzpicture}
\end{center}
\caption{Shown are two graphs $G$ and $G'$ along with their directed maximum common edge subgraph. Here, letters represent node labeling. Since $G$ has three $a$-labeled nodes and $G'$ has two $a$-labeled nodes, the directed maximum common edge subgraph can only have $N_a(G,G') = \min(2,3) = 2$ $a$-labeled nodes.  }\label{}
\end{figure}

\section{The order-respecting property}

This section establishes a technical property that can be leveraged in algorithms for calculating the graph distance metric established in Section~\ref{sec:metric} for graphs that are transitively closed.

\begin{definition} Let $G = (V, \mathcal D, \ell_v,\emptyset)$ and $G' = (V', \mathcal D', \ell'_v,\emptyset)$ be node-labeled digraphs. We say a feasible solution $(U, \phi) \in \bbDMCES(G,G')$ \emph{respects order on labels} if there is no $v,u \in U$  with $\ell(v) = \ell(u)$ such that  $(v,u) \in \mathcal D$ and $(\phi(u),\phi(v)) \in \mathcal D'$. 
\end{definition}

We will show that if $G$, $G'$ are transitive closures and $(U,\phi)$ is a solution to \textbf{DMCES}, then $(U,\phi)$ has the order-respecting property. We do this by taking a feasible solution $(U,\phi)$ that is not order-respecting, ``untwist'' one of the non-order-respecting edges into a map $\psi$, and show that a higher score $\mathcal P(U,\psi)$ results.

\begin{definition}
Let $G = (V, \mathcal D, \ell_v,\emptyset)$ and $G' = (V', \mathcal D', \ell'_v,\emptyset)$ be node-labeled digraphs.
For all $(U,\phi) \in \bbDMCES(G,G')$, define 
\[
\mathcal X(U,\phi) := \Big\{ \{v_1,v_2\} \subseteq U \mid \ell(v_1) = \ell(v_2), (v_1,v_2) \in \mathcal D \text{ and } (\phi(v_2),\phi(v_1)) \in \mathcal D' \Big\} .
\] 
Let $(U,\phi) \in \bbDMCES(G,G')$ such that $\mathcal X(U,\phi) \neq \emptyset$, and take an element $\{u,v\} \in \mathcal X(U,\phi)$. Let $\psi: U \to V'$ be an injection which is identical to $\phi$, with the exception that $\psi(u) = \phi(v)$ and $\psi(v) = \phi(u)$. We say that $\psi$ is a \emph{minimally untwisted map of $\phi$}. 
\end{definition}

The following  Lemma \ref{lem:alwaysIncrease} is a generalization of Lemma 2 of \cite{Haviar}. 

\begin{lemma}\label{lem:alwaysIncrease}
Let $G = (V, \mathcal D, \ell_v,\emptyset)$ and $G' = (V', \mathcal D', \ell'_v,\emptyset)$ be node-labeled digraphs satisfying \textbf{(W)}, \textbf{(S)}, and \textbf{(O)} that are transitive closures. 
Let $(U,\phi) \in \bbDMCES(G,G')$ be such that $\phi$ has a minimally untwisted map $\psi$.
Then 
\[
\mathcal P(U,\phi)  < \mathcal P(U,\psi). 
\]
\end{lemma}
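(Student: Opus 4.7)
The strategy is to rewrite $\mathcal{P}(U,\psi) - \mathcal{P}(U,\phi)$ as a sum of contributions from pairs $(v_1,v_2) \in U\times U$ and check that every contribution is nonnegative, with the pair $(u,v)$ contributing a strict $+1$. Since $\psi$ and $\phi$ agree off $\{u,v\}$, any pair with $\{v_1,v_2\}\cap\{u,v\}=\emptyset$ contributes $0$; the diagonal pairs $(u,u),(v,v)$ vanish by \textbf{(S)}; and the pair $(v,u)$ vanishes because $(u,v)\in\mathcal{D}$ together with \textbf{(O)} forces $\epsilon(v,u)=0$. For the pair $(u,v)$ itself, the hypothesis $(\phi(v),\phi(u))\in\mathcal{D}'$ together with \textbf{(O)} in $G'$ forces $(\phi(u),\phi(v))\notin\mathcal{D}'$, so the $\phi$-summand equals $0$ while the swapped $\psi$-summand equals $\epsilon(u,v)\,\epsilon'(\phi(v),\phi(u)) = 1$, giving a definite $+1$.

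For each $w \in U\setminus\{u,v\}$, I would introduce the eight indicators $a=\epsilon(u,w)$, $b=\epsilon(v,w)$, $c=\epsilon(w,u)$, $d=\epsilon(w,v)$ on the $G$-side, and $a'=\epsilon'(\phi(u),\phi(w))$, $b'=\epsilon'(\phi(v),\phi(w))$, $c'=\epsilon'(\phi(w),\phi(u))$, $d'=\epsilon'(\phi(w),\phi(v))$ on the $G'$-side. Direct expansion collapses the total contribution at $w$ to
\[
(a-b)(b'-a') \;+\; (c-d)(d'-c').
\]
The transitive closure of $G$ applied to $(u,v)\in\mathcal{D}$ yields the implications $(v,w)\in\mathcal{D}\Rightarrow(u,w)\in\mathcal{D}$ and $(w,u)\in\mathcal{D}\Rightarrow(w,v)\in\mathcal{D}$, i.e.\ $b\le a$ and $c\le d$. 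The transitive closure of $G'$ applied to $(\phi(v),\phi(u))\in\mathcal{D}'$ analogously gives $a'\le b'$ and $d'\le c'$. Thus both factors in the first product are nonnegative, both factors in the second product are nonpositive, and the contribution at $w$ is nonnegative.

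Summing the $+1$ contribution from the pair $(u,v)$ with the nonnegative contributions ranging over all $w \in U\setminus\{u,v\}$ gives $\mathcal{P}(U,\psi) - \mathcal{P}(U,\phi) \ge 1$, which is the desired strict inequality. The main obstacle, I expect, is spotting the clean factorization of the per-$w$ change into the two signed products above, because only in that form do the transitive-closure implications apply directly to each factor; once the factorization is in hand, the orientation and simplicity assumptions dispose of the remaining pairs mechanically.
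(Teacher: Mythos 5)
Your proof is correct, and at the decisive step it takes a genuinely different route from the paper. Both arguments reduce the problem to the interaction of the twisted pair $\{u,v\}$ with each third vertex (the paper via its decomposition into triples $U_{(x)}=\{u,v,x\}$ plus the correction term $C(u,v)$, you via a pairwise bookkeeping of $\mathcal P(U,\psi)-\mathcal P(U,\phi)$), and both dispose of the pairs $(u,v)$, $(v,u)$ and the diagonal using \textbf{(S)} and \textbf{(O)} in essentially the same way. The difference is how the per-third-vertex contribution is shown to be nonnegative: the paper verifies the inequality $\mathcal P(U_{(x)},\phi|_{U_{(x)}})+1\le\mathcal P(U_{(x)},\psi|_{U_{(x)}})$ by an exhaustive enumeration of all admissible configurations of the triple and its image (the figures of Appendix~B), whereas you prove it once and for all via the algebraic identity
\[
\Delta_w=(a-b)(b'-a')+(c-d)(d'-c'),
\]
with the signs of the four factors controlled by the transitive-closure implications $b\le a$, $c\le d$, $a'\le b'$, $d'\le c'$. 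I checked the expansion and the four implications; they are correct, and the counterexample of Figure~\ref{fig:counterexampleUntwist} is consistent with your argument since without transitivity the factor inequalities fail. Your version buys a shorter, verifiable, case-free proof (and makes transparent exactly where transitive closure enters, namely in each of the four sign constraints); the paper's version buys a concrete catalogue of configurations that doubles as a reference for the algorithmic discussion, at the cost of a lengthy appendix whose completeness the reader must take on trust.
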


\begin{proof}

Recall the definition of $\mathcal P$, given in Equation~\eqref{eq:score}, 
\begin{equation*}
	\mathcal P (U,\phi) := \sum_{(v_1,v_2) \in U\times U} \epsilon(v_1,v_2)\epsilon'(\phi(v_1),\phi(v_2)).
\end{equation*}

Let $\{u,v\} \in \mathcal X(U,\phi)$ and set
\[
C(u,v) := \sum_{\substack{(v_1,v_2) \in U\times U \\ v_1,v_2 \notin \{u,v\}}} \epsilon(v_1,v_2)\epsilon'(\phi(v_1),\phi(v_2))
\]
For each $x \in U\setminus \{u,v\}$, let $U_{(x)} := \{u,v,x\}$. The proof of this Lemma relies on the observation that
\begin{equation} \label{formula}
\mathcal P(U,\phi) = \left(\sum_{x\in U \setminus \{u,v\}} \left( \mathcal P(U_{(x)}, \phi|_{U_{(x)}})\right) \right)+ C(u,v)
\end{equation}
To see this, notice that for a fixed $x$, we have
\begin{multline*}
\mathcal P(U_{(x)}, \phi|_{U_{(x)}}) := 
\epsilon(x,u)\epsilon'(\phi(x),\phi(u)) 
+ \epsilon(x,v)\epsilon'(\phi(x),\phi(v)) 
+ \epsilon(u,x)\epsilon'(\phi(u),\phi(x)) \\
+ \epsilon(u,v)\epsilon'(\phi(u),\phi(v))
+ \epsilon(v,x)\epsilon'(\phi(v),\phi(u))
+ \epsilon(v,u)\epsilon'(\phi(v),\phi(u)).
\end{multline*}
So for each $x\in U\setminus \{u,v\}$, we have the term $\epsilon(u,v)\epsilon'(\phi(u),\phi(v))+\epsilon(v,u)\epsilon'(\phi(v),\phi(u))$, which will be multiply counted.
However, since $G$ and $G'$ are simple and oriented  and we assume that $\{u,v\} \in \mathcal X(U,\phi)$, this means that the orientation of the edge $(u,v) \in \mathcal D$ (resp. $(v,u) \in \mathcal D$) and $(\phi(u), \phi( v)) \in \mathcal D'$ (resp. $(\phi(v), \phi(u)) \in \mathcal D'$) do not agree. By the definition of $\mathcal P$, the term must be zero. This verifies the formula~\ref{formula}.


We now  observe that for the  new function $\psi$ 
\begin{equation}
\mathcal P(U,\psi) = \left(\sum_{x\in U \setminus \{u,v\}} \left( \mathcal P(U_{(x)}, \psi|_{U_{(x)}}) -1 \right) \right)+ C(u,v) + 1
\end{equation}
To explain this formula, observe that exactly one of summands
\[ \epsilon(v,u)\epsilon'(\phi(v),\phi(u)) \quad \text{} \quad  \epsilon(u,v)\epsilon'(\phi(u),\phi(v)) \]
will be equal to $1$, by the assumption that $G$ and $G'$ are oriented, while the other will be zero. To avoid counting this term multiple times we subtract it from the sum over all subgraphs indexed by $x$ and add  $+1$ at the end of the equation to account for this term exactly  once.

We will  now show that 
\begin{equation}\label{eq:alwaysIncrease}
\mathcal P(U_{(x)}, \phi|_{U_{(x)}}) + 1\leq \mathcal P(U_{(x)}, \psi|_{U_{(x)}})
\end{equation}
which will be sufficient to complete the proof. Shown in Appendix~\ref{app:figs} are all possible arrangements of the subgraphs induced by $U_{(x)} = \{u,v,x\}$, under the assumptions that $G$ and $G'$ are weakly connected, simple, oriented, and transitively closed. We use the notation $x' = \phi(x) = \psi(x) $, $v' = \psi(v) $, $u' = \psi(u)$, $v' = \phi(u) $, and $u' = \phi(v)$. In each case, the  Equation~\eqref{eq:alwaysIncrease} is valid.
\end{proof}

We remark that without the assumption of transitive closure, we cannot guarantee that $\mathcal P(U,\phi) < \mathcal P(U,\psi) $. For example,  Figure~\ref{fig:counterexampleUntwist} shows an instance where $\mathcal P(U,\phi) = \mathcal P(U,\psi)$, when $G$ and $G'$ are not transitive closures.

\begin{figure}[htp!]
\begin{center}
	\begin{tikzpicture}[main node/.style={circle, draw,thick, inner sep=1pt,minimum size=18pt}, scale=1,node distance=1cm]
	\tikzstyle{every loop}=[looseness=14]
	
	\node[main node] (v) at (0,0) {$v$} ;
	\node[main node] (u) at (2,0) {$u$} ;
	\node[main node] (x) at (1, {2*cos(30)} ) {$x$} ;
	
	\draw[very thick,-,shorten >= 3pt,shorten <= 3pt]
	(v) edge[black, ->] (u)
	(x) edge[dashed, <-] (v)
	;
	
	\node[main node] (v') at (4,0) {$v\ghostprime$} ;
	\node[main node] (u') at (6,0) {$u\ghostprime$} ;
	\node[main node] (x') at (5, {2*cos(30)} ) {$x\ghostprime$} ;
	\node[] (text') at (3,-.75) {$\mathcal P(U_{(x)},\phi|_{U_{(x)}})  = 1 $};
	\node[] (text') at (3,-1.25) {$\mathcal P(U_{(x)},\psi|_{U_{(x)}})  = 1 $};
	
	\draw[very thick,-,shorten >= 3pt,shorten <= 3pt]
	(v') edge[black, ->] (u')
	(x') edge[dashed, <-] (u')
	;
	\end{tikzpicture}.
\end{center}
\caption{An example showing that without the assumption that $G$ and $G'$ are transitive closures, we cannot guarantee that $\mathcal P(U_{(x)}, \phi|_{U_{(x)}}) + 1\leq \mathcal P(U_{(x)}, \psi|_{U_{(x)}})$. In this case, the score remains unchanged under the  new map $\psi$.  Here  $x' = \phi(x) = \psi(x) $, $v' = \psi(v) $, $u' = \psi(u)$ and $v' = \phi(u) $, $u' = \phi(v)$. The solid edges are matched in 
$(U_{(x)},\psi|_{U_{(x)}})$, and the dashed edges are matched in $(U_{(x)},\phi|_{U_{(x)}})$.}  \label{fig:counterexampleUntwist}
\end{figure}
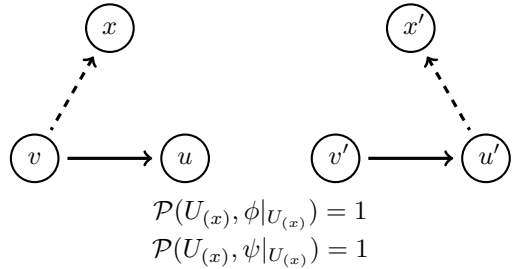

\begin{theorem} \label{thm:solutionsAreUntwisted}
Let $G = (V, \mathcal D, \ell_v,\emptyset)$ and $G' = (V', \mathcal D', \ell'_v,\emptyset)$ be node-labeled digraphs satisfying \textbf{(W)}, \textbf{(S)}, and \textbf{(O)} that are transitive closures.  If $(U,\phi)\in \bbDMCES(G,G')$ such that $\mathcal P (U,\phi) = \DMCES(G,G')$, then $(U,\phi)$ respects order on labels.
\end{theorem}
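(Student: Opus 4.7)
The plan is to proceed by contradiction, leveraging Lemma~\ref{lem:alwaysIncrease} as the essential ingredient. Suppose $(U,\phi) \in \bbDMCES(G,G')$ realizes $\mathcal P(U,\phi) = \DMCES(G,G')$ but fails to respect order on labels. By definition, this means there exist $u, v \in U$ with $\ell_v(u) = \ell_v(v)$ such that $(u,v) \in \mathcal D$ and $(\phi(v), \phi(u)) \in \mathcal D'$. Equivalently, $\{u,v\} \in \mathcal X(U,\phi)$, so $\mathcal X(U,\phi) \neq \emptyset$, and a minimally untwisted map $\psi$ of $\phi$ exists (namely the injection obtained from $\phi$ by swapping the images of $u$ and $v$).

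Next I need to verify that $(U,\psi) \in \bbDMCES(G,G')$, i.e., that $\psi$ is a label-respecting injection from $U$ to $V'$. Injectivity is immediate since $\psi$ and $\phi$ have the same image and $\psi$ differs from $\phi$ only by a transposition on two elements. For label respect, observe that $\psi$ agrees with $\phi$ outside $\{u,v\}$, and on $\{u,v\}$ we have $\ell_v(u) = \ell_v(v)$ by the definition of $\mathcal X$, while $\ell_v'(\phi(u)) = \ell_v(u)$ and $\ell_v'(\phi(v)) = \ell_v(v)$ because $\phi$ respects labels; hence $\ell_v'(\psi(u)) = \ell_v'(\phi(v)) = \ell_v(v) = \ell_v(u)$ and similarly for $v$. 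Thus $\psi$ is a feasible solution.

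Now I can apply Lemma~\ref{lem:alwaysIncrease} directly to $(U,\phi)$ and its minimally untwisted map $\psi$: since $G$ and $G'$ are transitive closures satisfying \textbf{(W)}, \textbf{(S)}, and \textbf{(O)}, the lemma gives $\mathcal P(U,\phi) < \mathcal P(U,\psi)$. But then $\mathcal P(U,\psi) > \DMCES(G,G')$, contradicting that $\DMCES(G,G')$ is the maximum score over $\bbDMCES(G,G')$. Therefore no such $(U,\phi)$ exists, and every solution to \textbf{DMCES} must respect order on labels.

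The argument is essentially a straightforward contradiction wrapped around Lemma~\ref{lem:alwaysIncrease}; there is no real obstacle beyond checking that the untwisted map $\psi$ is actually a feasible solution (so the lemma can be invoked), which is where the hypothesis $\ell_v(u) = \ell_v(v)$ built into the definition of $\mathcal X$ is crucial. All of the combinatorial heavy lifting (the enumeration of cases for triples $\{u,v,x\}$ under transitive closure) has already been absorbed into the proof of Lemma~\ref{lem:alwaysIncrease}.
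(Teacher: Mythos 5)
Your proof is correct and follows essentially the same route as the paper: contradiction via Lemma~\ref{lem:alwaysIncrease} applied to a minimally untwisted map of a putative non-order-respecting optimal solution. The only addition is your explicit check that $(U,\psi)$ is itself a feasible solution (injective and label-respecting), which the paper leaves implicit; this is a harmless and arguably welcome bit of extra care, not a different argument.
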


\begin{proof}
Suppose not. Then there exists $(U,\phi)\in \bbDMCES(G,G')$ with $\mathcal P (U,\phi) = \DMCES (G,G')$ and $\mathcal X(U,\phi) \neq \emptyset$. Let $\psi: U \to V'$ be a minimally untwisted map of $\phi$. Then by Lemma~\ref{lem:alwaysIncrease}, $\mathcal P(U,\phi) < \mathcal P(U,\psi)$, which is a contradiction, since we assumed $\mathcal P (U,\phi)$ was maximal. 
\end{proof}

\begin{corollary}\label{thr:direction}

Let $G = (V, \mathcal D, \ell_v,\emptyset)$ and $G' = (V', \mathcal D', \ell'_v,\emptyset)$ be node-labeled digraphs satisfying \textbf{(W)}, \textbf{(S)}, and \textbf{(O)} that are transitive closures. Then there exists $(U,\phi) \in \bbDMCES(G,G')$ such that  both 
\begin{itemize}
\item $(U,\phi)$ is a maximal cardinality solution to \textbf{DMCES} and 
\item $(U,\phi)$ respects order on labels. 
\end{itemize}
\end{corollary}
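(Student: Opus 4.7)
The plan is to combine the two previous main results directly, with essentially no further work. First, I would invoke Theorem~\ref{thr:size} to obtain a feasible solution $(U,\phi) \in \bbDMCES(G,G')$ that is a maximal cardinality solution to \textbf{DMCES}. By definition this means both that $|T| \leq |U|$ for every $(T,\psi) \in \bbDMCES(G,G')$ and that $\mathcal{P}(U,\phi) = \DMCES(G,G')$, so $(U,\phi)$ is in particular a solution to \textbf{DMCES} in the sense required by Theorem~\ref{thm:solutionsAreUntwisted}.

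Next, since $G$ and $G'$ satisfy \textbf{(W)}, \textbf{(S)}, \textbf{(O)} and are transitive closures, I would apply Theorem~\ref{thm:solutionsAreUntwisted} to the very same pair $(U,\phi)$. The theorem concludes that $(U,\phi)$ respects order on labels. Since $(U,\phi)$ already had maximal cardinality by construction, the pair satisfies both required properties simultaneously, which proves the corollary.

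There is no real obstacle here: the point is simply that Theorem~\ref{thm:solutionsAreUntwisted} is stated for \emph{every} solution to \textbf{DMCES}, not merely for some particular one, so it applies equally well to the maximal cardinality solution produced by Theorem~\ref{thr:size}. The only thing worth double-checking before concluding is that the ``maximal cardinality solution'' in Theorem~\ref{thr:size} indeed satisfies $\mathcal{P}(U,\phi) = \DMCES(G,G')$, so that it qualifies as a hypothesis of Theorem~\ref{thm:solutionsAreUntwisted}; this is built into the definition of a maximal cardinality solution.
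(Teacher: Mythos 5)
Your proposal is correct and follows exactly the paper's argument: the paper's proof of this corollary is simply ``Immediate from Theorem~\ref{thr:size} and Theorem~\ref{thm:solutionsAreUntwisted},'' and you have filled in precisely the intended details, namely that the maximal cardinality solution from Theorem~\ref{thr:size} satisfies $\mathcal P(U,\phi) = \DMCES(G,G')$ and hence falls under the hypothesis of Theorem~\ref{thm:solutionsAreUntwisted}, which applies to every solution. Nothing further is needed.
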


\begin{proof}
Immediate from Theorem~\ref{thr:size} and Theorem~\ref{thm:solutionsAreUntwisted}.
\end{proof}

Corollary \ref{thr:direction} greatly reduces the number of feasible solutions which must be checked by an algorithm performing an exhaustive search.  Such an algorithm taking advantage of these results is given in Appendix~\ref{sec:alg}. 

\section{Discussion}
We have constructed  a graph distance metric $d_e$ in Definition~\ref{def:metric}
that is based on an edge-induced maximum common subgraph for node-labeled digraphs. To prove that we have defined a metric, we introduce a modified version of the well-known line graph, called the extended line digraph, which captures edge direction and node labeling of a digraph $G$. We then establish the Isomorphism Theorem~\ref{thm:iso_iff} that states that weakly connected, simple, and oriented node-labeled digraphs are isomorphic if and only if their extended line digraphs are isomorphic. This allows us to compute node-induced subgraphs of the extended line digraph, $\mathcal L(G)$, instead of edge-induced subgraphs of $G$. A metric using node-induced subgraphs on $\mathcal L(G)$ then induces the metric $d_e$ on $G$. 

We further show that finding a maximum common node-induced subgraph of $\mathcal L(G)$ and $\mathcal L(G')$ can be efficiently reduced to the maximum clique finding problem. Although this method is effective for sparse graphs, it is prohibitively expensive for dense ones. Since our interest is in transitively closed graphs induced by partial orders, a different algorithm is necessary. 
To construct such an algorithm, we first show that every \textbf{DMCES} must admit a maximum cardinality solution, which is a solution with the maximum possible number of nodes with matching labels. We then prove for transitive closures that there is an order-preserving property that directed maximum common edge subgraphs must satisfy. These two properties together provide substantial savings in computational time for algorithms that exhaustively search feasible solutions. 

The need for a graph distance  metric based on the directed maximum common edge  subgraph  was motivated by a problem to assess similarity between partially ordered sets arising from biological data. 
The quantity 
$$ \frac{ \DMCES (G,G')}{\max(|\mathcal D|,|\mathcal D'|)} = 1 - d_e$$
is the maximal proportion of shared edges between two digraphs of partial orders, $G = \mathscr D(P,\leq,\ell)$ and $G' = \mathscr D(P',\leq',\ell')$, over all subgraph isomorphisms $\phi$. Therefore, it describes the maximal proportion of shared order relationships $p \leq q$ and $\phi(p) \leq \phi(q)$ between $(P,\leq,\ell)$ and $(P',\leq',\ell')$. Subtracting this proportion from 1 is then a natural measure of distance between the partial orders.
We successfully used the third algorithm in Appendix~\ref{sec:alg} to assess the similarity between collections of time series gene expression data in~\cite{tspo,sciencepaper}.


In~\cite{tspo}, we introduce a technique for representing a time series dataset as a partial order on the extrema of the time series as a function of noise. In other words, we characterize a time series dataset as a collection of peaks and troughs in gene expression that are either unambiguously ordered with respect to each other or are incomparable. The number of extrema and their ordering relationships vary assuming different levels of noise in the dataset.
Using this technique, we assess the similarity of RNAseq time series for yeast cell cycle genes in two replicate experiments. Similarity is defined to be the complement of distance, $1-d_e$, calculated between digraphs of the partial orders generated from the extrema of the time series.
By subsampling from the collection of cell cycle genes, we were able to show a high degree of similarity between the replicates, allowing the experimentalists to quantify the replicability of the experiment.

In~\cite{sciencepaper}, we used the graph distance metric $d_e$  to assess the conservation of gene peak and trough ordering in malarial parasite intrinsic oscillations compared to that in circadian rhythm datasets. We examined the similarity of peak and trough ordering across different strains of malaria parasites and different mouse tissues. Since there remain computational issues with large partial orders, we took 5000 random samples of six genes from a list of over 100 oscillating genes in malarial RNAseq data and computed the associated partial orders as in~\cite{tspo}. Given the number of extrema in each time series, the choice of six genes resulted in digraphs of partial orders of about 40 nodes. We computed the digraph distances from each sample to a reference dataset. 
A similar procedure was performed across all mouse tissues. Since it was  not \textit{a priori} clear what constitutes a large distance versus a small one, we randomized the data to create a baseline, which provided us with a null hypothesis for testing. We showed that the malaria strains conserve gene ordering more robustly 
with regard to baseline than circadian genes do across mouse tissues. Since the ordering of circadian genes are accepted to be conserved, we draw the conclusion that the gene ordering in malaria parasites is at least as conserved.

As these two examples illustrate, the graph distance metric given in Definition~\ref{def:metric} has practical application to biological datasets, as well as being an alternative metric for weakly connected, simple, and oriented node-labeled digraphs.

 \bibliographystyle{siamplain}
\bibliography{mces}

\section*{Acknowledgements}
We thank Dr. Sean Yaw for helpful discussions at an early stage of the manuscript.


\appendix

\section{Proof of Theorem~\ref{thm:iso_iff}}\label{app:thm_proof}

We define a weak notion of isomorphism that will be used extensively in the proof of Theorem~\ref{thm:iso_iff} to prove the stronger isomorphism property in Definition~\ref{def:mixed_iso}. The following definition holds for either directed or undirected graphs.

\begin{definition}\label{def:structure}
	Let $ G = (V, \mathcal F, \ell_v, \ell_e)$ be a (possibly labeled) graph. 
	The \emph{structure of $G$} is the undirected unlabeled graph
	\[S(G) = (V, \mathcal E_S, \emptyset, \emptyset) \text{ with } \mathcal E_S = \{ \{u,v\} \;|\; \langle u,v \rangle \in \mathcal F \}.\]     
	We say that two graphs $G = (V,\mathcal F,\ell_v,\ell_e)$ and $G' = (V',\mathcal F',\ell'_v,\ell'_e)$ are \emph{structurally isomorphic}  if $S(G) \cong S(G')$. For an undirected graph $G = (V,\mathcal E,\ell_v,\ell_e)$, $S(G)$ has edges $\mathcal E_S = \mathcal E$.
	\end{definition}

	\begin{lemma} \label{lem:flip} Let $G = (V,\mathcal D,\ell_v,\emptyset)$ be a node-labeled digraph satisfying assumptions \textbf{(S)} and \textbf{(O)}. Then the  structure of the extended line digraph is isomorphic to the line graph of the structure of $G$,  
\[ S(\mathcal L(G)) \cong L(S(G)).  \] 
\end{lemma}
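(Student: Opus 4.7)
The plan is to construct an explicit isomorphism from $S(\mathcal{L}(G))$ to $L(S(G))$ using the natural correspondence between the directed edges of $G$ (which form the node set of $\mathcal{L}(G)$) and the undirected edges of $S(G)$ (which form the node set of $L(S(G))$).

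First I would define the candidate map $\Phi: \mathcal{D} \to \mathcal{E}_S$ by $\Phi((u,v)) = \{u,v\}$, and verify it is a bijection. This is where assumptions \textbf{(S)} and \textbf{(O)} enter. Simplicity rules out self-loops (so $u \neq v$, making $\{u,v\}$ a legitimate undirected edge) and parallel directed edges in the same direction; the orientation assumption rules out having both $(u,v)$ and $(v,u)$ in $\mathcal{D}$, which would otherwise collapse two distinct elements of $\mathcal{D}$ to the same element of $\mathcal{E}_S$. Surjectivity is immediate from the definition of $\mathcal{E}_S$ in Definition~\ref{def:structure}.

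Next I would check that $\Phi$ preserves adjacency. The structure $S(\mathcal{L}(G))$ has an undirected edge between $e_1, e_2 \in \mathcal{D}$ iff at least one of $(e_1,e_2), (e_2,e_1)$ belongs to $\mathcal{D}_L$. By the definition of the extended line digraph, this occurs iff $e_1$ and $e_2$ share a node in $G$ via one of the $ht$, $tt$, or $hh$ relationships; equivalently, iff $e_1 \neq e_2$ and $e_1 \cap e_2 \neq \emptyset$ when the edges are viewed as ordered pairs. On the other side, $\{\Phi(e_1), \Phi(e_2)\}$ is an edge of $L(S(G))$ iff the undirected edges $\Phi(e_1)$ and $\Phi(e_2)$ are distinct and share a vertex in $S(G)$. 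These conditions coincide: two directed edges in $G$ share a vertex (in any orientation pattern) exactly when their underlying unordered pairs share a vertex. Hence $\{e_1,e_2\}$ is an edge of $S(\mathcal{L}(G))$ iff $\{\Phi(e_1),\Phi(e_2)\}$ is an edge of $L(S(G))$.

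Since $S(\mathcal{L}(G))$ and $L(S(G))$ are both unlabeled undirected graphs and $\Phi$ is a bijection on nodes preserving and reflecting adjacency, $\Phi$ is an isomorphism per Definition~\ref{def:mixed_iso}, giving $S(\mathcal{L}(G)) \cong L(S(G))$. The only genuine subtlety is the bookkeeping in step one: without \textbf{(O)}, the map $\Phi$ would fail to be injective, and without \textbf{(S)} it would not even be well-defined into $\mathcal{E}_S$, so these hypotheses must be invoked explicitly. Everything else reduces to unwinding the definitions of the line graph, the extended line digraph, and the structure functor.
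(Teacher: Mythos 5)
Your proposal is correct and follows essentially the same route as the paper's proof: both construct the bijection $(u,v) \mapsto \{u,v\}$ from $\mathcal D$ to $\mathcal E_S$ (justified by \textbf{(S)} and \textbf{(O)}) and then check that adjacency in $S(\mathcal L(G))$ and in $L(S(G))$ both amount to two edges of $G$ sharing a vertex. Your explicit accounting of exactly which failure each of \textbf{(S)} and \textbf{(O)} prevents is a slightly more detailed justification of the bijectivity step than the paper gives, but the argument is the same.
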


\begin{proof}
	 We have that
  $S(G) = (V,\mathcal E_S,\emptyset,\emptyset)$ where  $\mathcal E_S = \{ \{v_1,v_2\} \;|\;(v_1,v_2) \in \mathcal D\}$, so that $$L(S(G)) = (\mathcal E_S, \mathcal E_L,\emptyset,\emptyset),$$ where $\{e_1,e_2\} \in \mathcal E_L$ if and only if $e_1$ and $e_2$ share a node. 
On the other hand, 
\[\mathcal L(G) = (\mathcal D,{\mathcal D}_L, \bar \ell_v, \bar \ell_e).\] The structure of the extended line digraph is then
\[S(\mathcal L (G)) = (\mathcal D,\bar{\mathcal E}_S,\emptyset,\emptyset),\] 
where $\bar {\mathcal E}_S$ is the set of unordered pairs of directed edges in $\mathcal{D}$ corresponding to edges in $\mathcal D_L$, each of which indicates a head-to-head, tail-to-tail, or head-to-tail relationship. 
We desire to show that $(\mathcal E_S, \mathcal E_L,\emptyset,\emptyset) \cong (\mathcal D,\bar{\mathcal E}_S,\emptyset,\emptyset)$.

Since $G$ is a simple and oriented digraph, there is  a bijection  
$\phi : \mathcal{D} \to \mathcal E_S$ defined by
$$\phi:(v_1,v_2) \mapsto \{v_1,v_2\}.$$
Suppose  $e_1 = (u,v) \in \mathcal D $ and $e_2 = (w,z) \in \mathcal D$. Then $\{\phi(e_1),\phi(e_2)\} \in \mathcal E_L$  if and only if  $\{u,v\} \cap \{w,z\} \neq \emptyset$. Therefore 
$e_1$ and $e_2$ share a head-to-tail, head-to-head, or tail-to-tail connection. This is true if and only if 
\[\{e_1,e_2\} \in \bar{\mathcal E}_S.\] 
Thus $\phi$ is an isomorphism between $S(\mathcal L(G))$ and $L(S(G))$.
\end{proof}

	\begin{lemma} \label{lem:delta_y}
	Let $G$ and $G'$ be two node-labeled digraphs satisfying \textbf{(W)}, \textbf{(S)}, and \textbf{(O)} along with the additional conditions  $S(G) \cong \Delta$ and $S(G') \cong Y$. 
	Then $\mathcal L(G) \not \cong \mathcal L(G')$.
	\end{lemma}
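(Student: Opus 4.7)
The plan is to show that $\mathcal L(G)$ and $\mathcal L(G')$ always have different numbers of directed edges, so no isomorphism can exist. The key observation is that in an oriented simple digraph $H$, any two distinct edges share at most one vertex (two shared vertices would force parallel or antiparallel edges, excluded by \textbf{(S)} and \textbf{(O)}), and when they do share a vertex the pair falls in exactly one of the head-to-tail, head-to-head, tail-to-tail categories. Since a head-to-tail pair contributes a single directed edge to the extended line digraph while a head-to-head or tail-to-tail pair contributes two, I have
\[
|\mathcal D_L(H)| \;=\; N_{ht}(H) + 2\,N_{tt}(H) + 2\,N_{hh}(H),
\]
where $N_\star$ counts unordered pairs of edges in each adjacency category.

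Next I would enumerate the admissible orientations of $\Delta$ and $Y$ up to relabeling of vertices. For $\Delta$, each vertex has degree two and its two incident edges produce one ht, tt, or hh pair depending on whether the vertex is mixed (in-degree $1$), a sink (in-degree $2$), or a source (out-degree $2$). Letting $p$, $q$, $r$ denote the respective counts of mixed, sink, and source vertices, summing in-degrees over vertices gives $p + 2q = 3$, which forces $p$ to be odd. Hence there are only two cases: $p=3$ (the cyclic orientation, giving three ht pairs) and $p=1$ (the acyclic orientation, giving one pair of each type). For $Y$, all three edges are incident to the central vertex $v$; if $k$ of them point into $v$, then there are $\binom{k}{2}$ hh pairs, $\binom{3-k}{2}$ tt pairs, and $k(3-k)$ ht pairs.

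Finally I would plug these counts into the formula above. The two $\Delta$ cases yield $|\mathcal D_L| = 3$ and $|\mathcal D_L| = 1 + 2 + 2 = 5$, while the four $Y$ cases (indexed by $k \in \{0,1,2,3\}$) yield
\[
|\mathcal D_L| \;=\; k(3-k) + 2\tbinom{k}{2} + 2\tbinom{3-k}{2} \;=\; k^2 - 3k + 6 \;\in\; \{4,6\}.
\]
Since $\{3,5\}$ and $\{4,6\}$ are disjoint — they even have opposite parities — the extended line digraphs have different edge counts and therefore cannot be isomorphic.

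I do not expect a serious obstacle. The only step requiring care is the degree-sum argument showing that the two cases listed for $\Delta$ truly exhaust all orientations, and verifying that the $k$-parameterization of $Y$ does the same. Node labels on $G$ and $G'$ play no role because they affect only node and edge labels of $\mathcal L(G)$ and $\mathcal L(G')$, not the edge count, so they can safely be ignored throughout.
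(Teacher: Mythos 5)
Your proof is correct. The paper disposes of this lemma by brute force: it draws every weakly connected, simple, oriented orientation of $\Delta$ and of $Y$ (six digraphs in all, up to relabeling), constructs each extended line digraph explicitly in a figure, and asserts by inspection that no two of the resulting line digraphs are isomorphic. You follow the same enumeration of orientations (and your degree-sum argument for $\Delta$ and $k$-parameterization for $Y$ correctly certify that the enumeration is exhaustive, which the paper leaves implicit), but you replace the visual inspection step with a single numerical isomorphism invariant: the directed edge count $|\mathcal D_L| = N_{ht} + 2N_{tt} + 2N_{hh}$, which lands in $\{3,5\}$ for the $\Delta$ orientations and in $\{4,6\}$ for the $Y$ orientations. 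Your computed values agree with the paper's figure in every case. What your approach buys is a verifiable, figure-free argument that even ignores node and edge labels of the extended line digraphs entirely; what the paper's approach buys is slightly more information (pairwise non-isomorphism of all six line digraphs, not just across the $\Delta$/$Y$ divide), though only the cross-family separation is needed for the lemma. The one point worth stating explicitly in a write-up is the observation you already make in passing: under \textbf{(S)} and \textbf{(O)} each unordered pair of distinct adjacent edges falls into exactly one of the three categories, so the edge-count formula has no double counting.
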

	\begin{proof}
	In Figure~\ref{fig:delta_y} we construct  the extended line digraphs of all weakly connected, simple, and oriented digraphs structurally isomorphic to $\Delta$ or $Y$ (up to node relabeling).  Since no two graphs in the right column of Figure~\ref{fig:delta_y} are isomorphic, this proves the Lemma.
	\end{proof}

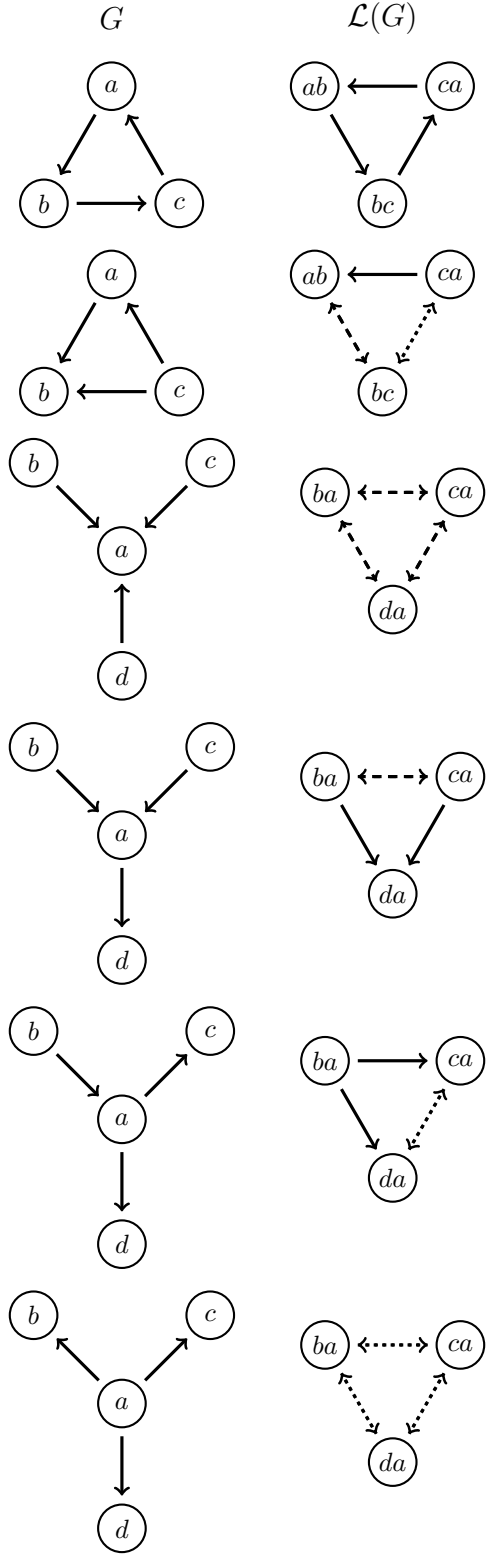
\begin{figure}\label{fig:delta_y}
	\begin{center}
		\begin{tikzpicture}[main node/.style={circle, draw,thick, inner sep=1pt,minimum size=18pt}, scale=.9,node distance=1cm]
		\tikzstyle{every loop}=[looseness=14]
		
		\node[main node] (b) at (0,0) {$b$ };
		\node[main node] (c) at (2,0)  {$c$ };
		\node[main node] (a) at (1, {2*cos(30)} )  {$a$ } ;
		\node[] (text) at (1,{2*cos(30)+1}) {\large{$G$}};
		
		\draw[very thick,-,shorten >= 3pt,shorten <= 3pt]
		(b) edge[->] (c)
		(c) edge[->] (a)
		(a) edge[->] (b)
		;
		
		\node[main node] (bc) at (5,0){$bc$} ;
		\node[main node] (ca) at (6,{2*cos(30)} ) { $ca$} ;
		\node[main node] (ab) at (4,{2*cos(30)} ) {$ab$ } ;
		\node[] (text) at (5,{2*cos(30)+1}) {\large{$\mathcal L(G)$}};
		
		\draw[very thick,-,shorten >= 3pt,shorten <= 3pt]
		(ca) edge[->] (ab)
		(bc) edge[->] (ca)
		(ab) edge[->] (bc)
		;
		\end{tikzpicture}
		\\
		\vspace{.25 cm}
		\begin{tikzpicture}[main node/.style={circle, draw,thick, inner sep=1pt,minimum size=18pt}, scale=.9,node distance=1cm]
		\tikzstyle{every loop}=[looseness=14]
		
		\node[main node] (b) at (0,0) {$b$ };
		\node[main node] (c) at (2,0)  {$c$ };
		\node[main node] (a) at (1, {2*cos(30)} )  {$a$ } ;
		
		\draw[very thick,-,shorten >= 3pt,shorten <= 3pt]
		(b) edge[<-] (c)
		(c) edge[->] (a)
		(a) edge[->] (b)
		;
		
		\node[main node] (bc) at (5,0){$bc$} ;
		\node[main node] (ca) at (6,{2*cos(30)} ) { $ca$} ;
		\node[main node] (ab) at (4,{2*cos(30)} ) {$ab$ } ;

		\draw[very thick,-,shorten >= 3pt,shorten <= 3pt]
		(bc) edge[->,dotted] (ca)
		(ca) edge[->,dotted] (bc)
		(ca) edge[->] (ab)
		(ab) edge[->,dashed] (bc)
		(bc) edge[->,dashed] (ab)
		;
		
		\end{tikzpicture}\\
		\vspace{.25 cm}
		\begin{tikzpicture}[main node/.style={circle, draw,thick, inner sep=1pt,minimum size=18pt}, scale=.9,node distance=1cm]
		\tikzstyle{every loop}=[looseness=14]

		\node[main node] (a) at (0,0) {$a$} ;
		\node[main node] (b) [below=of a] {$d$} ;
		\node[main node] (c) [above right=of a] {$c$} ;
		\node[main node] (d) [above left=of a] {$b$} ;

		\draw[very thick,-,shorten >= 3pt,shorten <= 3pt]
		(b) edge[->] (a)
		(c) edge[->] (a)
		(d) edge[->] (a)
		;
		
		\node[main node] (da) at (4,{-cos(30)}){$da$} ;
		\node[main node] (ca) at (5,{2*cos(30)- cos(30)} ) { $ca$} ;
		\node[main node] (ba) at (3,{2*cos(30) - cos(30)} ) {$ba$ } ;

		\draw[very thick,-,shorten >= 3pt,shorten <= 3pt]
		(ba) edge[->,dashed] (ca)
		(ca) edge[->,dashed] (ba)
		(ca) edge[->,dashed] (da)
		(da) edge[->,dashed] (ca)
		(da) edge[->,dashed] (ba)
		(ba) edge[->,dashed] (da)
		;
		
		\end{tikzpicture}\\
		\vspace{.25 cm}
		\begin{tikzpicture}[main node/.style={circle, draw,thick, inner sep=1pt,minimum size=18pt}, scale=.9,node distance=1cm]
		\tikzstyle{every loop}=[looseness=14]

		\node[main node] (a) at (0,0) {$a$} ;
		\node[main node] (d) [below=of a] {$d$} ;
		\node[main node] (c) [above right=of a] {$c$} ;
		\node[main node] (b) [above left=of a] {$b$} ;

		\draw[very thick,-,shorten >= 3pt,shorten <= 3pt]
		(b) edge[->] (a)
		(c) edge[->] (a)
		(a) edge[->] (d)
		;
		
		\node[main node] (da) at (4,{-cos(30)}){$da$} ;
		\node[main node] (ca) at (5,{2*cos(30)- cos(30)} ) { $ca$} ;
		\node[main node] (ba) at (3,{2*cos(30) - cos(30)} ) {$ba$ } ;

		\draw[very thick,-,shorten >= 3pt,shorten <= 3pt]
		(ba) edge[->,dashed] (ca)
		(ca) edge[->,dashed] (ba)
		(ca) edge[->] (da)
		(ba) edge[->] (da)
		;
		
		\end{tikzpicture}\\
		\vspace{.25 cm}
		\begin{tikzpicture}[main node/.style={circle, draw,thick, inner sep=1pt,minimum size=18pt}, scale=.9,node distance=1cm]
		\tikzstyle{every loop}=[looseness=14]

		\node[main node] (a) at (0,0) {$a$} ;
		\node[main node] (d) [below=of a] {$d$} ;
		\node[main node] (c) [above right=of a] {$c$} ;
		\node[main node] (b) [above left=of a] {$b$} ;

		\draw[very thick,-,shorten >= 3pt,shorten <= 3pt]
		(b) edge[->] (a)
		(a) edge[->] (c)
		(a) edge[->] (d)
		;
		
		\node[main node] (da) at (4,{-cos(30)}){$da$} ;
		\node[main node] (ca) at (5,{2*cos(30)- cos(30)} ) { $ca$} ;
		\node[main node] (ba) at (3,{2*cos(30) - cos(30)} ) {$ba$ } ;

		\draw[very thick,-,shorten >= 3pt,shorten <= 3pt]
		(ba) edge[->] (ca)
		(ca) edge[->,dotted] (da)
		(da) edge[->,dotted] (ca)
		(ba) edge[->] (da)
		;
		
		\end{tikzpicture}\\
		\vspace{.25 cm}
		\begin{tikzpicture}[main node/.style={circle, draw,thick, inner sep=1pt,minimum size=18pt}, scale=.9,node distance=1cm]
		\tikzstyle{every loop}=[looseness=14]

		\node[main node] (a) at (0,0) {$a$} ;
		\node[main node] (d) [below=of a] {$d$} ;
		\node[main node] (c) [above right=of a] {$c$} ;
		\node[main node] (b) [above left=of a] {$b$} ;

		\draw[very thick,-,shorten >= 3pt,shorten <= 3pt]
		(a) edge[->] (b)
		(a) edge[->] (c)
		(a) edge[->] (d)
		;
		
		\node[main node] (da) at (4,{-cos(30)}){$da$} ;
		\node[main node] (ca) at (5,{2*cos(30)- cos(30)} ) { $ca$} ;
		\node[main node] (ba) at (3,{2*cos(30) - cos(30)} ) {$ba$ } ;

		\draw[very thick,-,shorten >= 3pt,shorten <= 3pt]
		(ba) edge[->,dotted] (ca)
		(ca) edge[->,dotted] (da)
		(da) edge[->,dotted] (ba)
		(ca) edge[->,dotted] (ba)
		(da) edge[->,dotted] (ca)
		(ba) edge[->,dotted] (da)
		;
		
		\end{tikzpicture}\\
		\vspace{.25 cm}
	\end{center}
	\caption{The extended line digraphs of graphs structurally isomorphic to $\Delta$ or $Y$ (up to node relabeling). For an extended line digraph $\mathcal L(G) = (\mathcal D, \mathcal D_L, \bar \ell_v,\bar \ell_e)$ in the right column, solid lines indicate an edge label of $\bar \ell_e(\{e,e'\}) = ht$ (head-to-tail relationship), dashed lines indicate $\bar \ell_e(\{e,e'\}) = hh$ (head-to-head) and dotted lines indicate $\bar \ell_e(\{e,e'\}) = tt$ (tail-to-tail).
	  Note that the letters appearing in nodes are not necessarily labels. They are  used here as node and edge identities to relate $G$ and $\mathcal L(G)$.}
\end{figure}

 
 The next result  establishes that isomorphism between extended line digraphs implies structural isomorphism between digraphs.
 
	\begin{lemma}\label{lem:line_implies_structure}
	 Given two node-labeled digraphs $G,G'$ satisfying \textbf{(W)}, \textbf{(S)}, and \textbf{(O)}, if $\mathcal L(G)\cong \mathcal L(G')$, then $S(G) \cong S(G')$. 
	\end{lemma}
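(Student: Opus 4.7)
The plan is to reduce the claim to Whitney's isomorphism theorem via the structure/line-graph identity proved in Lemma \ref{lem:flip}, and then dispose of the single exceptional case using Lemma \ref{lem:delta_y}.

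First I would strip the labels: any isomorphism $\Phi: \mathcal{L}(G)\to \mathcal{L}(G')$ of labeled digraphs induces, by forgetting labels and orientations, an isomorphism of the structures $S(\mathcal{L}(G)) \cong S(\mathcal{L}(G'))$. Applying Lemma \ref{lem:flip} to both $G$ and $G'$ then gives
\[
L(S(G))\;\cong\;S(\mathcal{L}(G))\;\cong\;S(\mathcal{L}(G'))\;\cong\;L(S(G')).
\]
Next I would check that $S(G)$ and $S(G')$ meet the hypotheses of Whitney's theorem: they are finite by \textbf{(F)}, connected because weak connectivity of $G$ and $G'$ from \textbf{(W)} is precisely connectivity of their structures, and simple because \textbf{(S)} together with \textbf{(O)} prevents both parallel arcs and $2$-cycles, so no two directed edges of $G$ (resp.\ $G'$) project to the same undirected edge in $S(G)$ (resp.\ $S(G')$).

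Whitney's theorem then yields $S(G)\cong S(G')$ unless we are in the single exceptional case where one of $S(G),S(G')$ is isomorphic to $Y$ and the other to $\Delta$. This is the only delicate step, and it is handled by invoking Lemma \ref{lem:delta_y}: if, say, $S(G)\cong \Delta$ and $S(G')\cong Y$, then Lemma \ref{lem:delta_y} asserts $\mathcal{L}(G)\not\cong \mathcal{L}(G')$, contradicting the hypothesis of the present lemma. The symmetric case is identical. Hence the exceptional alternative cannot occur, and $S(G)\cong S(G')$ as required.

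The main obstacle in this outline is really already absorbed into Lemma \ref{lem:delta_y}: without it, the $Y/\Delta$ ambiguity that plagues Whitney's theorem for undirected graphs would survive the passage to the extended line digraph. The rest of the argument is a straightforward chain of implications, with the only subtlety being the careful verification that \textbf{(S)} and \textbf{(O)} together make the map $(v_1,v_2)\mapsto \{v_1,v_2\}$ a bijection between $\mathcal{D}$ and the edge set of $S(G)$, so that Lemma \ref{lem:flip} applies and Whitney's hypothesis of simplicity is met.
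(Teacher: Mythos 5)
Your proposal is correct and follows essentially the same route as the paper: pass to structures, use Lemma~\ref{lem:flip} to obtain $L(S(G)) \cong L(S(G'))$, invoke Whitney's theorem, and dispose of the $Y$/$\Delta$ exception via Lemma~\ref{lem:delta_y}. The only difference is cosmetic ordering (you apply Whitney first and then exclude the exceptional case, while the paper first splits into cases via the contrapositive of Lemma~\ref{lem:delta_y}), plus your slightly more explicit verification of Whitney's hypotheses, which is a welcome addition but not a different argument.
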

	\begin{proof}
$\mathcal L(G)\cong \mathcal L(G')$ implies $S(\mathcal L(G)) \cong S(\mathcal L(G'))$.  By Lemma~\ref{lem:flip}, it follows that  $L(S(G)) \cong L(S(G'))$. 
 Next,  by the contrapositive of Lemma~\ref{lem:delta_y}, $\mathcal L(G)\cong \mathcal L(G')$ implies either $S(G) \cong S(G') \cong \Delta$, $S(G) \cong S(G') \cong Y$, or at least one of  $S(G), S(G')$ is not isomorphic to either $Y$ or  $\Delta$. In either of the first two cases the proof is complete so assume at least one of  $S(G), S(G')$ is not isomorphic to either $Y$ or  $\Delta$. Since $G$ and $G'$ are weakly connected by assumption, then $S(G)$ and $S(G')$ are connected. We may then directly apply the Whitney Graph Isomorphism Theorem \cite{whitney} to show that  $S(G) \cong S(G')$.
\end{proof}

		\begin{corollary}\label{cor:Y}	Let $G = (V,\mathcal D,\ell_v,\emptyset)$ be a node-labeled digraph satisfying \textbf{(W)}, \textbf{(S)}, and \textbf{(O)}, that further has a subgraph $H = (U,W,\ell_v|_U,\emptyset)$ such that $S(H)$ is isomorphic to the $Y$ graph. Let $G' = (V',\mathcal D',\ell_v',\emptyset)$ be a node-labeled digraph satisfying \textbf{(W)}, \textbf{(S)}, and \textbf{(O)} such that $\phi: \mathcal D \to \mathcal D'$ is an isomorphism between the extended line digraphs $\mathcal L(G)$ and $\mathcal L(G')$.   
		Then the edge-induced subgraph $H'$ of $G'$, induced by the set of edges $\phi(W)$, has a structure isomorphic to the $Y$ graph, $S(H') \cong Y$.
		\end{corollary}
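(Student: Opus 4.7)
The plan is to transport the question from subgraphs of $G$ and $G'$ to their extended line digraphs and then combine Lemmas~\ref{lem:flip} and~\ref{lem:delta_y} with Whitney's theorem. The first step is to apply Lemma~\ref{lem:subgraphs} to identify $\mathcal L(H)$ with the node-induced subgraph of $\mathcal L(G)$ on vertex set $W$, and $\mathcal L(H')$ with the node-induced subgraph of $\mathcal L(G')$ on $\phi(W)$. Since $\phi$ is an isomorphism $\mathcal L(G) \to \mathcal L(G')$, its restriction to $W$ provides an isomorphism $\mathcal L(H) \cong \mathcal L(H')$.

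Next I would apply Lemma~\ref{lem:flip} to both $H$ and $H'$, which inherit simplicity and orientedness from $G$ and $G'$ as edge-induced subgraphs. This yields $S(\mathcal L(H)) \cong L(S(H))$ and $S(\mathcal L(H')) \cong L(S(H'))$, and combining with the isomorphism from the previous step gives
\[ L(S(H')) \cong L(S(H)) \cong L(Y) \cong \Delta. \]
Because $H'$ is edge-induced, every vertex of $H'$ is incident to an edge of $\phi(W)$, so $S(H')$ has no isolated vertices; since $\Delta$ is connected, $S(H')$ must itself be connected. Whitney's theorem then forces $S(H') \cong Y$ or $S(H') \cong \Delta$, as these are the only finite connected undirected graphs whose line graph is $\Delta$.

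Finally, I would rule out the case $S(H') \cong \Delta$ using Lemma~\ref{lem:delta_y}. With $S(H')$ now known to be connected, $H'$ is weakly connected and hence satisfies \textbf{(W)}, \textbf{(S)}, and \textbf{(O)}; the same holds for $H$. If $S(H') \cong \Delta$, then since $S(H) \cong Y$, Lemma~\ref{lem:delta_y} gives $\mathcal L(H) \not\cong \mathcal L(H')$, contradicting the first step. Therefore $S(H') \cong Y$, as claimed.

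The main subtlety is the order in which the hypotheses become available: Lemma~\ref{lem:delta_y} cannot be invoked until $H'$ is known to be weakly connected, and this connectivity is itself a consequence of Whitney's theorem applied after the line-graph computation. Once this ordering is respected, the argument is a clean chain of applications of machinery that has already been established.
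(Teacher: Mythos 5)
Your proof is correct, and at bottom it runs on the same machinery the paper uses: Lemma~\ref{lem:flip}, Whitney's theorem, and Lemma~\ref{lem:delta_y}. The difference is one of packaging. The paper's proof of Corollary~\ref{cor:Y} is a one-liner that applies Lemma~\ref{lem:line_implies_structure} to the restriction $\phi|_W$, viewed as an isomorphism $\mathcal L(H) \cong \mathcal L(H')$, and concludes $S(H) \cong S(H')$. You instead inline the content of Lemma~\ref{lem:line_implies_structure}: you justify $\mathcal L(H) \cong \mathcal L(H')$ via Lemma~\ref{lem:subgraphs}(1), pass to $L(S(H)) \cong L(S(H')) \cong \Delta$ via Lemma~\ref{lem:flip}, and then run Whitney plus Lemma~\ref{lem:delta_y} by hand. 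What your version buys is worth noting: Lemma~\ref{lem:line_implies_structure} assumes both graphs satisfy \textbf{(W)}, and while $H$ is weakly connected because $S(H) \cong Y$, the weak connectivity of $H'$ is not given in the hypotheses of the corollary. The paper's proof silently passes over this; you derive it explicitly from the fact that $H'$ is edge-induced (no isolated vertices) together with the connectedness of $L(S(H')) \cong \Delta$, and only then invoke Whitney and Lemma~\ref{lem:delta_y}. So your argument is longer but closes a small hypothesis-checking step that the paper's citation-style proof leaves implicit; the paper's version is more economical once one is willing to grant that $H'$ inherits \textbf{(W)}.
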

		
	\begin{proof}
	We apply Lemma~\ref{lem:line_implies_structure} to  $\phi|_W$, which is  an isomorphism between $\mathcal L(H)$ and $\mathcal L(H')$.  Then  $S(H) \cong S(H')$ follows.
	\end{proof}

\begin{definition}
	The \emph{degree of a node $v$} in a digraph $G$, denoted $\text{deg } v$,  is the number of incoming and outgoing edges incident to the node $v$.
\end{definition}

\paragraph{Proof of Theorem~\ref{thm:iso_iff}}

Let $G=(V,\mathcal D, \ell_v,\emptyset)$ and $G'=(V',\mathcal D', \ell'_v,\emptyset)$ be node-labeled digraphs satisfying \textbf{(W)}, \textbf{(S)}, and \textbf{(O)}. Denote $\mathcal L (G) = (\mathcal D,\mathcal D_L,\bar \ell_v,\bar \ell_e)$  and  $\mathcal L (G') = (\mathcal D',\mathcal D_L',\bar \ell'_v,\bar \ell'_e)$ as the extended line digraphs of $G$ and $G'$. 
We seek to prove that $\mathcal L(G) \cong \mathcal L(G')$ if and only if $G \cong G'$.
	
The reverse direction, $G \cong G'$ implies $\mathcal L(G) \cong \mathcal L(G')$, is nearly immediate. Let $\gamma : V \to V'$ be the isomorphism between $G$ and $G'$. For an edge $(v_1,v_2) \in \mathcal D$, construct the function  $\phi : \mathcal D \to \mathcal D'$ by
\begin{equation}\label{eq:iso2iso}
	\phi((v_1,v_2)) \mapsto (\gamma(v_1),\gamma(v_2)).
\end{equation}
Since $\gamma$ is a bijection, $\phi$ is a bijection that clearly respects the orientation of the edges from $G$ to $G'$. Moreover, since $\gamma$ is label-respecting, $\phi$ is label-respecting as well. Thus $\phi$ is an isomorphism between the extended line digraphs.
	
	Now consider the forward direction, $\mathcal L(G) \cong \mathcal L(G')$ implies $G \cong G'$. Let $\phi : \mathcal D \to \mathcal D'$ be the isomorphism between  $\mathcal L(G) $ and $ \mathcal L(G')$. We will show that from $\phi$ one can construct a function $\gamma$ satisfying~\eqref{eq:iso2iso} that is an isomorphism between $G$ and $G'$. The theorem is easy to verify for $|V|,|V'| \leq 2$ so assume $|V|,|V'| >2$. We remark that our construction of $\gamma$ follows the outline of a proof  due to \cite{jung}, given in \cite{harary}. 
	
	Consider a vertex $v \in V$ and let $P(v) \subseteq \mathcal D$ be the set of edges incident on $v$, i.e. $$P(v) =\{(u,w) \in \mathcal D\mid u=v\text{ or }w=v\},$$ so that $|P(v)| = \text{deg }v$. Since $G,G'$ are simple and oriented, for $u \neq v$ we have that $P(u) \cap P(v)$ contains at most one edge. Also by necessity, $P(u) \cap P(v) \cap P(w) = \emptyset$ whenever $u,v,w$ are distinct.
	
First suppose $\text{deg }v > 1$. Let $e_1$ and $e_2$ be two edges connected to $v$ in $G$.  Then $ e_1, e_2 \in \mathcal D$ is a pair of directed edges that have either a head-to-tail (label $ht$), head-to-head, ($hh$) or tail-to-tail ($tt$) relationship. Since $\phi$ is an isomorphism,
	$\phi(e_1), \phi(e_2) \in \mathcal D'$ share some node $v' \in G'$, which is to say $\phi(e_1),\phi(e_2) \in P(v')$.  Since $G'$ is simple and oriented, $\phi(e_1)$ and $\phi(e_2)$ can share a maximum of one node, so $v'$ is uniquely determined by the isomorphism $\phi$.
	
	Now assume there is another edge $e_3\neq e_1, e_2$ connected to $v$. Then $\phi(e_1)$ and $\phi(e_3)$ form a pair of edges in $\mathcal D'$
	that share a node $v''$.  Similarly, $\phi(e_2), \phi(e_3)$ form a pair of edges that share a node $v'''$. Let $H$ be the edge-induced subgraph of $G$ induced by  $\{e_1,e_2,e_3\}$. Then the structure of $H$, $S(H)$, is isomorphic to the $Y$ graph shown in Figure~\ref{fig:dY}, where the degree three node in the middle is $v$.
	By Corollary~\ref{cor:Y}, the $\{\phi(e_1),\phi(e_2), \phi(e_3) \}$ edge-induced subgraph of $G'$, say $H'$, must also satisfy $S(H') \cong Y$. This implies that 
	\[ v'=v''=v''' \]
	is the degree three node  in $H'$. Therefore, $\phi(P(v)) \subseteq P(v')$.  For the same reason, for any edge $e' \neq \phi(e_1) , \phi(e_2)$ connected to $v'$ the $\{e', \phi(e_1),\phi(e_2)\}$ edge-induced subgraph of $G'$ has structure isomorphic to $Y$ and so the $\{\phi^{-1}(e'), e_1,e_2\}$ edge-induced subgraph of $G$ also has structure isomorphic to $Y$, implying  $\phi^{-1} (e') \in P(v)$. Thus, $\phi(P(v)) = P(v')$ and $v'$ is uniquely determined by $\phi$. 
	We can then define the injection
	$$\gamma|_W: W \to V'$$
	$$v \mapsto v'$$
	where $W \subseteq V$ is the subset of nodes of $V$ with degree greater than 1 and $v'$ is the unique node in $V'$ such that $\phi(P(v)) = P(v')$.

%
%

	Next suppose $\text{deg }v = 1$. Let $u$ be the single neighbor of $v$ and let $e_1$ be the directed edge connecting $u$ and $v$.  
	Since the digraphs are weakly connected and we assume that the number of vertices of $G$ is  greater than 2, $\text{deg }u >1$. Then $\gamma|_W$ is well defined on $u$ and we let $u'= \gamma|_W(u)$. Then $\phi(e_1) \in P(u')$ and we let $v'$ be the other node of the edge $\phi(e_1)$.  
	
	We now show that $\text{deg }v' = 1$. 
	By contradiction, assume $\text{deg }v' >1$. Then $w:= \gamma|_W^{-1}(v')$ is a vertex in $V$ with $\text{deg }w >1$, which implies $w\not = v$. 
	Now $\phi(e_1) \in P(u') \cap P(v')$ implies that $e_1 \in P(u) \cap P(w)$. We already know that $e_1 \in P(u) \cap P(v)$ as well. Therefore $e_1 \in P(u) \cap P(v) \cap P(w) = \emptyset$, an impossible condition. We conclude $\text{deg }v' = 1$, with $\phi(P(v)) = P(v')$, where $v'$ is uniquely determined.
We therefore extend $\gamma|_W$ to the injection
	 $$\gamma: V \to V'$$
	 $$v \mapsto v'.$$
	 
	 The map $\gamma$ is in fact a bijection. To see this, assume by contradiction that $\gamma$ is not surjective. Then there exists a $v' \in V'$ such that $\gamma^{-1}(v')$ does not exist. Since $\phi^{-1}$ is an isomorphism, this means that $v'$ participates in no edges in $\mathcal D'$; i.e. $v'$ is an isolated node. But this contradicts the fact that $G'$ is weakly connected, so $\gamma$ must be surjective.

	The condition $\phi(P(v)) = P(\gamma(v))$ shows that the connectivity of the graph $G$ is conserved under $\gamma$ in $G'$. Now we need to consider the orientation of these edges. Consider a directed edge $e_1:= (u, v) \in \mathcal D$ connecting two nodes $u,v \in G$. Since $G$ is weakly connected with $|V| >2$, either $u$ or $v$ has degree greater than one.

	Assume first that $\text{deg }v >1$. Then there is an edge $e_2 \neq e_1$ incident on $v$, connecting $v$ and $w$, so that $e_2 = (v,w) \text{ or } (w,v)$ and $\phi(e_2) =  (\gamma(v),\gamma(w)) \text{ or } (\gamma(w),\gamma(v))$.  Consider any edge in $\mathcal D_L$  connecting $e_1$ and  $e_2$. 
	If $(e_1,e_2) \in \mathcal D_L$, then either $e_2 = (w,v)$ and $\bar \ell_e((e_1,e_2)) = hh$ or $e_2 = (v,w)$ and $\bar \ell_e((e_1,e_2)) = ht$.
	If the label is $hh$, then the edge $(e_2,e_1) \in \mathcal D_L$ with $\bar \ell_e((e_2,e_1)) = hh$ necessarily and vice versa. 
	The possible edge $(e_2,e_1) = ((v,w),(u,v))$ is never constructed in $\mathcal L(G)$, since it is a tail-to-head relationship.
	So by necessity, $q = (e_1,e_2) \in \mathcal D_L$ and the label $\bar \ell_v(q)$ is sufficient to capture the relationship of $e_1,e_2$ as either head-to-head or head-to-tail.
	
	Now consider $\phi(e_1), \phi(e_2)$. From the observation above and the fact that $\phi$ is an isomorphism, we have that $q' = (\phi(e_1),\phi(e_2)) \in \mathcal D_L'$ and $\bar \ell_e'(q') = hh$ or $ht$ are the only two possibilities. If $\bar \ell_e'(q') = hh$, then necessarily $((\gamma(u),\gamma(v)), (\gamma(w),\gamma(v))) \in \mathcal D_L'$, since $\gamma(v)$ is the only common node and must occur in a head-to-head relationship. For $\bar \ell_e'(q') = ht$, we must have that $((\gamma(u),\gamma(v)), (\gamma(v),\gamma(w))) \in \mathcal D_L'$. 
	Since $\phi((u,v)) = (\gamma(u),\gamma(v))$ in both cases, $\phi((v,w)) = (\gamma(v),\gamma(w))$ when the label is $ht$, and $\phi((w,v)) = (\gamma(w),\gamma(v))$ when the label is $hh$, we conclude that $\gamma$ conserves the orientation of the edges when $\text{deg }v >1$.

A similar argument holds when $\text{deg }u > 1$. In this case, the possible labels are $tt$ and $ht$. With these arguments, we have shown that
\begin{equation}\label{eq:phigamma} 
	\phi(e) = \phi((u,v)) = (\gamma(u),\gamma(v)) \quad \forall (u,v) \in \mathcal D, \end{equation}
so that the orientation of the edges in $G$ is preserved under the bijection $\gamma$.

The last step is to show that node labels are respected between $G$ and $G'$. Let $(u,v) \in \mathcal D$ and recall that the label of $(u,v)$ in the extended line digraph is $(\ell_v(u),\ell_v(v))$. Now,
\[ (\ell_v(u),\ell_v(v)) = \bar \ell_v ((u,v)) = \bar \ell_v' (\phi((u,v))) =  \bar \ell_v' ((\gamma(u),\gamma(v))) = (\ell _v'(\gamma(u)),\ell_v'(\gamma(v))). \]
The first equality follows from the definition of the node labels in the extended line digraph. The second equality follows from the preservation of labels under the isomorphism $\phi$. The third equality follows from~\eqref{eq:phigamma} and the last is again the definition of node labeling in extended line digraphs. The chain of equalities allows us to state that $\ell_v(u) = \ell_v'(\gamma(u))$ for all $u \in V$, so that $\gamma$ is a label-respecting bijection that preserves edge orientation and is therefore an isomorphism between $G$ and $G'$. This completes the proof that $\mathcal L(G) \cong \mathcal L(G')$ if and only if $G \cong G'$.

\section{Figures for Lemma~\ref{lem:alwaysIncrease}}\label{app:figs}
The following figure shows  all possible arrangements of the subgraphs induced by $U_{(x)} = \{u,v,x\}$, and the corresponding images $\phi(U_{(x)})$ and $\psi(U_{(x)})$ under the assumptions that $G$ and $G'$ are oriented, transitive closures. We use the notation $x' = \phi(x) = \psi(x) $, $v' = \psi(v) $, $u' = \psi(u)$, $v' = \phi(u) $, and $u' = \phi(v)$. Under each graph we list the score of both feasible solutions $(U_{(x)},\phi)$ and $(U_{(x)},\psi)$, and the cardinality of the set $\mathcal X$ in both cases. For ease of notation, we write  $\phi|_{U_{(x)}}$ as $\phi$, and similarly $\psi|_{U_{(x)}}$ as $\psi$. In each of the following cases, the Equation~\eqref{eq:alwaysIncrease} is verified. 
\newpage

	\begin{multicols}{3}
		\setlength{\columnseprule}{0.4pt}
		
		\begin{center}
		
			\begin{tikzpicture}[main node/.style={circle, draw,thick, inner sep=1pt,minimum size=18pt}, scale=.74,node distance=1cm]
			\tikzstyle{every loop}=[looseness=14]
			
			\node[main node] (v) at (0,0) {$v$} ;
			\node[main node] (u) at (2,0) {$u$} ;
			\node[main node] (x) at (1, {2*cos(30)} ) {$x$} ;
			
			\draw[very thick,-,shorten >= 3pt,shorten <= 3pt]
			(v) edge[->] (u)
			(x) edge[->] (v)
			(x) edge[->] (u)
			;
			
			\node[main node] (v') at (4,0) {$v\ghostprime$} ;
			\node[main node] (u') at (6,0) {$u\ghostprime$} ;
			\node[main node] (x') at (5, {2*cos(30)} ) {$x\ghostprime$} ;
			\node[] (text') at (3,-.75) {$\mathcal P(U_{(x)},\phi)  = 2 \quad |\mathcal X(U_{(x)},\phi)| = 1$};
			\node[] (text') at (3,-1.25) {$\mathcal P(U_{(x)},\psi)  = 3\quad |\mathcal X(U_{(x)},\psi)| = 0$};
			
			\draw[very thick,-,shorten >= 3pt,shorten <= 3pt]
			(v') edge[->] (u')
			(x') edge[->] (v')
			(x') edge[->] (u')
			;
			\end{tikzpicture}
			
			\vspace{.9cm}
			\begin{tikzpicture}[main node/.style={circle, draw,thick, inner sep=1pt,minimum size=18pt}, scale=.74,node distance=1cm]
			\tikzstyle{every loop}=[looseness=14]
			
			\node[main node] (v) at (0,0) {$v$} ;
			\node[main node] (u) at (2,0) {$u$} ;
			\node[main node] (x) at (1, {2*cos(30)} ) {$x$} ;
			
			\draw[very thick,-,shorten >= 3pt,shorten <= 3pt]
			(v) edge[->] (u)
			(x) edge[->] (v)
			(x) edge[->] (u)
			;
			
			\node[main node] (v') at (4,0) {$v\ghostprime$} ;
			\node[main node] (u') at (6,0) {$u\ghostprime$} ;
			\node[main node] (x') at (5, {2*cos(30)} ) {$x\ghostprime$} ;
			\node[] (text') at (3,-.75) {$\mathcal P(U_{(x)},\phi)  = 1  \quad |\mathcal X(U_{(x)},\phi)| = 1$};
			\node[] (text') at (3,-1.25) {$\mathcal P(U_{(x)},\psi)  = 2 \quad |\mathcal X(U_{(x)},\psi)| = 0$};
			
			\draw[very thick,-,shorten >= 3pt,shorten <= 3pt]
			(v') edge[->] (u')
			(x') edge[->] (u')
			;
			\end{tikzpicture}
			
			\vspace{.9cm}
			\begin{tikzpicture}[main node/.style={circle, draw,thick, inner sep=1pt,minimum size=18pt}, scale=.74,node distance=1cm]
			\tikzstyle{every loop}=[looseness=14]
			
			\node[main node] (v) at (0,0) {$v$} ;
			\node[main node] (u) at (2,0) {$u$} ;
			\node[main node] (x) at (1, {2*cos(30)} ) {$x$} ;
			
			\draw[very thick,-,shorten >= 3pt,shorten <= 3pt]
			(v) edge[->] (u)
			(x) edge[->] (v)
			(x) edge[->] (u)
			;
			
			\node[main node] (v') at (4,0) {$v\ghostprime$} ;
			\node[main node] (u') at (6,0) {$u\ghostprime$} ;
			\node[main node] (x') at (5, {2*cos(30)} ) {$x\ghostprime$} ;
			\node[] (text') at (3,-.75) {$\mathcal P(U_{(x)},\phi)  = 0  \quad |\mathcal X(U_{(x)},\phi)| = 2$};
			\node[] (text') at (3,-1.25) {$\mathcal P(U_{(x)},\psi)  = 1 \quad |\mathcal X(U_{(x)},\psi)| = 1$};
			
			\draw[very thick,-,shorten >= 3pt,shorten <= 3pt]
			(v') edge[->] (u')
			(x') edge[<-] (v')
			;
			\end{tikzpicture}
			
			\vspace{.9cm}
			\begin{tikzpicture}[main node/.style={circle, draw,thick, inner sep=1pt,minimum size=18pt}, scale=.74,node distance=1cm]
			\tikzstyle{every loop}=[looseness=14]
			
			\node[main node] (v) at (0,0) {$v$} ;
			\node[main node] (u) at (2,0) {$u$} ;
			\node[main node] (x) at (1, {2*cos(30)} ) {$x$} ;
			
			\draw[very thick,-,shorten >= 3pt,shorten <= 3pt]
			(v) edge[->] (u)
			(x) edge[->] (v)
			(x) edge[->] (u)
			;
			
			\node[main node] (v') at (4,0) {$v\ghostprime$} ;
			\node[main node] (u') at (6,0) {$u\ghostprime$} ;
			\node[main node] (x') at (5, {2*cos(30)} ) {$x\ghostprime$} ;
			\node[] (text') at (3,-.75) {$\mathcal P(U_{(x)},\phi)  = 1  \quad |\mathcal X(U_{(x)},\phi)| = 2$};
			\node[] (text') at (3,-1.25) {$\mathcal P(U_{(x)},\psi)  = 2 \quad |\mathcal X(U_{(x)},\psi)| = 1$};
			
			\draw[very thick,-,shorten >= 3pt,shorten <= 3pt]
			(v') edge[->] (u')
			(x') edge[<-] (v')
			(x') edge[->] (u')
			;
			\end{tikzpicture}
			
			\vspace{.9cm}
			\begin{tikzpicture}[main node/.style={circle, draw,thick, inner sep=1pt,minimum size=18pt}, scale=.74,node distance=1cm]
			\tikzstyle{every loop}=[looseness=14]
			
			\node[main node] (v) at (0,0) {$v$} ;
			\node[main node] (u) at (2,0) {$u$} ;
			\node[main node] (x) at (1, {2*cos(30)} ) {$x$} ;
			
			\draw[very thick,-,shorten >= 3pt,shorten <= 3pt]
			(v) edge[->] (u)
			(x) edge[->] (v)
			(x) edge[->] (u)
			;
			
			\node[main node] (v') at (4,0) {$v\ghostprime$} ;
			\node[main node] (u') at (6,0) {$u\ghostprime$} ;
			\node[main node] (x') at (5, {2*cos(30)} ) {$x\ghostprime$} ;
			\node[] (text') at (3,-.75) {$\mathcal P(U_{(x)},\phi)  = 0  \quad |\mathcal X(U_{(x)},\phi)| = 3$};
			\node[] (text') at (3,-1.25) {$\mathcal P(U_{(x)},\psi)  = 1 \quad |\mathcal X(U_{(x)},\psi)| = 2$};
			
			\draw[very thick,-,shorten >= 3pt,shorten <= 3pt]
			(v') edge[->] (u')
			(x') edge[<-] (v')
			(x') edge[<-] (u')
			;
			\end{tikzpicture}
			
			\vspace{.9cm}
			\begin{tikzpicture}[main node/.style={circle, draw,thick, inner sep=1pt,minimum size=18pt}, scale=.74,node distance=1cm]
			\tikzstyle{every loop}=[looseness=14]
			
			\node[main node] (v) at (0,0) {$v$} ;
			\node[main node] (u) at (2,0) {$u$} ;
			\node[main node] (x) at (1, {2*cos(30)} ) {$x$} ;
			
			\draw[very thick,-,shorten >= 3pt,shorten <= 3pt]
			(v) edge[->] (u)
			(x) edge[<-] (v)
			(x) edge[<-] (u)
			;
			
			\node[main node] (v') at (4,0) {$v\ghostprime$} ;
			\node[main node] (u') at (6,0) {$u\ghostprime$} ;
			\node[main node] (x') at (5, {2*cos(30)} ) {$x\ghostprime$} ;
			\node[] (text') at (3,-.75) {$\mathcal P(U_{(x)},\phi)  = 0  \quad |\mathcal X(U_{(x)},\phi)| = 2$};
			\node[] (text') at (3,-1.25) {$\mathcal P(U_{(x)},\psi)  = 1 \quad |\mathcal X(U_{(x)},\psi)| = 1$};
			
			\draw[very thick,-,shorten >= 3pt,shorten <= 3pt]
			(v') edge[->] (u')
			(x') edge[->] (u')
			;
			\end{tikzpicture}
			
			\vspace{.9cm}
			\begin{tikzpicture}[main node/.style={circle, draw,thick, inner sep=1pt,minimum size=18pt}, scale=.74,node distance=1cm]
			\tikzstyle{every loop}=[looseness=14]
			
			\node[main node] (v) at (0,0) {$v$} ;
			\node[main node] (u) at (2,0) {$u$} ;
			\node[main node] (x) at (1, {2*cos(30)} ) {$x$} ;
			
			\draw[very thick,-,shorten >= 3pt,shorten <= 3pt]
			(v) edge[->] (u)
			(x) edge[<-] (v)
			(x) edge[<-] (u)
			;
			
			\node[main node] (v') at (4,0) {$v\ghostprime$} ;
			\node[main node] (u') at (6,0) {$u\ghostprime$} ;
			\node[main node] (x') at (5, {2*cos(30)} ) {$x\ghostprime$} ;
			\node[] (text') at (3,-.75) {$\mathcal P(U_{(x)},\phi)  = 1  \quad |\mathcal X(U_{(x)},\phi)| = 1$};
			\node[] (text') at (3,-1.25) {$\mathcal P(U_{(x)},\psi)  = 2 \quad |\mathcal X(U_{(x)},\psi)| = 0$};
			
			\draw[very thick,-,shorten >= 3pt,shorten <= 3pt]
			(v') edge[->] (u')
			(x') edge[<-] (v')
			;
			\end{tikzpicture}
			
			\vspace{.9cm}
			\begin{tikzpicture}[main node/.style={circle, draw,thick, inner sep=1pt,minimum size=18pt}, scale=.74,node distance=1cm]
			\tikzstyle{every loop}=[looseness=14]
			
			\node[main node] (v) at (0,0) {$v$} ;
			\node[main node] (u) at (2,0) {$u$} ;
			\node[main node] (x) at (1, {2*cos(30)} ) {$x$} ;
			
			\draw[very thick,-,shorten >= 3pt,shorten <= 3pt]
			(v) edge[->] (u)
			(x) edge[<-] (v)
			(x) edge[<-] (u)
			;
			
			\node[main node] (v') at (4,0) {$v\ghostprime$} ;
			\node[main node] (u') at (6,0) {$u\ghostprime$} ;
			\node[main node] (x') at (5, {2*cos(30)} ) {$x\ghostprime$} ;
			\node[] (text') at (3,-.75) {$\mathcal P(U_{(x)},\phi)  = 1 \quad |\mathcal X(U_{(x)},\phi)| = 2$};
			\node[] (text') at (3,-1.25) {$\mathcal P(U_{(x)},\psi)  = 2 \quad |\mathcal X(U_{(x)},\psi)| = 1$};
			
			\draw[very thick,-,shorten >= 3pt,shorten <= 3pt]
			(v') edge[->] (u')
			(x') edge[<-] (v')
			(x') edge[->] (u')
			;
			\end{tikzpicture}
			
			\vspace{.9cm}
			\begin{tikzpicture}[main node/.style={circle, draw,thick, inner sep=1pt,minimum size=18pt}, scale=.74,node distance=1cm]
			\tikzstyle{every loop}=[looseness=14]
			
			\node[main node] (v) at (0,0) {$v$} ;
			\node[main node] (u) at (2,0) {$u$} ;
			\node[main node] (x) at (1, {2*cos(30)} ) {$x$} ;
			
			\draw[very thick,-,shorten >= 3pt,shorten <= 3pt]
			(v) edge[->] (u)
			(x) edge[<-] (v)
			(x) edge[<-] (u)
			;
			
			\node[main node] (v') at (4,0) {$v\ghostprime$} ;
			\node[main node] (u') at (6,0) {$u\ghostprime$} ;
			\node[main node] (x') at (5, {2*cos(30)} ) {$x\ghostprime$} ;
			\node[] (text') at (3,-.75) {$\mathcal P(U_{(x)},\phi)  = 2  \quad |\mathcal X(U_{(x)},\phi)| = 1$};
			\node[] (text') at (3,-1.25) {$\mathcal P(U_{(x)},\psi)  = 3 \quad |\mathcal X(U_{(x)},\psi)| = 0$};
			
			\draw[very thick,-,shorten >= 3pt,shorten <= 3pt]
			(v') edge[->] (u')
			(x') edge[<-] (v')
			(x') edge[<-] (u')
			;
			\end{tikzpicture}
			
			\vspace{.9cm}
			\begin{tikzpicture}[main node/.style={circle, draw,thick, inner sep=1pt,minimum size=18pt}, scale=.74,node distance=1cm]
			\tikzstyle{every loop}=[looseness=14]
			
			\node[main node] (v) at (0,0) {$v$} ;
			\node[main node] (u) at (2,0) {$u$} ;
			\node[main node] (x) at (1, {2*cos(30)} ) {$x$} ;
			
			\draw[very thick,-,shorten >= 3pt,shorten <= 3pt]
			(v) edge[->] (u)
			(x) edge[->] (u)
			;
			
			\node[main node] (v') at (4,0) {$v\ghostprime$} ;
			\node[main node] (u') at (6,0) {$u\ghostprime$} ;
			\node[main node] (x') at (5, {2*cos(30)} ) {$x\ghostprime$} ;
			\node[] (text') at (3,-.75) {$\mathcal P(U_{(x)},\phi)  = 0  \quad |\mathcal X(U_{(x)},\phi)| = 1$};
			\node[] (text') at (3,-1.25) {$\mathcal P(U_{(x)},\psi)  = 2 \quad |\mathcal X(U_{(x)},\psi)| = 0$};
			
			\draw[very thick,-,shorten >= 3pt,shorten <= 3pt]
			(v') edge[->] (u')
			(x') edge[->] (u')
			;
			\end{tikzpicture}
			
			\vspace{.9cm}
			\begin{tikzpicture}[main node/.style={circle, draw,thick, inner sep=1pt,minimum size=18pt}, scale=.74,node distance=1cm]
			\tikzstyle{every loop}=[looseness=14]
			
			\node[main node] (v) at (0,0) {$v$} ;
			\node[main node] (u) at (2,0) {$u$} ;
			\node[main node] (x) at (1, {2*cos(30)} ) {$x$} ;
			
			\draw[very thick,-,shorten >= 3pt,shorten <= 3pt]
			(v) edge[->] (u)
			(x) edge[->] (u)
			;
			
			\node[main node] (v') at (4,0) {$v\ghostprime$} ;
			\node[main node] (u') at (6,0) {$u\ghostprime$} ;
			\node[main node] (x') at (5, {2*cos(30)} ) {$x\ghostprime$} ;
			\node[] (text') at (3,-.75) {$\mathcal P(U_{(x)},\phi)  = 0  \quad |\mathcal X(U_{(x)},\phi)| = 2$};
			\node[] (text') at (3,-1.25) {$\mathcal P(U_{(x)},\psi)  = 1 \quad |\mathcal X(U_{(x)},\psi)| = 0$};
			
			\draw[very thick,-,shorten >= 3pt,shorten <= 3pt]
			(v') edge[->] (u')
			(x') edge[<-] (v')
			;
			\end{tikzpicture}
			
			\vspace{.9cm}
			\begin{tikzpicture}[main node/.style={circle, draw,thick, inner sep=1pt,minimum size=18pt}, scale=.74,node distance=1cm]
			\tikzstyle{every loop}=[looseness=14]
			
			\node[main node] (v) at (0,0) {$v$} ;
			\node[main node] (u) at (2,0) {$u$} ;
			\node[main node] (x) at (1, {2*cos(30)} ) {$x$} ;
			
			\draw[very thick,-,shorten >= 3pt,shorten <= 3pt]
			(v) edge[->] (u)
			(x) edge[->] (u)
			;
			
			\node[main node] (v') at (4,0) {$v\ghostprime$} ;
			\node[main node] (u') at (6,0) {$u\ghostprime$} ;
			\node[main node] (x') at (5, {2*cos(30)} ) {$x\ghostprime$} ;
			\node[] (text') at (3,-.75) {$\mathcal P(U_{(x)},\phi)  = 0  \quad |\mathcal X(U_{(x)},\phi)| = 2$};
			\node[] (text') at (3,-1.25) {$\mathcal P(U_{(x)},\psi)  = 2 \quad |\mathcal X(U_{(x)},\psi)| = 0$};
			
			\draw[very thick,-,shorten >= 3pt,shorten <= 3pt]
			(v') edge[->] (u')
			(x') edge[<-] (v')
			(x') edge[->] (u')
			;
			\end{tikzpicture}
			
			\vspace{.9cm}
			\begin{tikzpicture}[main node/.style={circle, draw,thick, inner sep=1pt,minimum size=18pt}, scale=.74,node distance=1cm]
			\tikzstyle{every loop}=[looseness=14]
			
			\node[main node] (v) at (0,0) {$v$} ;
			\node[main node] (u) at (2,0) {$u$} ;
			\node[main node] (x) at (1, {2*cos(30)} ) {$x$} ;
			
			\draw[very thick,-,shorten >= 3pt,shorten <= 3pt]
			(v) edge[->] (u)
			(x) edge[<-] (v)
			;
			
			\node[main node] (v') at (4,0) {$v\ghostprime$} ;
			\node[main node] (u') at (6,0) {$u\ghostprime$} ;
			\node[main node] (x') at (5, {2*cos(30)} ) {$x\ghostprime$} ;
			\node[] (text') at (3,-.75) {$\mathcal P(U_{(x)},\phi)  = 0  \quad |\mathcal X(U_{(x)},\phi)| = 1$};
			\node[] (text') at (3,-1.25) {$\mathcal P(U_{(x)},\psi)  = 2 \quad |\mathcal X(U_{(x)},\psi)| = 0$};
			
			\draw[very thick,-,shorten >= 3pt,shorten <= 3pt]
			(v') edge[->] (u')
			(x') edge[<-] (v')
			;
			\end{tikzpicture}
			
			\vspace{.9cm}
			\begin{tikzpicture}[main node/.style={circle, draw,thick, inner sep=1pt,minimum size=18pt}, scale=.74,node distance=1cm]
			\tikzstyle{every loop}=[looseness=14]
			
			\node[main node] (v) at (0,0) {$v$} ;
			\node[main node] (u) at (2,0) {$u$} ;
			\node[main node] (x) at (1, {2*cos(30)} ) {$x$} ;
			
			\draw[very thick,-,shorten >= 3pt,shorten <= 3pt]
			(v) edge[->] (u)
			(x) edge[<-] (v)
			;
			
			\node[main node] (v') at (4,0) {$v\ghostprime$} ;
			\node[main node] (u') at (6,0) {$u\ghostprime$} ;
			\node[main node] (x') at (5, {2*cos(30)} ) {$x\ghostprime$} ;
			\node[] (text') at (3,-.75) {$\mathcal P(U_{(x)},\phi)  = 0  \quad |\mathcal X(U_{(x)},\phi)| = 2$};
			\node[] (text') at (3,-1.25) {$\mathcal P(U_{(x)},\psi)  =2 \quad |\mathcal X(U_{(x)},\psi)| = 0$};
			
			\draw[very thick,-,shorten >= 3pt,shorten <= 3pt]
			(v') edge[->] (u')
			(x') edge[<-] (v')
			(x') edge[->] (u')
			;
			\end{tikzpicture}
			
			\vspace{.9cm}
			\begin{tikzpicture}[main node/.style={circle, draw,thick, inner sep=1pt,minimum size=18pt}, scale=.74,node distance=1cm]
			\tikzstyle{every loop}=[looseness=14]
			
			\node[main node] (v) at (0,0) {$v$} ;
			\node[main node] (u) at (2,0) {$u$} ;
			\node[main node] (x) at (1, {2*cos(30)} ) {$x$} ;
			
			\draw[very thick,-,shorten >= 3pt,shorten <= 3pt]
			(v) edge[->] (u)
			(x) edge[<-] (v)
			(x) edge[->] (u)
			;
			
			\node[main node] (v') at (4,0) {$v\ghostprime$} ;
			\node[main node] (u') at (6,0) {$u\ghostprime$} ;
			\node[main node] (x') at (5, {2*cos(30)} ) {$x\ghostprime$} ;
			\node[] (text') at (3,-.75) {$\mathcal P(U_{(x)},\phi)  = 0  \quad |\mathcal X(U_{(x)},\phi)| = 3$};
			\node[] (text') at (3,-1.25) {$\mathcal P(U_{(x)},\psi)  = 3 \quad |\mathcal X(U_{(x)},\psi)| = 0$};
			
			\draw[very thick,-,shorten >= 3pt,shorten <= 3pt]
			(v') edge[->] (u')
			(x') edge[<-] (v')
			(x') edge[->] (u')
			;
			\end{tikzpicture}
			
			\vspace{.9cm}
			\begin{tikzpicture}[main node/.style={circle, draw,thick, inner sep=1pt,minimum size=18pt}, scale=.74,node distance=1cm]
			\tikzstyle{every loop}=[looseness=14]
			
			\node[main node] (v) at (0,0) {$v$} ;
			\node[main node] (u) at (2,0) {$u$} ;
			\node[main node] (x) at (1, {2*cos(30)} ) {$x$} ;
			
			\draw[very thick,-,shorten >= 3pt,shorten <= 3pt]
			(v) edge[->] (u)
			;
			
			\node[main node] (v') at (4,0) {$v\ghostprime$} ;
			\node[main node] (u') at (6,0) {$u\ghostprime$} ;
			\node[main node] (x') at (5, {2*cos(30)} ) {$x\ghostprime$} ;
			\node[] (text') at (3,-.75) {$\mathcal P(W,\phi)  = 0 \quad |\mathcal X(W,\phi)| = 1$};
			\node[] (text') at (3,-1.25) {$\mathcal P(W,\psi)  = 1\quad |\mathcal X(W,\psi)| = 0$};
			
			\draw[very thick,-,shorten >= 3pt,shorten <= 3pt]
			(v') edge[->] (u')
			(x') edge[->] (v')
			(x') edge[->] (u')
			;
			\end{tikzpicture}
			
			\vspace{.9cm}
			\begin{tikzpicture}[main node/.style={circle, draw,thick, inner sep=1pt,minimum size=18pt}, scale=.74,node distance=1cm]
			\tikzstyle{every loop}=[looseness=14]
			
			\node[main node] (v) at (0,0) {$v$} ;
			\node[main node] (u) at (2,0) {$u$} ;
			\node[main node] (x) at (1, {2*cos(30)} ) {$x$} ;
			
			\draw[very thick,-,shorten >= 3pt,shorten <= 3pt]
			(v) edge[->] (u)
			;
			
			\node[main node] (v') at (4,0) {$v\ghostprime$} ;
			\node[main node] (u') at (6,0) {$u\ghostprime$} ;
			\node[main node] (x') at (5, {2*cos(30)} ) {$x\ghostprime$} ;
			\node[] (text') at (3,-.75) {$\mathcal P(W,\phi)  = 0 \quad |\mathcal X(W,\phi)| = 1$};
			\node[] (text') at (3,-1.25) {$\mathcal P(W,\psi)  = 1\quad |\mathcal X(W,\psi)| = 0$};
			
			\draw[very thick,-,shorten >= 3pt,shorten <= 3pt]
			(v') edge[->] (u')
			(x') edge[->] (u')
			;
			\end{tikzpicture}
			
			\vspace{.9cm}
			\begin{tikzpicture}[main node/.style={circle, draw,thick, inner sep=1pt,minimum size=18pt}, scale=.74,node distance=1cm]
			\tikzstyle{every loop}=[looseness=14]
			
			\node[main node] (v) at (0,0) {$v$} ;
			\node[main node] (u) at (2,0) {$u$} ;
			\node[main node] (x) at (1, {2*cos(30)} ) {$x$} ;
			
			\draw[very thick,-,shorten >= 3pt,shorten <= 3pt]
			(v) edge[->] (u)
			;
			
			\node[main node] (v') at (4,0) {$v\ghostprime$} ;
			\node[main node] (u') at (6,0) {$u\ghostprime$} ;
			\node[main node] (x') at (5, {2*cos(30)} ) {$x\ghostprime$} ;
			\node[] (text') at (3,-.75) {$\mathcal P(W,\phi)  = 0 \quad |\mathcal X(W,\phi)| = 1$};
			\node[] (text') at (3,-1.25) {$\mathcal P(W,\psi)  = 1\quad |\mathcal X(W,\psi)| = 0$};
			
			\draw[very thick,-,shorten >= 3pt,shorten <= 3pt]
			(v') edge[->] (u')
			(x') edge[<-] (v')
			
			;
			\end{tikzpicture}
			
			\vspace{.9cm}
			\begin{tikzpicture}[main node/.style={circle, draw,thick, inner sep=1pt,minimum size=18pt}, scale=.74,node distance=1cm]
			\tikzstyle{every loop}=[looseness=14]
			
			\node[main node] (v) at (0,0) {$v$} ;
			\node[main node] (u) at (2,0) {$u$} ;
			\node[main node] (x) at (1, {2*cos(30)} ) {$x$} ;
			
			\draw[very thick,-,shorten >= 3pt,shorten <= 3pt]
			(v) edge[->] (u)
			;
			
			\node[main node] (v') at (4,0) {$v\ghostprime$} ;
			\node[main node] (u') at (6,0) {$u\ghostprime$} ;
			\node[main node] (x') at (5, {2*cos(30)} ) {$x\ghostprime$} ;
			\node[] (text') at (3,-.75) {$\mathcal P(W,\phi)  = 0 \quad |\mathcal X(W,\phi)| = 1$};
			\node[] (text') at (3,-1.25) {$\mathcal P(W,\psi)  = 1\quad |\mathcal X(W,\psi)| = 0$};
			
			\draw[very thick,-,shorten >= 3pt,shorten <= 3pt]
			(v') edge[->] (u')
			(x') edge[<-] (v')
			(x') edge[->] (u')
			;
			\end{tikzpicture}
			
			\vspace{.9cm}
			\begin{tikzpicture}[main node/.style={circle, draw,thick, inner sep=1pt,minimum size=18pt}, scale=.74,node distance=1cm]
			\tikzstyle{every loop}=[looseness=14]
			
			\node[main node] (v) at (0,0) {$v$} ;
			\node[main node] (u) at (2,0) {$u$} ;
			\node[main node] (x) at (1, {2*cos(30)} ) {$x$} ;
			
			\draw[very thick,-,shorten >= 3pt,shorten <= 3pt]
			(v) edge[->] (u)
			;
			
			\node[main node] (v') at (4,0) {$v\ghostprime$} ;
			\node[main node] (u') at (6,0) {$u\ghostprime$} ;
			\node[main node] (x') at (5, {2*cos(30)} ) {$x\ghostprime$} ;
			\node[] (text') at (3,-.75) {$\mathcal P(W,\phi)  = 0 \quad |\mathcal X(W,\phi)| = 1$};
			\node[] (text') at (3,-1.25) {$\mathcal P(W,\psi)  = 1\quad |\mathcal X(W,\psi)| = 0$};
			
			\draw[very thick,-,shorten >= 3pt,shorten <= 3pt]
			(v') edge[->] (u')
			(x') edge[<-] (v')
			(x') edge[<-] (u')
			;
			\end{tikzpicture}
			
			\vspace{.9cm}
			\begin{tikzpicture}[main node/.style={circle, draw,thick, inner sep=1pt,minimum size=18pt}, scale=.74,node distance=1cm]
			\tikzstyle{every loop}=[looseness=14]
			
			\node[main node] (v) at (0,0) {$v$} ;
			\node[main node] (u) at (2,0) {$u$} ;
			\node[main node] (x) at (1, {2*cos(30)} ) {$x$} ;
			
			\draw[very thick,-,shorten >= 3pt,shorten <= 3pt]
			(v) edge[->] (u)
			;
			
			\node[main node] (v') at (4,0) {$v\ghostprime$} ;
			\node[main node] (u') at (6,0) {$u\ghostprime$} ;
			\node[main node] (x') at (5, {2*cos(30)} ) {$x\ghostprime$} ;
			\node[] (text') at (3,-.75) {$\mathcal P(W,\phi)  = 0 \quad |\mathcal X(W,\phi)| = 1$};
			\node[] (text') at (3,-1.25) {$\mathcal P(W,\psi)  = 1\quad |\mathcal X(W,\psi)| = 0$};
			
			\draw[very thick,-,shorten >= 3pt,shorten <= 3pt]
			(v') edge[->] (u')
			(x) edge[->] (v)
			(x) edge[->] (u)
			;
			\end{tikzpicture}
			
			\vspace{.9cm}
			\begin{tikzpicture}[main node/.style={circle, draw,thick, inner sep=1pt,minimum size=18pt}, scale=.74,node distance=1cm]
			\tikzstyle{every loop}=[looseness=14]
			
			\node[main node] (v) at (0,0) {$v$} ;
			\node[main node] (u) at (2,0) {$u$} ;
			\node[main node] (x) at (1, {2*cos(30)} ) {$x$} ;
			
			\draw[very thick,-,shorten >= 3pt,shorten <= 3pt]
			(v) edge[->] (u)
			;
			
			\node[main node] (v') at (4,0) {$v\ghostprime$} ;
			\node[main node] (u') at (6,0) {$u\ghostprime$} ;
			\node[main node] (x') at (5, {2*cos(30)} ) {$x\ghostprime$} ;
			\node[] (text') at (3,-.75) {$\mathcal P(W,\phi)  = 0 \quad |\mathcal X(W,\phi)| = 1$};
			\node[] (text') at (3,-1.25) {$\mathcal P(W,\psi)  = 1\quad |\mathcal X(W,\psi)| = 0$};
			
			\draw[very thick,-,shorten >= 3pt,shorten <= 3pt]
			(v') edge[->] (u')
			(x) edge[->] (u)
			;
			\end{tikzpicture}
			
			\vspace{.9cm}
			\begin{tikzpicture}[main node/.style={circle, draw,thick, inner sep=1pt,minimum size=18pt}, scale=.74,node distance=1cm]
			\tikzstyle{every loop}=[looseness=14]
			
			\node[main node] (v) at (0,0) {$v$} ;
			\node[main node] (u) at (2,0) {$u$} ;
			\node[main node] (x) at (1, {2*cos(30)} ) {$x$} ;
			
			\draw[very thick,-,shorten >= 3pt,shorten <= 3pt]
			(v) edge[->] (u)
			;
			
			\node[main node] (v') at (4,0) {$v\ghostprime$} ;
			\node[main node] (u') at (6,0) {$u\ghostprime$} ;
			\node[main node] (x') at (5, {2*cos(30)} ) {$x\ghostprime$} ;
			\node[] (text') at (3,-.75) {$\mathcal P(W,\phi)  = 0 \quad |\mathcal X(W,\phi)| = 1$};
			\node[] (text') at (3,-1.25) {$\mathcal P(W,\psi)  = 1\quad |\mathcal X(W,\psi)| = 0$};
			
			\draw[very thick,-,shorten >= 3pt,shorten <= 3pt]
			(v') edge[->] (u')
			(x) edge[<-] (v)
			
			;
			\end{tikzpicture}
			
			\vspace{.9cm}
			\begin{tikzpicture}[main node/.style={circle, draw,thick, inner sep=1pt,minimum size=18pt}, scale=.74,node distance=1cm]
			\tikzstyle{every loop}=[looseness=14]
			
			\node[main node] (v) at (0,0) {$v$} ;
			\node[main node] (u) at (2,0) {$u$} ;
			\node[main node] (x) at (1, {2*cos(30)} ) {$x$} ;
			
			\draw[very thick,-,shorten >= 3pt,shorten <= 3pt]
			(v) edge[->] (u)
			;
			
			\node[main node] (v') at (4,0) {$v\ghostprime$} ;
			\node[main node] (u') at (6,0) {$u\ghostprime$} ;
			\node[main node] (x') at (5, {2*cos(30)} ) {$x\ghostprime$} ;
			\node[] (text') at (3,-.75) {$\mathcal P(W,\phi)  = 0 \quad |\mathcal X(W,\phi)| = 1$};
			\node[] (text') at (3,-1.25) {$\mathcal P(W,\psi)  = 1\quad |\mathcal X(W,\psi)| = 0$};
			
			\draw[very thick,-,shorten >= 3pt,shorten <= 3pt]
			(v') edge[->] (u')
			(x) edge[<-] (v)
			(x) edge[->] (u)
			;
			\end{tikzpicture}
			
			\vspace{.9cm}
			\begin{tikzpicture}[main node/.style={circle, draw,thick, inner sep=1pt,minimum size=18pt}, scale=.74,node distance=1cm]
			\tikzstyle{every loop}=[looseness=14]
			
			\node[main node] (v) at (0,0) {$v$} ;
			\node[main node] (u) at (2,0) {$u$} ;
			\node[main node] (x) at (1, {2*cos(30)} ) {$x$} ;
			
			\draw[very thick,-,shorten >= 3pt,shorten <= 3pt]
			(v) edge[->] (u)
			;
			
			\node[main node] (v') at (4,0) {$v\ghostprime$} ;
			\node[main node] (u') at (6,0) {$u\ghostprime$} ;
			\node[main node] (x') at (5, {2*cos(30)} ) {$x\ghostprime$} ;
			\node[] (text') at (3,-.75) {$\mathcal P(W,\phi)  = 0 \quad |\mathcal X(W,\phi)| = 1$};
			\node[] (text') at (3,-1.25) {$\mathcal P(W,\psi)  = 1\quad |\mathcal X(W,\psi)| = 0$};
			
			\draw[very thick,-,shorten >= 3pt,shorten <= 3pt]
			(v') edge[->] (u')
			(x) edge[<-] (v)
			(x) edge[<-] (u)
			;
			\end{tikzpicture}
			
			\vspace{.9cm}
			\begin{tikzpicture}[main node/.style={circle, draw,thick, inner sep=1pt,minimum size=18pt}, scale=.74,node distance=1cm]
			\tikzstyle{every loop}=[looseness=14]
			
			\node[main node] (v) at (0,0) {$v$} ;
			\node[main node] (u) at (2,0) {$u$} ;
			\node[main node] (x) at (1, {2*cos(30)} ) {$x$} ;
			
			\draw[very thick,-,shorten >= 3pt,shorten <= 3pt]
			(v) edge[->] (u)
			;
			
			\node[main node] (v') at (4,0) {$v\ghostprime$} ;
			\node[main node] (u') at (6,0) {$u\ghostprime$} ;
			\node[main node] (x') at (5, {2*cos(30)} ) {$x\ghostprime$} ;
			\node[] (text') at (3,-.75) {$\mathcal P(W,\phi)  = 0 \quad |\mathcal X(W,\phi)| = 1$};
			\node[] (text') at (3,-1.25) {$\mathcal P(W,\psi)  = 1\quad |\mathcal X(W,\psi)| = 0$};
			
			\draw[very thick,-,shorten >= 3pt,shorten <= 3pt]
			(v') edge[->] (u')
			;
			\end{tikzpicture}
			\vspace{.9cm}

		\end{center}
		
	\end{multicols}
	
\section{Algorithms}\label{sec:alg}
The following pseudocode, written in Python style, gives an algorithm for calculating DMCES($G,G'$) for node-labeled digraphs $G = (V, \mathcal D, \ell_v,\emptyset)$ and $G' = (V', \mathcal D', \ell'_v,\emptyset)$ that leverages Theorem~\ref{thr:size}.
The idea of the algorithm is to, at every recursive call, create a separate branch for each way we can grow the current feasible solution $(U,\phi)$ into a feasible solution $(U', \phi')$ such that $U \subseteq U'$ and $\phi'|_U = \phi$ (see Figure~\ref{fig:recursion}).
We do not keep track of $U$ explicitly, rather it is the domain of the map $\phi$.  The map $\phi$ is represented as a set of ordered pairs $\Phi \subseteq V \times V'$ with the property that if  $(v_1,v_1') , (v_2,v_2')  \in V \times V'$, then $v_1 \neq v_2$ and $v'_1 \neq v'_2$. Moreover, $\ell_v(v_i) = \ell'_v(v_i')$ whenever $(v_i,v_i') \in \Phi$. The pairs $\Phi$ define a label-preserving, injective function $\phi: U \to V'$. As a shortcut, we sometimes refer to the collection of first elements of $\Phi$ as the domain of $\Phi$, and similarly refer to the second elements as the range or image. These are in fact the domain and range of $\phi$. 
The set $X \subseteq V$  is an  ordered list of nodes containing all nodes as yet unassigned to $\Phi$.

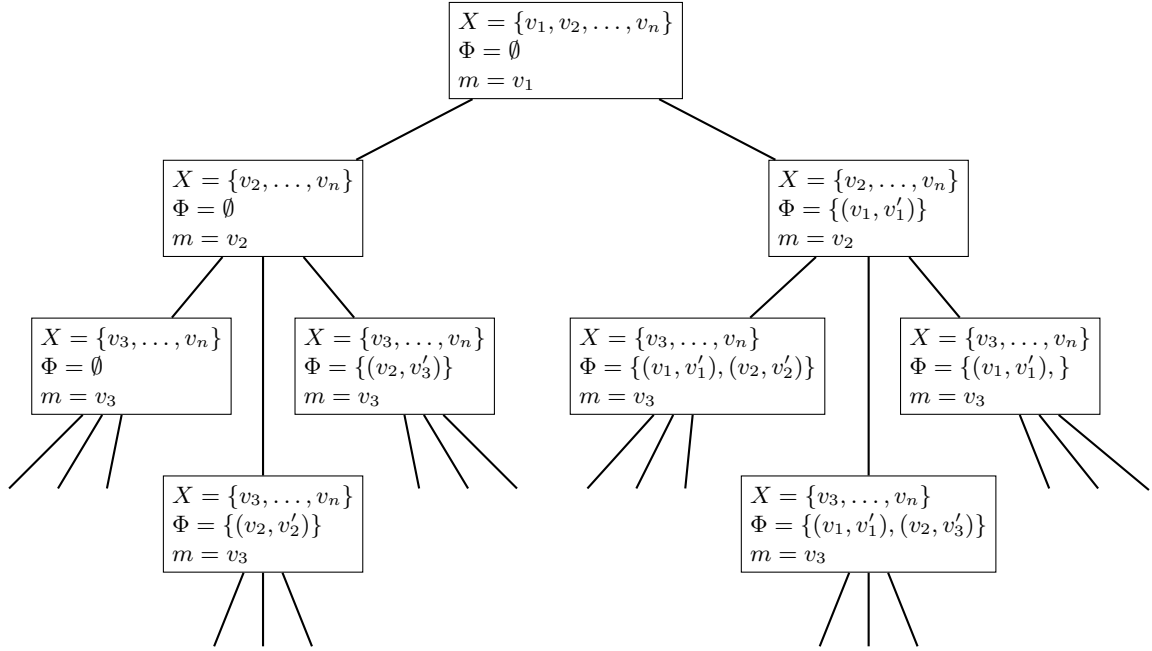
\begin{figure}[h!]
	\begin{center}
		\begin{tikzpicture}[main node/.style={rectangle,fill=white!20,draw,font=\sffamily\small},scale=1.4, align = left]
		\node[main node] (1) at (.375,0) {$X = \{v_1,v_2,\dots,v_n\}
			$\\$ \Phi= \emptyset
			$\\$m=v_1	$};
		\node[main node] (2) at (-2.5,-1.5) {$X = \{v_2,\dots,v_n\}
			$\\$ \Phi= \emptyset
			$\\$m=v_2	$};
		\node[main node] (3) at (3.25,-1.5) {$X = \{v_2,\dots,v_n\}
			$\\$ \Phi= \{(v_1,v'_1)\}
			$\\$m=v_2	$};
		\node[main node] (5) at (-3.75,-3) {$X = \{v_3,\dots,v_n\}
			$\\$ \Phi= \emptyset
			$\\$m=v_3	$};
		\node[main node] (6) at (-2.5,-4.5) {$X = \{v_3,\dots,v_n\}
			$\\$ \Phi= \{(v_2,v'_2)\}
			$\\$m=v_3	$};
		\node[main node] (7) at (-1.25,-3) {$X = \{v_3,\dots,v_n\}
			$\\$ \Phi= \{(v_2,v'_3)\}
			$\\$m=v_3	$};
		\node[main node] (8) at (4.5,-3) {$X = \{v_3,\dots,v_n\}
			$\\$ \Phi= \{(v_1,v'_1),\}
			$\\$m=v_3	$};
		\node[main node] (9) at (3.25,-4.5) {$X = \{v_3,\dots,v_n\}
			$\\$ \Phi= \{(v_1,v'_1),(v_2,v'_3)\}
			$\\$m=v_3	$};
		\node[main node] (10) at (1.6255,-3) {$X = \{v_3,\dots,v_n\}
			$\\$ \Phi= \{(v_1,v'_1),(v_2,v'_2)\}
			$\\$m=v_3	$};
		\node[] (5a) at (-4.5,-4.25) {};
		\node[] (5b) at (-5,-4.25) {};
		\node[] (5c) at (-4,-4.25) {};
		\node[] (7a) at (4-4.5,-4.25) {};
		\node[] (7b) at (4-5,-4.25) {};
		\node[] (7c) at (4-4,-4.25) {};
		\node[] (10a) at (5.5-4.5,-4.25) {};
		\node[] (10b) at (5.5-5,-4.25) {};
		\node[] (10c) at (5.5-4,-4.25) {};
		\node[] (8a) at (10-4.5,-4.25) {};
		\node[] (8b) at (10-5,-4.25) {};
		\node[] (8c) at (10-4,-4.25) {};
		\node[] (6a) at (-2.5,-5.75) {};
		\node[] (6b) at (-3,-5.75) {};
		\node[] (6c) at (-2,-5.75) {};
		\node[] (9a) at (3.25,-5.75) {};
		\node[] (9b) at (3.75,-5.75) {};
		\node[] (9c) at (2.75,-5.75) {};
		
		\path[->,>=angle 90,thick]
		(1) edge[-] node[] {} (2)
		(1) edge[-] node[] {} (3)
		(2) edge[-] node[] {} (5)
		(2) edge[-] node[] {} (6)
		(2) edge[-] node[] {} (7)
		(3) edge[-] node[] {} (8)
		(3) edge[-] node[] {} (9)
		(3) edge[-] node[] {} (10)
		(5) edge[-] node[] {} (5a)
		(5) edge[-] node[] {} (5b)
		(5) edge[-] node[] {} (5c)
		(7) edge[-] node[] {} (7a)
		(7) edge[-] node[] {} (7b)
		(7) edge[-] node[] {} (7c)
		(8) edge[-] node[] {} (8a)
		(8) edge[-] node[] {} (8b)
		(8) edge[-] node[] {} (8c)
		(10) edge[-] node[] {} (10a)
		(10) edge[-] node[] {} (10b)
		(10) edge[-] node[] {} (10c)
		(6) edge[-] node[] {} (6a)
		(6) edge[-] node[] {} (6b)
		(6) edge[-] node[] {} (6c)
		(9) edge[-] node[] {} (9a)
		(9) edge[-] node[] {} (9b)
		(9) edge[-] node[] {} (9c)
		;
		
		\end{tikzpicture}
	\end{center}
	\caption{The head of the $\tt pick\_nodes()$ recursion tree. Each box is an instance of the function, where $X$ and $\Phi$ are the input parameters. $m$ is found by simply taking the first element of $X$. Lines indicate which function makes each recursive call. In this example $\ell_v(v_1) = \ell_v'(v'_1)$ and $\ell_v(v_2) = \ell'_v(v'_2) = \ell'_v(v'_3)$.}
	\label{fig:recursion}
\end{figure}

At each recursive call of \verb|pick_nodes()| the function parameters are the list of nodes $X \subseteq V$ and a set of ordered pairs of nodes $ \Phi  \subseteq V \times V'$, see Figure~\ref{fig:recursion}.  At the initial call of $\verb|pick_nodes()|$, $X=V$ and $\Phi = \emptyset$. The first element of the list $X$, $X$[0], is stored as $m$. The function then determines all possible nodes $n \in V'$ that both share a label with $m$ and do not appear in any element of $\Phi$. For each such $m$ a new recursive call $\verb|pick_nodes|(X', \Phi ')$ is made in which $X' = X \setminus  \{m\}$ and $\Phi' = \Phi \cup \{(m,n)\}$. A new recursive call may also be made with $X' = X\setminus \{m\}$ and  $\Phi' =\Phi$, if adding the edge $\{(m,n)\}$ to $\Phi$ would exceed the maximum node count. This is checked by the line: if $|\Phi_L| +|X_L| > \verb|final_num_nodes|[L]$,  where for any set $Z$, we write $Z_L$ to indicate the subset of $Z$ which contains all elements of $Z$ with label $L$. Note that  $\verb|final_num_nodes|[L]= N_L(G,G')$, as calculated in the proof of Theorem~\ref{thr:size}.
This recursion continues until an instance occurs with $X = \emptyset$ at which point the score of $\Phi$, given by Equation~\eqref{eq:score}, is calculated and returned.

During each instance of $\verb|pick_nodes()|$ the return values of all recursive calls made within the instance are compared and the largest is returned. In this way, only the value from the branch of the recursive tree that corresponds to the largest maximal solution will be returned all the way to the top of the tree. If a branch is not viable, that is $X \neq \emptyset$ and no more recursive calls can be made, then $0$ is returned. 

Since each node $v \in V$ corresponds to a different level of recursion, and since a recursive call is made for all possible pairings of $v$ to a node in $V'$, there will be a branch for every maximal cardinality solution. Therefore, since the graph size resulting from every branch is compared, the maximum common subgraph size $\DMCES(G,G')$ will be returned. 
\\\\
\textbf{Algorithm 1.}
\\\\
\noindent def \verb|DMCES|($G,G'$)\\
\indent global \verb|final_num_nodes| = \verb|find_final_num_nodes|($G,G'$)\\
\indent return \verb|pick_nodes| ($\text{nodes}(G),\emptyset))$\\\\\noindent def \verb|pick_nodes|($X,\Phi$)\\
\indent if $X == \emptyset$ \\
\indent \indent	return $\mathcal P(\Phi)$\\
\indent \verb|score| = 0\\
\indent	$m = X$[0]\\
\indent $L= \ell_v(m)$\\
\indent for $n \in V'$ such that $\ell'_v(n) == L$ and $n$ not in the image of $\Phi$\\
\indent \indent \verb|score|= $\max( \verb|score|,\verb|pick_nodes|(X \setminus \{m\}, \Phi\cup \{(m,n)\}))$\\
\indent	if $|\Phi_L| +|X_L| > \verb|final_num_nodes|[L]$\\
\indent \indent	\verb|score| = $\max(\verb|score|, \verb|pick_nodes|(X \setminus\{m\}, \Phi))$
\\

Suppose now that $G$ and $G'$ are transitive closures. Using Corollary~\ref{thr:direction} we can improve our algorithm to only consider solutions that are order-respecting. The predecessors of a node $v$ are all nodes $u$ such that there is a path from $u$ to $v$.
\\\\
\textbf{Algorithm 2.}
\\\\
\noindent def \verb|DMCES|($G,G'$)\\
\indent global \verb|final_num_nodes| = \verb|find_final_num_nodes|($G,G'$)\\
\indent return \verb|pick_nodes| (\verb|topologically_sort|$(\text{nodes}(G)), \emptyset))$\\\\\noindent def \verb|pick_nodes|($X,\Phi$)\\
\indent if $X == \emptyset$ \\
\indent \indent	return $\mathcal P(\Phi)$\\
\indent \verb|score| = 0\\
\indent	$m = X$[0]\\
\indent $L= \ell_v(m)$\\
\indent \verb|cross| = $\text{predecessors}(\phi(\{ n \in \text{predecessors}(k)\;|\; k \in \text{domain of $\Phi$}\} ))$\\
\indent for $n \in V'$ such that $\ell'_v(n) == L$,  $n$ not in image of $\Phi$, and $n$ not in \verb|cross|\\
\indent \indent	\verb|score|= $\max( \verb|score|,\verb|pick_nodes|(X \setminus \{m\}, \Phi\cup \{(m,n)\}))$\\
\indent	if $|\Phi_L| +|X_L| > \verb|final_num_nodes|[L]$\\
\indent \indent	\verb|score| = $\max(\verb|score|, \verb|pick_nodes|(X \setminus\{m\}, \Phi))$
\\\\
 Here, \verb|cross| is the set of nodes $n$ for which adding $(m,n)$ to $\Phi$ would cause $\Phi$ not to respect order on labels. 
 
We can further improve the algorithm in the case when  subgraphs induced by all nodes of a given label are directed path graphs.
That is, graphs induced by $U \subseteq V$, where $\ell_v(u) = \ell_v(v)$ for all $u,v \in U$, are isomorphic to graphs of form $(U, \bar{\mathcal D}, \ell_v|_U,\emptyset)$, where
$$U = \{v_1,v_2,...,v_n\}$$
$$\bar{\mathcal D} = \{(v_1,v_2), (v_2,v_3),...,(v_{n-1}, v_n)\}.$$
Digraphs of partial orders produced from time series using the technique in~\cite{tspo} always satisfy this property.
With this added assumption we can further improve the algorithm by  keeping track of what nodes in $V'$ will not be in the image of $\phi'$ for any extension $(U',\phi')$ of a feasible solution $(U,\phi)$. This is stored as the set $Y \subseteq V'$ in the following algorithm. If adding $(m,n)$ to $\Phi$ will cause $|V_L'| - |Y_L| <\verb|final_num_nodes|[L]$ then the branch is not continued as it can not lead to a maximal cardinality solution. 
\\\\
\textbf{Algorithm 3.}
\\\\
\noindent def \verb|DMCES|($G,G'$)\\
\indent global \verb|final_num_nodes| = \verb|find_final_num_nodes|($G,G'$)\\
\indent return \verb|pick_nodes| (\verb|topologically_sort|$(\text{nodes}(G)),\emptyset, \emptyset))$\\\\
\noindent def \verb|pick_nodes|($X,Y, \Phi$)\\
\indent if $X$ is empty\\
\indent \indent	return $\mathcal P(\Phi)$\\
\indent \verb|score| = 0\\
\indent	$m = X$[0]\\
\indent $L= \ell_v(m)$\\
\indent for $n \in V'$ such that $\ell'_v(n) == L$, $n \notin Y$\\
\indent \indent	$\hat Y$= \{$v \in \verb|predeccessors|(n)|\text{  $v$ not in range of $\Phi$ and $\ell'_v(v) == L$}$\}\\
\indent \indent \indent	if \verb|final_num_nodes|[L]$\leq |V'_L|-|Y_L \cup \hat Y _L|$ \\
\indent \indent \indent \indent	\verb|score|= $\max( \verb|score|,\verb|pick_nodes|(X \setminus \{m\},Y \cup \hat Y, \Phi\cup \{(m,n)\}))$\\
\indent	if $|\Phi_L| +|X_L| > \verb|final_num_nodes|[L]$\\
\indent \indent	\verb|score| = $\max(\verb|score|, \verb|pick_nodes|(X \setminus\{m\},Y, \Phi))$
\\

%

\end{document}